\newcommand*{\Draft}{}%
\newtheorem{thm}{Theorem}[section]
\newtheorem{lemma}[thm]{Lemma}
\lstdefinelanguage{kotlinish}{
  basicstyle=\small\ttfamily,
  keywords={val,if, then, handle, return, def, match, case, new, type,
    effect, pretype, class, interface, open, sealed, final, in, out, extends, infix, else, when, fun, is},
  keywordstyle=\bfseries,
  sensitive=false,
  comment=[l]{//},
  commentstyle=\color{gray},
  stringstyle=\color{gray}, %
  morestring=[b]',
  morestring=[b]",
  moredelim=**[is][\color{red}]{@}{@},
}
\lstdefinelanguage{scalaish}{
  basicstyle=\small\ttfamily,
  keywords={
    val, var, def, type, effect, pretype,
    if, then, else, in, handle, return, match, case, new,
    trait, class, extends, sealed, final, this,
    infix,
    instanceof,in,out},
  keywordstyle=\bfseries,
  sensitive=false,
  comment=[l]{//},
  commentstyle=\color{gray},
  stringstyle=\color{gray}, %
  morestring=[b]',
  morestring=[b]"
}
\renewcommand{\backrefalt}[4]{\textcolor{gray}{$\hookrightarrow$~{\ifcase #1 \or page\else pages\fi}~#2}} %
\newcommand{\lionel}[1]{\textcolor{red!70!black}{[Lionel:] #1}}
\newcommand{\alex}[1]{\textcolor{red!70!black}{[Alex:] #1}}
\newcommand{\radek}[1]{\textcolor{red!70!black}{[Radek:] #1}}
\newcommand{\yichen}[1]{\textcolor{red!70!black}{[Yichen:] #1}}
\newcommand{\todo}[1]{\textcolor{red}{[TODO]: #1}}
\newcommand{\lionel}[1]{}
\newcommand{\alex}[1]{}
\newcommand{\radek}[1]{}
\newcommand{\yichen}[1]{}
\newcommand{\todo}[1]{}
\newcommand\calculus{cDOT\xspace}
\newcommand\Xis{$\lambda_{2,G\mu}$}
\newcommand{\E}{\ensuremath{\mathcal{E}}}
\newcommand{\T}{\ensuremath{\mathcal{T}}}
\newcommand{\letp}[3]{\ensuremath{\textbf{let } #1 \, = \, #2 \textbf{ in } #3 }}
\newcommand\kLet{\operatorname{\mathbf{let}}}
\newcommand\kIn{\operatorname{\mathbf{in}}}
\newcommand\kCase{\operatorname{\mathbf{case}}}
\newcommand\kOf{\operatorname{\mathbf{of}}}
\newcommand\nLet{\kLet}
\newcommand\nIn{\kIn}
\newcommand\ident[1]{\mathrm{#1}}
\newcommand\constant[1]{\mathrm{#1}}
\newcommand\Int{\constant{Int}}
\newcommand\env{\ident{env}}
\newcommand\lib{\ident{lib}}
\newcommand\pmatch{\ident{pmatch}}
\newcommand\data{\ident{data}}
\begin{document}

\title[A case for DOT]{A {\tt case} for DOT: Theoretical Foundations for Objects with Pattern Matching and GADT-Style Reasoning}

\newcommand{\atsign}{@}

\lstMakeShortInline[columns=fixed,language=scalaish]@

\author{Aleksander Boruch-Gruszecki}
\email{aleksander.boruch-gruszecki\atsign epfl.ch}
\orcid{0000-0001-5769-6684}
\affiliation{%
  \institution{EPFL}
  \country{Switzerland}
}
\author{Radosław Waśko}
\authornote{Both authors contributed equally to this research.}
\email{wasko.radek\atsign gmail.com}
\orcid{0000-0001-5539-8245}
\affiliation{%
  \institution{University of Warsaw}
  \country{Poland}
}
\author{Yichen Xu}
\authornotemark[1]
\email{yichen.xu\atsign epfl.ch}
\orcid{0000-0003-2089-6767}
\affiliation{%
  \institution{Beijing University of Posts and Telecommunications}
  \country{China}
}
\author{Lionel Parreaux}
\authornote{Corresponding author.}
\email{parreaux\atsign ust.hk}
\orcid{0000-0002-8805-0728}
\affiliation{%
  \institution{HKUST}
  \country{Hong Kong, China}
}

\begin{abstract}
  Many programming languages in the OO tradition now support pattern matching in some form.
  Historical examples include Scala and Ceylon,
  with the more recent additions of Java, Kotlin, TypeScript, and Flow.
  But pattern matching on generic class hierarchies currently results in puzzling type errors in most of these languages.
  Yet this combination of features occurs naturally in many scenarios,
  such as when manipulating typed ASTs.
  To support it properly, compilers need to implement a form of
  \emph{subtyping reconstruction}:
  the ability to reconstruct subtyping information uncovered at runtime during pattern matching.
  We introduce cDOT, a new calculus in the family of Dependent Object Types (DOT)
  intended to serve as a formal foundation for subtyping reconstruction.
  Being descended from pDOT, itself a formal foundation for Scala,
  cDOT can be used to encode
  advanced object-oriented features such as generic inheritance, type constructor variance,
  F-bounded polymorphism,
  and first-class recursive modules.
  We demonstrate that subtyping reconstruction subsumes GADTs %
  by encoding $\lambda_{2,G\mu}$, a classical constraint-based GADT calculus,
  into cDOT.
\end{abstract}

\begin{CCSXML}
<ccs2012>
   <concept>
       <concept_id>10011007.10011006.10011008.10011024.10011036</concept_id>
       <concept_desc>Software and its engineering~Patterns</concept_desc>
       <concept_significance>500</concept_significance>
       </concept>
   <concept>
       <concept_id>10011007.10011006.10011008.10011024.10011029</concept_id>
       <concept_desc>Software and its engineering~Classes and objects</concept_desc>
       <concept_significance>500</concept_significance>
       </concept>
   <concept>
       <concept_id>10011007.10011006.10011008.10011009.10011011</concept_id>
       <concept_desc>Software and its engineering~Object oriented languages</concept_desc>
       <concept_significance>300</concept_significance>
       </concept>
   <concept>
       <concept_id>10003752.10010124.10010125.10010130</concept_id>
       <concept_desc>Theory of computation~Type structures</concept_desc>
       <concept_significance>500</concept_significance>
       </concept>
 </ccs2012>
\end{CCSXML}

\ccsdesc[500]{Software and its engineering~Patterns}
\ccsdesc[500]{Software and its engineering~Classes and objects}
\ccsdesc[300]{Software and its engineering~Object oriented languages}
\ccsdesc[500]{Theory of computation~Type structures}

\keywords{DOT, pattern matching, GADT, classes, type systems}  %

\maketitle

\section{Introduction}

In recent years, many programming languages in the object-oriented (OO) tradition
started incorporating support for functional programming idioms,
such as lambda expressions and pattern matching.
Notably, Java gained a simple form of pattern matching
  (limited to single `@instanceof@' type patterns) in JDK 16,
and work on extending this feature to more full-fledged pattern matching support is ongoing,
with an implementation available as a feature preview in JDK 17.
Version 7.0 of C$^\sharp$ introduced support for type patterns in @switch@ statements,
and version 8.0 introduced @switch@ expressions.
Kotlin similarly supports a restricted form of pattern matching
through its `\lstinline[language=kotlinish]!when! block' feature.
Other widely-used statically-typed languages such as Typescript and Flow
support a variant of pattern matching
called \emph{flow-sensitive} or \emph{occurrence} typing.
Finally, languages like Scala and Ceylon,
which merge the object-oriented and functional paradigms in a single language,
have supported pattern matching from the start.
\begin{figure}[h!]
\begin{lstlisting}[language=kotlinish]
  sealed class Expr<A>()
  class IntLit  (val value: Int) : Expr<Int>()
  class ... // other subclasses
\end{lstlisting}

\begin{lstlisting}[language=kotlinish]
  fun <T>eval(expr: Expr<T>): T =
    when (expr) {
      is IntLit -> expr.value
      is ...    -> // other cases
    }
\end{lstlisting}
  \caption{Example: evaluating simple conditional expressions in Kotlin}
  \label{fig:motiv-example}
\end{figure}

These new pattern matching approaches all rely on the idea
to statically \emph{refine} the type of a ``scrutinee'' (the object which is being pattern-matched)
based on its runtime shape.
Typically, they allow one to test whether a given value is an instance of some class @C@,
and in the branch where it is known to be,
use the value as such, without requiring an unsafe cast to @C@.

But combining generic classes with this form of refining type tests
leads to puzzling cases where code that should seemingly compile is rejected.
Consider the example in \autoref{fig:motiv-example}, where
we use Kotlin to define @Expr@, a class that represents simple arithmetic expressions,
and @eval@, a function which evaluates said expressions.
@eval@ takes an argument of type @Expr<T>@ and returns a @T@.
Intuitively, the single case shown in \autoref{fig:motiv-example} is correct.
Yet, the latest version of the Kotlin compiler to date (as of April 2022) rejects it,
saying that the type of @expr.value@ is @Int@ and not @T@.
But our intuition is not wrong:
we have checked that @expr@ is an instance of @IntLit@,
which means that it is simultaneously an @Expr<Int>@ and an @Expr<T>@,
which in turn means that @T@ is @Int@.\footnote{
  This reasoning is based on the fact that in languages like Kotlin, Java, Scala, C$^\sharp$, etc.
  a derived class may inherit from a given base class at most once, with specific type arguments.
  It would not work in C++,
  where a class could inherit from two instantiations of the same class, such as
  {\tt Expr<int>} and {\tt Expr<bool>}, which C++ considers to be two unrelated classes.
}
In other words, pattern matching on a value of type @Expr<T>@
lets us discover that @T@ is the same type as @Int@ in the @IntLit@ branch,
which should let us conclude that @eval@ is well-typed.

Generic class hierarchies \cite{Bracha1998}
have become pervasive in real-world code bases in Java \cite{parnin2013adoption}
and many other OO languages.
Therefore, it seems inevitable that
the adoption of new pattern-matching capabilities in these languages
will lead to problems like the one outlined above.
Yet, until now,
this important topic saw surprisingly little formal or informal investigation.
To the best of our knowledge, Scala is the only object-oriented language
whose implementation currently attempts to allow such pattern matches,
but formal investigations into how it does this have remained preliminary
\cite{parreaux2019,xu2021},
and previous work on the topic 
\cite{Burak06:variance-gen-constraints-csharp,Burak07:matching-objects-patterns,Kennedy05:gadts-oo}
was restricted to
simple type systems.
Furthermore, none of the previous approaches were mechanically verified.
This is a problematic state of affairs because
such reasoning
turns out to be
very subtle and hard to implement correctly \cite{giarrusso2013,parreaux2019}.

The specific example of \autoref{fig:motiv-example} will appear familiar to readers already aware of
\emph{generalized algebraic data types} (GADTs),
available in languages such as Haskell and OCaml,
which also allow discovering hidden types and type relationships by matching on values.
However, traditional GADTs operate in very different type systems than the ones of object-oriented languages.
Traditional GADT constructors do not introduce a type of their own, and are only attached to a single parent type,
forming a flat and closed hierarchy, contrasting with OOP's nested and possibly open class hierarchies.
Moreover, traditional GADT implementations only reason about 
type \emph{equations},\footnote{
  Interestingly, only reasoning about type \emph{equations} leads to some paradoxes in languages like OCaml which do support (an explicit form of) subtyping,
  which was previously investigated by \citet{scherer2013}.
}
whereas inheritance hierarchies,
particularly when they involve type constructor variance,
require us to reason about type \emph{inequations} (i.e., \emph{subtyping} relationships).

In this paper, we lay out formal foundations for {GADT-style reasoning}
in object-oriented programming languages.
We introduce \textbf{\emph{subtyping reconstruction}}, a technique that allows
pattern-matching functions like @eval@ to be recognized as well-typed.
We base our approach on a new and improved version of DOT (the Dependant Object Types calculus)
extended with a @case@ construct for pattern matching,
and thus named \emph{\calculus}.
The expressiveness of this foundational calculus allows us to establish our reasoning principles
in the presence of advanced object-oriented type system features,
such as generic inheritance, type constructor variance, F-bounded polymorphism,
and first-class recursive module types.
This is important because such features are often present
in practical programming languages like Java, C$^\sharp$, and Scala,
and they significantly complicate the task at hand.
We also rigorously explore the connection between our GADT-style reasoning
and traditional GADTs, formally demonstrating that our approach
encompasses traditional GADT reasoning,
since the latter can be encoded in the former.

We make the following specific contributions:

\begin{itemize}
  \item We introduce \calculus,
    a formal calculus to serve as a foundation for subtyping reconstruction
    (\autoref{sec:presentation-calculus} and \autoref{sec:formalization}).
    \calculus is based on \pdot \cite{rapoportPathDOTFormalizing2019},
    which is itself an evolution of the original DOT \cite{amin2016},
    which has been proposed as a foundation for statically-typed object-oriented programming
    \cite{martres2022}.
    We provide a mechanized proof of soundness for \calculus. %
  \item We propose a variant of the \Xis{} calculus formalizing traditional GADTs \cite{xi2003}
    (\autoref{sec:variant-of-xi}).
    Our variant removes some idiosyncrasies present in the original formulation of \Xis{},
    notably by making its operational semantics deterministic and
    by making progress hold.
    We also provide a mechanized soundness proof for this modified calculus.
  \item We develop an encoding of our variant of \Xis{} into \calculus{}
    (\autoref{sec:encoding})
    and show that this encoding preserves typing (\autoref{sec:typing-preservation}),
    which demonstrates that
    \calculus{} can express traditional GADT reasoning.
    This goes to show that subtyping reconstruction
    \emph{subsumes}
    (i.e., is at least as powerful as) traditional GADT reasoning.
\end{itemize}

\section{Static Typing of Class-Based Pattern Matching}
\label{sec:problem-solution}

In this section, we present the problem of objects with pattern-matching and GADT-style reasoning,
as well as our proposed solution to it, from a high-level and intuition-focused point of view.

\subsection{A Core View on Pattern Matching}

We define pattern-matching loosely as any conditional expression form
which allows learning about the runtime shape of a value
and consequently extracting information from this shape \emph{in a statically type-safe way}.
For instance, Java's original @instanceof@ construct does not constitute proper pattern matching
because making use of the information it returns (a boolean value) requires the use of an \emph{unsafe} cast,
as in `@if@ @(x@ @instanceof@ @IntLint)@ @return@ @((IntLint)x).value@'.
The problem is that the consistency between the runtime class instance test and
the cast to @IntLit@ is not statically checked, and programmers could (and often do) get it wrong,
for example if they had written
`@if@ @(some_condition@ @||@ @x@ @instanceof@ @IntLint)@ @return@ @((IntLint)x).value@' instead,
possibly resulting in runtime @ClassCastException@ crashes.
On the other hand,
Java's newer construct `@if@ @(x@ @instanceof@ @IntLint@ @y)@ @return@ @y.value@' is a proper pattern matching implementation,
statically known to be safe.

Many pattern matching implementations allow complex nested patterns,
but these can usually be treated as ``syntactic sugar'' over a core representation of pattern matching
that proceeds one level at a time. In an object-oriented language,
this core construct would be matching on the class of a given instance,\footnote{
  Indeed, %
  the Scala compiler essentially desugars its complex pattern matching syntax
  into a succession of instance tests.
}
similar to Java's new `@instanceof@' construct and to Kotlin's `\lstinline[language=kotlinish]!when!'.

\subsection{Uncovering Types and Subtype Relationships}
\label{sec:uncovering-tps-rels}

Consider the @Expr@/@eval@ example from the introduction again,
with more cases filled in:\footnote{
  In Kotlin, the star symbol {\tt *}
  is used as a placeholder when a type argument is unknown.
}

\begin{lstlisting}[language=kotlinish,mathescape]
sealed class Expr<A>()
class IntLit     (val value: Int)                    $\,$: Expr<Int>()
class MkPair<B,C>(val lhs: Expr<B>, val rhs: Expr<C>)$\,$: Expr<Pair<B,C>>()
class First <B,C>(val pair: Expr<Pair<B,C>>)         $\,$: Expr<B>()
class Second<B,C>(val pair: Expr<Pair<B,C>>)         $\,$: Expr<C>()

fun <T>eval(expr: Expr<T>): T =
  when (expr) {
    is IntLit      -> expr.value
    is MkPair<*,*> -> Pair(eval(expr.lhs), eval(expr.rhs))
    is First <T,*> -> eval(expr.pair).first
    is Second<*,T> -> eval(expr.pair).second
  }
\end{lstlisting}

\noindent
As we saw, the @IntLit@ branch of this example should compile
even though @expr.value@ returns an @Int@ where a value of type @T@ is expected:
when we enter the @IntLit@ branch, we discover that @expr@ is simultaneously an @Expr<T>@ and an @Expr<Int>@,
and since @Expr@ is invariant, this can only hold if @T@ and @Int@ are the same type.
While this particular case appears simple, in general this form of reasoning is non-trivial.

Consider the other cases of @eval@,
and notice the use of @*@ to fill in those type parameters we locally do not know.
For instance, having an @Expr<T>@ that is also a @First<B,C>@
tells us that @B@ is @T@, because @First<B,C>@ extends @Expr<B>@;
however, it tells us \emph{nothing} about @C@,
which could be @Int@, @Boolean@, or any other type.
Accordingly, @C@ needs to be treated as an unknown.
Just like for traditional GADTs,
matching on generic classes may uncover {unknown types},
which are thus said to be \emph{existentially quantified}.
To handle these types adequately,
the compiler must treat them as unspecified abstract types,
locally considered to be distinct from all other types in the program
(sometimes referred to as a ``skolem'').
Here again the Kotlin compiler fails us:
it disregards these types, widening the corresponding pattern class types
to @Any@ and @Nothing@.
Typing @eval@ also necessitates
relating types that are %
only \emph{partially} known,
containing references to locally-uncovered type arguments.
This is the case in the @MkPair@ branch: %
the result of the branch should have type @Pair<?B,@ @?C>@,
where @?B@ and @?C@ are the locally-uncovered unknown types associated with @MkPair<?B,@ @?C>@,
and a compiler implementing GADT-style reasoning should recognize that in this branch,
@T@ is the same as @Pair<?B,@ @?C>@, allowing the expression to type check.

The addition of bounded polymorphism \cite{Cardelli1985},
and F-bounded polymorphism \cite{canning1989f} in particular,
which are supported by many object-oriented programming languages (including Java)
and whereby every type parameter may be associated with possibly-cyclic bounds,\footnote{
  A standard example is a polymorphic function working over any type {\tt T}
  where {\tt T} is assumed to be a subtype of {\tt Comparable<T>},
  which means that {\tt T} has a cyclic bound
  -- i.e., {\tt T}'s bound refers to {\tt T} itself.
}
complicates the picture further.
Essentially, it mean that we should not treat locally-uncovered types
as complete unknowns,
but rather as \emph{bounded}, possibly-recursive abstract types.
This is already a strong justification to reach
for the expressive power
of the Dependent Object Types calculus, which features abstract types with recursive bounds.

\subsection{Type Constructor Variance}
\label{sec:ty-ctor-variance}

Languages like Kotlin, Scala, and C$^\sharp$ support a concept known as
\emph{(declaration-site) type constructor variance},
which allows more flexibility in the way %
generic types can be related with each other.
Declaring a class such as @Expr@ to be \emph{covariant},
which is done in Kotlin with the syntax @class@ @Expr<out@ @T>@,
means that if @Apple@ is a subtype of @Fruit@
(written @Apple@ $<:$ @Fruit@),
then @Expr<Apple>@ is itself considered a subtype of @Expr<Fruit>@
(i.e., @Expr<Apple>@ $<:$ @Expr<Fruit>@).
How does defining @Expr@ as covariant impact @eval@?

It turns out @eval@ should still type check even when @Expr@ is made {covariant},
but the reasoning becomes more subtle.
For example, in the @IntLit@ branch, @T@ is no longer the same as @Int@.
Indeed, the @Expr<Int>@ we got in parameter may have been \emph{widened}
to any supertype, such as @Expr<Any>@ (since @Int@ is a subtype of @Any@)
at the time it was passed to @eval@, so @T@ @=@ @Any@ is anther possibility we now have to account for.
So all we can assume here is that @Int@ is a \emph{subtype} of @T@.
Fortunately, this is enough for this branch to type check.
Moreover, similar \emph{subtype}-based reasoning can be applied to the other branches,
allowing @eval@ to type check as a whole.

As another example,
if we discover through pattern matching that some value @v@ is simultaneously a @Comparable<T>@ and a @Comparable<Int>@,
we cannot actually conclude anything about @T@ and @Int@.
Since @Comparable@ is \emph{contra}variant
(meaning @Comparable<S>@ $<:$ @Comparable<T>@ if @T@ $<:$ @S@),
@v@ could for example very well be a @Comparable<Any>@,
since we have both @T@ $<:$ @Any@ and @Int@ $<:$ @Any@, without @T@ and @Int@ being related in any way.
This case differs from the @IntLit@ case of the @eval@ function described above
because in the latter we \emph{know} what is the precise type @A@ that @IntLit@ inherits @Expr<A>@ with.
Assuming a covariant @Expr@,
we would be similarly stuck if all we had was the information that the scrutinee is both of type @Expr<T>@
and of type @Expr<Int>@ --- we would then be unable to related @T@ and @Int@.

In general, type constructor variance makes GADT-style reasoning
for pattern matching in object-oriented languages a lot more difficult.
Indeed, when a pattern class is covariant or contravariant,
we can learn \emph{strictly less} about the way its type parameters
relate with the type of the scrutinee
than in the invariant case.
To make matters worse, some languages like Scala (but unlike Kotlin)
support a feature called \emph{variant inheritance},
whereby a class may inherit from a variant base several times and at \emph{different} type arguments.
For example, the following Kotlin code is illegal, but its equivalent in Scala is legal
and is treated as @Derived2@ inheriting from both @Base<Any>@ \emph{and} @Base<String>@:

\begin{lstlisting}[language=kotlinish]
  interface Base<out T>
  open class Derived1: Base<Any>()
  class Derived2: Derived1(), Base<String>
\end{lstlisting}

\noindent
This has led to paradoxes related to pattern matching
\cite{giarrusso2013,parreaux2019}.

These considerations show that the study of
subtyping reconstruction diverges significantly from that of traditional GADT reasoning.

\subsection{A Guiding Reasoning Principle}
\label{sec:guiding-reasoning-principle}

The previous subsection seems to
point us in an important direction:
our reasoning should somehow be based on the existence of a ``most precise'' type argument
used when inheriting from the scrutinee's class.
When we create a new instance of a class,
we need to pick specific type arguments for every type parameter of every inherited class.
But the type arguments we later ascribe to this instance may become less precise due to variance,
and it is crucial to take that into account.

\label{sec:sub-evidence-example}
Let us inspect another example.
We define a @SUB<S,@ @T>@ class that is
contravariant in @S@ and covariant in @T@ and
that works as runtime {\it evidence} that @S@ is a subtype of @T@ \cite{yallop2019}:
\begin{lstlisting}[language=kotlinish]
  sealed class SUB<in S, out T>()
  class Refl<U>() : SUB<U, U>()
\end{lstlisting}

\noindent
To illustrate its use, we define a function to @convert@ between two seemingly unrelated types:

\begin{lstlisting}[language=kotlinish]
  fun <T>convert(t: T, ev: SUB<T, Int>): Int =
    when (ev) { is Refl<*> -> t }
\end{lstlisting}

\noindent
When we check that @ev@ is an instance the @Refl@ class,
we discover is that there was some type @U@
which was used as a type argument to @Refl@ when creating @ev@.
This type is both a supertype of @T@ and a subtype of @Int@,
so it can only exist if @T@ $<:$ @Int@
(otherwise the bounds of @U@ would be inconsistent),
which is what we need to type check @convert@.

Let us inspect one final example:

\begin{lstlisting}[language=kotlinish]
  fun <T>convert2(t: T, ev: SUB<Expr<T>, Expr<Int>>): Int =
    when (ev) { is Refl<*> -> t }
\end{lstlisting}

\noindent
By analogous reasoning, %
we discover that @Expr<T>@ $<:$ @Expr<Int>@.
However, by itself this is not enough to type check @convert2@.
Since @Expr@ is covariant,
we need to {\it infer}
that @T@ must necessarily be a subtype of @Int@
for this relationship to hold,
which exemplifies that
subtyping reconstruction
needs to ``invert'' subtyping relationships,
inferring things about the arguments of related types.

Given a value @x@ of some covariant class type @Expr<T>@ (declared as @class@ @Expr<out@ @A>@),
let us denote by @x.A@ the precise argument type used to construct this instance,
regardless of any widening that may have happened afterwards.
So we must have that @x.A@ $<:$ @T@.
The fact that this notation looks like a reference to a \emph{member} @A@ in @x@ is not accidental:
our main idea is to represent these ``most precise types''
as \emph{type members}. Type members live inside class instances,
which thus in a sense behave like first-class modules.
This notation allows us to conveniently make explicit how a function like @convert@ above could type check,
by rephrasing it as follows:

\begin{lstlisting}[language=kotlinish,mathescape=true]
  fun <T>convert(t: T, ev: SUB<T, Int>): Int =
    when (ev) { is Refl<*> -> t @: ev.U@ }
\end{lstlisting}

\noindent
In the above, we added a type annotation to @t@ which shows that it has type @ev.U@,
since @T@ $<:$ @ev.U@ by contravariance,
which allows the term to type check since @ev.U@ $<:$ @Int@ by covariance.

Reasoning about most-precise type arguments in terms of type members
allows us to clarify and resolve subtleties and paradoxes related to pattern matching and inheritance.
Going back to the @Derived2@ example from \autoref{sec:ty-ctor-variance},
we can now explain how Kotlin and Scala have two different notion of inheritance:
when seeing an inheritance clause like `@:@ @Base<Int>@', Kotlin implicitly assumes @this.T@ @=@ @Int@,
whereas Scala
only assumes @this.T@ $<:$ @Int@ because @T@ is covariant in @Base@.\footnote{
  In fact, in Scala makes an exception for final and {\tt case} classes,
  which use \emph{invariant} inheritance.
  This is because {\tt case} classes are usually intended for pattern matching,
  so it is better to recover more information from them in such use cases.
}
In fact, in Scala, it is even possible to inherit (indirectly) from both @Base<Int>@ and @Base<String>@,
which will result in @this.T@ $<:$ @Int@ @&@ @String@ (where @&@ denotes an \emph{intersection type}).
Both choices are sound and have pros and cons.
They also lead to different
subtyping reconstruction capabilities.

\subsection{Type Parameters as Type Members, Classes as Runtime Tags}

As we shall see in the next section,
we go even further and completely do away with type parameters.
Indeed, it turns out that type parameters themselves,
along with type constructor variance, can be encoded using type members,
intersections, and structural types, which are all part of our DOT-based calculus.
Doing so allows us to work on a unified representation of all these concepts,
greatly simplifying the specification and soundness proof of our
subtyping reconstruction principles.

Similarly, classes themselves are not a ``core'' concept of the various DOT calculi,
and are normally encoded using DOT's expressive recursive object type system.
Indeed, it is possible to encode nominal classes by using bounded type members
and by defining libraries as abstract modules that hide their implementations.
This encoding of the \emph{static typing} aspect of classes is not new \cite{martres2022}.
What is new in this paper is %
that we associate runtime ``tags'' to such encoded classes,
which can be matched against through a @case@ construct to recover some static type information
about the corresponding instance,
as we shall explained in detail in the next section.

Many of the ideas presented in this paper are %
already well-known.
In fact, they %
are essentially how
the problem of type checking pattern matching
was reasoned about %
while implementing
the Scala 3 compiler,
after a long history of unsound and limited GADT support in Scala 2.\footnote{
  The Scala versions of all examples we have shown so far type check correctly in Scala 3.
}
But this paper is the first to rigorously formalize the system and derive (mechanized) proofs of its soundness.

\subsection{Real-World Justification}

Our experience with Scala is that GADT-style reasoning is pervasive in code bases that make advanced
use of the type system to enforce compile-time guarantee about their programs.
Classes like @Expr@ naturally occur in %
programs that manipulate typed abstract syntax trees,
for instance in database libraries, which need to manipulate query representations.
Indeed, one of our motivating examples has been the design of a query compiler
called dbStage\footnote{Early prototypes of dbStage can be found at \url{https://github.com/epfldata/dbstage}.}
using the Squid type-safe metaprogramming framework \cite{parreauxThesis2020}
that relies very heavily on GADT-style reasoning.
As another example,
\citet{eisenberg2020} recently described Stitch,
an interpreter and type checker similarly making heavy use of GADTs.
Being able to discover type relationships by pattern matching also allows design patterns that were not possible or convenient before,
for example using subtyping evidence \cite{yallop2019} as described in \autoref{sec:sub-evidence-example};
the Scala standard library itself makes heavy use of similar subtyping evidence types.\cprotect\footnote{%
  For example,
  the @unzip@ method on @List<T>@
  requires an implicit parameter evidence that @T@ $<:$ @(A,@ @B)@
  for some @A@, @B@.
}

\section{Informal Introduction to \calculus}
\label{sec:presentation-calculus}
In this section, we informally present \calculus{} and discuss how it allows pattern matching
to uncover hidden types and subtyping relationships.

\lstset{
  language=scalaish,
  mathescape=true,
}

\calculus{} belongs to the DOT family of systems, which were originally intended as formal foundations for Scala.
The base DOT calculus was designed to explain surface type features of Scala using as few core features as possible \cite{amin2016}.
Importantly for our purposes, there are no classes nor type parameters in DOT,
as both can be encoded through type members \cite{amin2016,rapoportPathDOTFormalizing2019,martres2022}.
This incidentally demonstrates that a language does not need to feature nominality
in order to support subtyping reconstruction.

We now gradually introduce the concepts of \calculus{} by using Scala as the surface syntax.
Our first goal is to encode @IntLit@ from \autoref{fig:motiv-example}.
In \calculus, objects are created via object literals, which are typed with structural types,
as in
@val@ @e@ @=@ @new@ @{@ @val@ @x@ @=@ @0@ @}@ @:@ @{@ @val@ @x:@ @Int@ @}@.

Types are compared based on their \emph{structure},
so to encode @IntLit@, we need to first recover nominality.
To do this, we rely on the fact that
in \calculus, as well as in Scala, an object instance can contain associated types --- called \emph{type members}. For instance:
\begin{lstlisting}
type Animal = { type Food; def eat(food: this.Food): Unit }
def feed(ani: Animal, food: ani.Food) = ani.eat(food)
\end{lstlisting}
In the above example, @Food@ is a type member of the object type @Animal@.
Given a value of type @Animal@, for instance @ani@, we can reference the type member that lives in that instance using a \emph{path-dependent type}, in this case @ani.Food@.
This is analogous to defining @Animal@ as a generic class and adding a type parameter to @feed@.
The distinction between the above type and one like @Animal<Food>@ is that a type member is existential by default.
For instance, we can refer to a @List<Animal>@, a list in which every animal may have its own distinct type of food.

Like type parameters, type members can have (upper and lower) bounds, and in particular they can also be equal to some type (which simply means that their lower and upper bounds are the same).
Much like instantiating a generic class %
requires specific type arguments for all type parameters of the class (they are often inferred,
but are there nonetheless),
when we create an object in \calculus{}, all of its type members must be %
defined to be equal to some existing types.
This idea allows \calculus{} to capture our intuition from before about the ``most precise'' type arguments
of object instances.
To illustrate,
we can create %
a value of type @Animal@ as follows:

\begin{lstlisting}
val goat: Animal = new { type Food = Any; def eat(food: Any) = () }
\end{lstlisting}

\noindent
Given the above @val@, @goat.Food@ refers to the precise food type that this specific animal eats,
even though it is an abstract type and we know nothing about it due to the @Animal@ type ascription.

Using bounded type members %
and structural types, we can recover nominal classes.
To encode @IntLit@, we create a ``package'' object @g@ with a type member @IntLit@ that defines the members of the class.
We also define a method @newIntLit@ which allows us to create instances of said type member
(i.e., a class constructor).
We annotate the package object
with a type that only leaves an upper bound on @g.IntLit@.
Taking everything together, our @g@ looks as follows:
\begin{lstlisting}
val g: {
  type IntLit <: { val value: Int }
  def newIntLit(i: Int): IntLit
} = new {
  type IntLit = { val value: Int }
  def newIntLit(i: Int): IntLit = new { val value = i }
}
\end{lstlisting}
Outside of @g@, we only have an upper bound on @g.IntLit@.
As a result, with the above definition the only way to create an object of type @g.IntLit@ is by calling @g.newIntLit@.
In particular, a value that has the same \emph{shape} as @IntLit@ but does not statically have @g.IntLit@ as part of its type
will \emph{not} be usable as a @g.IntLit@.
In this way, by hiding the exact type @g.IntLit@ stands for, we have encoded nominality.

We now take a separate look at encoding generic classes based on the example of @Expr<A>@.
The class itself is encoded as a type member: @type@ @Expr@ $<:$ @{@ @type@ @A@ @}@;
notice that its type parameter turns into a type member.
An applied type like @Expr<Int>@ is translated to an {\it intersection type}
@Expr@ @&@ @{@ @type@ @A@ @=@ @Int@ @}@.
This translation is variance-sensitive;
if
@Expr@ were covariant, its type argument would be translated using an \emph{upper bound} instead, as
@Expr@ @&@ @{@ @type@ @A@ $<:$ @Int@~@}@;
likewise, contravariant type arguments correspond to \emph{lower bounds} in applied types.

Let us now come back to our package object @g@,
which we extend with @Expr<A>@,
with which @IntExpr@ is now related through an intersection type:
\begin{lstlisting}
val g: {
  type Expr <: { type A }
  type IntLit <: Expr & { type A = Int; val value: Int }
  def newIntLit(i: Int): IntLit
} = new {
  type Expr = { type A }
  type IntLit = Expr & { type A = Int; val value: Int }
  def newIntLit(i: Int): IntLit = new { type A = Int; val value = i }
}
\end{lstlisting}

\noindent
Again, @g@'s annotation leaves only upper bounds for @g.Expr@ and @g.IntLit@, to encode nominality.
In \calculus, structural types with multiple members are represented
via single-member structural type intersections.
For instance,
@{@ @type@ @A@ @=@ @Int;@ @val@ @value:@ @Int@ @}@
is represented as
@{@ @type@ @A@ @=@ @Int@ @}@ @&@ @{@ @val@ @value:@ @Int@ @}@.
We use the former as a \emph{shorthand} for the latter.

Function type parameters are also encoded using type members.
A generic function in \calculus{} takes an additional argument %
with one type member per original type parameter.
To illustrate, we can now take a look at an encoded version of @eval@ from \autoref{fig:motiv-example}:
\begin{lstlisting}
def eval(tp: { type T }, e: g.Expr & { type A = tp.T }): tp.T =
  e match { case e1: g.IntLit => e1.value }
\end{lstlisting}
\noindent
As expected, this example is well-typed in \calculus; let us inspect it in detail.
The type of @e1@ is @e.type@ @&@ @g.IntLit@, with @e.type@ being the {\it singleton type} of @e@.
Singleton types are, conceptually,
only inhabited by a single object instance (here @e@).
Subtyping interacts with intersection types as expected: we can type @e1@ both as @e.type@ and as @g.IntLit@.
Since we have @e1@ @:@ @e.type@, we know that @e1@ is the same object as @e@, i.e. it is an alias.
However, because of the other part of the intersection type, the type of @e1@ is more precise than
@g.Expr@ @&@ @{@ @type@ @A@ @=@ @tp.T@ @}@, the type of @e@.
In this sense, @e1@ makes the type of @e@ more precise,
which is what we need to type the example.

Since we have both @e1@ @:@ @g.IntLit@ and @e1@ @:@ @e.type@,
we respectively know that @Int@ $<:$ @e1.A@ and @e1.A@ $<:$ @tp.T@.
We conclude that @Int@ @<:@ @tp.T@ by transitivity of subtyping.
So we can return @e1.value@ (whose type is @Int@) from @eval@ (whose result type is @tp.T@).

One problem with @eval@ so far is that the @match@ expression cannot succeed at runtime,
since our encoding of @IntLit@ so far lacks a \emph{runtime tag}.
Implementations of OO languages typically tag every object value with its \emph{runtime class},
which enables dynamic type checks and casts.
To account for this, we actually encode each class @C@ as
\emph{both} a type member (as seen before)
\emph{and} a tag value:
\begin{lstlisting}
  type C <: { val tag: g.C_tag.type; $\ldots$ }; val C_tag = g.freshTag()
\end{lstlisting}
\noindent
where each call to @g.freshTag()@ creates a unique tag value.\footnote{
  This concept of unique tag is akin to TypeScript's {\tt unique} {\tt symbol} feature
  (\url{https://www.typescriptlang.org/docs/handbook/release-notes/typescript-2-7.html}),
  although the latter only works with static paths,
  while we allow path-dependent tags.}
The representation of tag values at runtime does not matter
as long as we can compare them. E.g., we could use new empty objects,
compared by object identity.
Accordingly, to match on an object we now inspect its tag:
\begin{lstlisting}
  x match { case e1: g.C => $\ldots$ }
  // is encoded as:
  x.tag match { case g.C_tag => $\ldots$ }
\end{lstlisting}
\noindent
By construction of the encoding, we know that only objects created through the constructor of @g.C@
will be associated with @g.C_tag@ at runtime,
which lets us type @x@ in the branch body as @x.type@ @&@ @g.C@.
(Note that this reasoning is not supported by Scala itself.)
Finally, to encode class hierarchies more than one level deep, we would need to add multiple tags to each object.

This concludes our informal explanation of \calculus's main ideas.
As we shall see, in the actual formal calculus we strive for simplicity:
objects may only have \emph{one} tag
and \emph{all} type members are associated with a runtime tag,
even those not meant to represent classes.
Extending the system so that objects can have multiple tags is straightforward and so is left out of \calculus.
Associating all type members with a tag is not a problem because
when translating a program to \calculus{}
all tags used in objects correspond to actual classes,
so non-class type member tags could be erased.

\section{Formal Presentation of \calculus{}}
\label{sec:formalization}

\FloatBarrier

\begin{wide-rules}\noindent
	{\footnotesize\begin{multicols}{2}\noindent
			\begin{flalign}
			x,\,y,\,z            \tag*{\textbf{Variable}}\\
			a,\,b,\,c            \tag*{\textbf{Term member}}\\
			A,\,B,\,C            \tag*{\textbf{Type member}}\\
			p,\,q,\,r\coloneqq\ &           \tag*{\textbf{Path}}\\
			&x                                      \tag*{variable}\\
			&p.a                              \tag*{field selection}\\
			t,\,u\coloneqq\ &                     \tag*{\textbf{Term}}\\
			&\assignTrm                             \tag*{stable term}\\
			&p\,q                             \tag*{application}\\
			&\tLet x t u                           \tag*{let binding}\\
      &\new{\tCase p {q.A} y {t_1} {t_2}}          \tag*{case matching}\\
      v\coloneqq\ &                           \tag*{\textbf{Value}}\\
      &\new{\tNew p A x T d}                            \tag*{tagged object}\\
      &\tLambda x T t                         \tag*{lambda}\\[1em]
			\assignTrm \coloneqq\  &                  \tag*{\textbf{Stable Term}}\\
      &p \;|\; v                                 \tag*{path or value}\\
			d\coloneqq\ &                            \tag*{\textbf{Definition}}\\
			&\set{a=\assignTrm}              \tag*{field definition}\\
			&\set{A=T}                              \tag*{type definition}\\
			&\tAnd d d                           \tag*{aggregate definition}\\
			S,\,T,\,U,\,V\coloneqq &                     \tag*{\textbf{Type}}\\
			&\top                                   \tag*{top type}\\
			&\bot                                   \tag*{bottom type}\\
			&\tFldDec a T                           \tag*{field declaration}\\
			&\tTypeDec A S T                        \tag*{type declaration}\\
			&\tAnd S T                              \tag*{intersection}\\
			&\tRec x T                              \tag*{recursive type}\\
			&\tForall x S T                         \tag*{dependent function}\\
			&p.A                               \tag*{type projection}\\
			&\single p                        \tag*{singleton type}
			\end{flalign}
	\end{multicols}}
	\caption{Abstract syntax of \calculus{}}
	\label{fig:syntax}
\end{wide-rules}

We now present the \calculus calculus.
\calculus extends pDOT \cite{rapoportPathDOTFormalizing2019}, which itself is a generalization of DOT \cite{amin2016} that allows arbitrary path lengths in path-dependent types:
both $\single{x}$ and $\single{x.y}$ are permitted. Importantly for us, it also formalizes singleton types.

All \calculus{} terms are in a variant of A-normal form, or ANF \cite{sabry1993}.
For instance, to apply one term to another, we must first bind the expressions to variables:
$$ \nLet x = t \nIn \nLet y = s \nIn x\,y  $$

\noindent
This does not lead to any loss of expressivity, since a simple syntactic translation can change regular terms to ANF.
Essentially, ANF gives us a name for every value, which is important when we allow function results to depend on their arguments with path-dependent types.
Other DOT systems did not require ANF \cite{rompf2016}, but at the cost of having two rules for typing application: one for value arguments and another for variable arguments.

Since function results can depend on their arguments, in \calculus{} they are typed with {\it path-dependent function types}.
The identity function is typed as follows:
$$ \vdash \lambda(x : \top)\,x : \forall(x : \top)\,\single{x} $$

\noindent
If the result of a function type $\forall(x:S)\,T$ does not depend on its argument (i.e. if we have $x \notin \mathrm{fv}\, T$),
we spell it as $S \rightarrow T$ as a shorthand.

So far, we have been simplifying one aspect of the syntax.
Whereas one would typically use variables to reference let-bound values, in \calculus{} we can use {\it paths}.
A path $x.a_1.\cdots.a_n$ is a chain of field selections $.a_i$ starting from a variable $x$.
Like in pDOT, a \calculus{} path also represents the \emph{identity} of an object, an important notion for a path-dependent type system.
In other DOT systems, objects were bound in a store and referenced by simple variables,
whereas
in pDOT and \calculus{} objects may also be nested inside other objects,
justifying the use of paths for object identity.
The typing judgment $\typDft p {\single{q}}$ means that $p$ and $q$ have the same identity.
Like in pDOT, a path is never directly substituted for a value in \calculus{}, since doing so would strip an object of its identity.

In \calculus{}, objects are \emph{tagged} in order to allow pattern matching. An object literal in \calculus{} looks as follows, where $\nu$ is a binder that stands for the usual @new@ in Scala:
$$
{\tNew p A x T d}
$$
Let us inspect it from the right.
$d$ is the body of the object;
it can contain fields and type members,
whose names are case-sensitive: we use $A,B,C$ for type members and $a,b,c$ for term members.
Object definitions in \calculus{} can be circular: their fields can freely reference other fields, regardless of the definition order.
In order to ensure this does not lead to initialization problems, all fields must be initialized with \emph{stable} terms $s$ which are either values or paths.
Methods are represented with fields bound to lambda abstractions.
The ``tag'' $p.A$ is the part of the object value that enables pattern matching. %
In \calculus{}, every object is tagged with a \emph{type member} --- recall that type members are how we formally represent classes.
The \emph{self reference} $x$ (usually known as @this@) is explicitly bound in the syntax and annotated with a type $T$,
which can be used to specify the type that this object instance should have.

As an illustration, the body of the @newIntLit@ function from before would look as follows:
$$
\lambda(i : \Int)\,
  \nu(x : \set{ A = \Int } \wedge \set{ \ident {value} : \Int }) [g.\ident{IntLit}]
  \set{ A = \Int } \wedge \set{ \ident{value} = i }
$$

Finally, the case form $\ttCase p {q.A} y {t_1} {t_2}$ allows pattern-matching on arbitrary values.
There are two branches in this form: if $p$ resolves to an object whose tag conforms to $q.A$, we bind $y$ to $p$ and enter $t_1$, otherwise we enter $t_2$.

We will now inspect the typing and subtyping rules of \calculus.
Most of the rules are the same as in pDOT; we have highlighted all the changes in grey.

\subsection{Typing}

\begin{wide-rules}\footnotesize
\begin{flalign*}
  \G&\Coloneqq \varnothing\ |\ \G,\,x\colon T
                       \tag*{Type environment}
\end{flalign*}

\setlength{\fboxsep}{0.5em}
\textbf{Term typing \quad \fbox{$\typDft t T$}}
\setlength{\fboxsep}{1pt}
\begin{multicols}{3}

\infrule[Var]
  {\G(x)=T}
  {\typDft x T}

\infrule[All-I]
{\typ {\extendG x T} t U
	\andalso
	x\notin\fv T}
{\typDft{\tLambda x T t}{\tForall x T U}}

\infrule[All-E]
{\typDft {p} {\tForall z S T}
	\andalso
	\typDft {q} S}
{\typDft {{p\, q}} {\tSubst z {q} T}}
  
\infrule[\{\}-I]
{\dtyp x {\extendG x T} d T\\
\new{\typ {\extendG x T} {x} {q.A}}}
{\typDft {\tNew q A x T d} {\tRec x T}}

\infrule[Fld-E]
{\typDft {p} {\tFldDec a T}}
{\typDft {p.a} T}
  
\infrule[Fld-I]
{\typDft {p.a} T}
{\typDft {p} {\tFldDec a T}}

\infrule[Let]
{\typDft t T
	\\
	\typ {\extendG x T} u U
	\andalso
	x\notin\fv U}
{\typDft {\ttLet x t u} U}

\newruletrue
\infrule[Case]
{\typeable {p} \andalso \typeable q \\
 \typ {\extendG y {\tAnd{\single{p}}{q.A}}} {t_1} U \\
 \typDft {t_2} U \andalso
 y\notin \fv U}
{\typDft {\ttCase p {q.A} y {t_1} {t_2}} U}
\newrulefalse

\infrule[Sngl-Trans]
	{\typDft p {\single q} \andalso \typDft q T}
	{\typDft p T}

\newruletrue
\infrule[Sngl-Self]
	{\typeable p}
	{\typDft p {\single p}}
\newrulefalse

\infrule[Sngl-E]
  {\typDft p {\single q} \andalso
   \typeable {q.a}}
  {\typDft {p.a} {\single {q.a}}}

\infrule[Rec-I]
  {\typDft {p} {\tSubst x {p} T}}
  {\typDft {p} {\tRec x T}}

\infrule[Rec-E]
  {\typDft {p} {\tRec x T}}
  {\typDft {p} {\tSubst x {p} T}}

\infrule[\&-I]
  {\typDft {p} T
    \andalso
    \typDft {p} U}
  {\typDft {p} {\tAnd T U}}

\infrule[Sub]
  {\typDft t T
    \andalso
    \subDft T U}
  {\typDft t U}

\end{multicols}

\setlength{\fboxsep}{0.5em}
\textbf{Definition typing \quad \fbox{$\dtypDft d T$}}
\setlength{\fboxsep}{1pt}
\begin{multicols}{2}

 \infax[Def-Typ]
 {\dtypDft {\set{A=T}} {\tTypeDec A T T}}
 
\infrule[Def-All]
  {\typDft {\tLambda x T t} {\tForall x U V}}
  {\dtypDft {\set{a=\tLambda x T t}} {\tFldDec a {\tForall x U V}}}

\infrule[Def-New]
{\dtyp {p.a} \G {d\theta} {T\theta} \andalso \new{\typ \G {p.a} {q.A\theta}} \\
  \theta = \!{\tSubst y {p.a} {}} \andalso \tightbounds T
}{
  \dtypDft {\set{a = \nu(y\colon T)\new{[q.A]}d}} {\tFldDec a {\tRec y T}}}

\infrule[Def-Path]
  {\typeable q}
  {\dtypDft {\set{a = q}} {\tFldDec a {\single q}}}
  
\infrule[AndDef-I]
  {\dtypDft {d_1} {T_1}
    \andalso
    \dtypDft {d_1} {T_2}
    \\
    \dom{d_1} \cap \dom{d_2} = \varnothing
   }{\dtypDft {\tAnd {d_1} {d_2}} {\tAnd {T_1} {T_2}}}
\end{multicols}
\begin{minipage}{0.35\linewidth}
\noindent
\setlength{\fboxsep}{0.5em}
\textbf{Typeable paths \quad \fbox{$\typeable p$}}
\setlength{\fboxsep}{1pt}
\vspace{1.2em}
\infrule[Wf]
    {\typDft p T}
    {\typeable p}
\vspace{1.6em}
\noindent
\setlength{\fboxsep}{0.5em}
\end{minipage}%
\begin{minipage}{0.1\linewidth}
\
\end{minipage}%
\begin{minipage}{0.55\linewidth}
\setlength{\fboxsep}{0.5em}
\textbf{Tight bounds \quad \fbox{$\tightboundseq T$}}
\setlength{\fboxsep}{1pt}
\[
{\tightboundseq T} =
\begin{cases}
U = V							&\text{if } T = \tTypeDec{A}{U}{V}\\
\tightboundseq U         			&\text{if } T = \tRec{x}{U} \text{ or } \tFldDec{a}{U}\\
\tightboundseq U \text{ and }
    \tightboundseq V           	&\text{if } T = U \wedge V\\
\text{true}           			&\text{otherwise}
\end{cases}
\]
\end{minipage}
\caption{\calculus{} typing rules}
\label{fig:typing-rules}
\end{wide-rules}

We first go through the typing and subtyping rules of \calculus{} one by one.
\autoref{fig:typing-rules} presents the typing rules of \calculus{}.
We start with term typing rules, followed by definition typing, and finally explain the subtyping rules.

\paragraph{Term typing rules}
The \rn{Var}, \rn{Let} and \rn{Sub} rules for typing variables, \code{let} forms and subsumption are as expected.
Term abstractions are typed with dependent function types by \rn{All-I}.
Rule \rn{All-E} assigns term applications the function result type, with the term argument replacing the function parameter.
For example, if $\typDft{p}{\tForall{x}{\top}{x.type}}$ and $q$ is a typeable path, then $p\, q$ is typed as $\single{q}$.
The \rn{\{\}-I} rule types the tagged objects.
The body is typed with definition typing, explained later.
The object type also must conform to its self-reference type and its tag.

Rules \rn{Sngl-Trans}, \rn{Sngl-Self} and \rn{Sngl-E} deal with singleton types.
If $p$ is typeable, it can be typed as $\single{p}$ with \rn{Sngl-Self}.
This rule not only lets \calculus{} model Scala's type system more closely than pDOT (which surprisingly lacks it),
but also turns out to seriously impact both soundness (\autoref{sec:metatheory}) and subtyping ({\autoref{sec:subtyping}}).
If $\typDft{p}{\single{q}}$, we say that $p$ \emph{aliases} $q$.
In such a case, we can always assign $p$ the same type as $q$ using \rn{Sngl-Trans}.
Aliasing is propagated through field selection with \rn{Sngl-E}:
$p.a$ aliases $q.a$ if $q$ has a field $a$ and $p$ aliases $q$.

Field selections are typed with \rn{Fld-E}.
The \rn{Fld-I} rule introduces field types when the field selection is typeable.
This rule increases the expressiveness of the type system~\cite{rapoportPathDOTFormalizing2019}.
Given $\typDft{p}{\tRec{x}{\tFldDec{a}{\single q}}}$ and $\typDft{q}{U}$, we cannot derive that $\subDft{\single{q}}{U}$.
However, with \rn{Fld-I} we still can type $p$ as $\tFldDec{a}{U}$.

The \rn{Rec-I} and \rn{Rec-E} rules respectively introduce and eliminate recursive types.
Recursive types do not participate in \calculus{} subtyping;
instead, we unwrap recursive types with \rn{Rec-E} and re-introduce them with \rn{Rec-I}.
For example, the following derivation tree shows how we can derive $\typDft{p}{\tRec{x}{\tFldDec{a}{T}}}$
when $\typDft{p}{\tRec{x}{\tFldDec{a}{S}}}$ and $\subDft{S}{T}$ (where $x$ is absent from $S$ and $T$):
\begin{mathpar}
	\inferrule*[right=\rn{Rec-I}]
  {
    \inferrule*[right=\rn{Sub}]{
      \inferrule*[right=\rn{Rec-E}]{
        \typDft{p}{\tRec{x}{\tFldDec{a}{S}}}
      }{\typDft{p}{\tFldDec{a}{S}}} \\
      \inferrule*[right=\rn{Fld-<:-Fld}]{
        \subDft{S}{T}
      }{
        \subDft{\tFldDec{a}{S}}{\tFldDec{a}{T}}
      }
    }{\typDft{p}{\tFldDec{a}{T}}}}
  {\typDft{p}{\tRec{x}{\tFldDec{a}{T}}}}
\end{mathpar}
Although leaving out subtyping relationships between recursive types results in a small loss of expressiveness,
it significantly simplifies the metatheory \cite{amin2016}.

The \rn{\&-I} rule is the introduction rule for intersection types.
One may think that this rule is redundant because we have the \rn{Sub} and \rn{<:-And} rules.
However, this rule does add expressiveness when interacting with recursive types \cite{amin2016}.

The \rn{Case} rule types case forms ${\ttCase p {q.A} y {t_1} {t_2}}$.
Typing the else branch $t_2$ is straightforward.
When typing $t_1$, we extend the environment with $y : {\tAnd{\single{p}}{q.A}}$,
which essentially introduces a witness that the path $p$ can be typed as $q.A$.
We see in \autoref{sec:subtyping} that together with \emph{subtyping inversion},
such a witness allows reconstruction of subtype relationships.

\paragraph{Definition typing}
The definition typing rules in \autoref{fig:typing-rules} are used to type object bodies.
In \pdot{} and \calculus{}, only function values, object values and paths can be used to initialize a field.
The \rn{Def-All}, \rn{Def-New} and \rn{Def-Path} rules type object fields as their precise type.
The $\tightbounds T$ condition in \rn{Def-New} ensures all type members in $T$ have equal bounds.
In addition to type checking the definition, the \rn{Def-New} rule also checks the conformance between the object type and the type tag.
The \rn{Def-Typ} rule types the type member definition.
The \rn{AndDef-I} rule type multiple member definitions as an intersection type.
The prefix $p$ in definition typing keeps track of the identity of the currently typing object,
since the definition typing rules can be used to type nested object definitions.

\paragraph{Tagged objects}
An object form $\tNew x A y T d$ must itself be typeable as its tag $x.A$;
this condition is checked by the \rn{\{\}-I} and \rn{Def-New} rules.
It may appear problematic to require \emph{all} object to refer to another object's type member in its tag:
after all we need a tag
to type check the very first object in a \calculus program,
but tags themselves come from objects!
This is not actually a problem because in \calculus objects can be typed recursively:
we can type check a self-tagged object $\kLet x = \tNew{x}{A}{x}{\set{A = \top}}{\set{A = \top}}$,
whose type member $x.A$ may later be used as a ``top tag'' that can tag any object.
In later examples we will sometimes skip the tag of an object;
we understand such objects to be tagged with this ``top tag''.
Notice that our tag system does not support inheritance:
in order to encode class hierarchies more than one level deep,
we would need to tag objects with \emph{several} class tags.
For simplicity, we leave this straightforward extension out of \calculus{}.

\subsection{Subtyping}\label{sec:subtyping}
\begin{wide-rules}
\setlength{\fboxsep}{0.5em}
\textbf{Subtyping \quad \fbox{$\subDft S T$}}
\setlength{\fboxsep}{1pt}
  \begin{multicols}{3}
\infax[Top]
  {\subDft T \top}

\infax[Bot]
  {\subDft \bot T}

\infax[Refl]
  {\subDft T T}

\infrule[Trans]
  {\subDft S T
    \andalso
    \subDft T U}
  {\subDft S U}

\infax[And$_1$-$<:$]
  {\subDft {\tAnd T U} T}

\infax[And$_2$-$<:$]
  {\subDft {\tAnd T U} U}

\infrule[$<:$-And]
  {\subDft S T
    \andalso
    \subDft S U}
  {\subDft S {\tAnd T U}}
  
\infrule[Fld-$<:$-Fld]
{\subDft T U}
{\subDft {\tFldDec a T} {\tFldDec a U}}

\newruletrue
\infrule[Fld-$<:$-Fld-Inv]
{\uniqueFlow{U}{\tFldDec a {T_1}} \\
  \subDft {U} {\tFldDec a {T_2}}}
{\subDft {T_1} {T_2}}
\newrulefalse

\infrule[Typ-$<:$-Typ]
{\subDft {S_2} {S_1}
	\andalso
	\subDft {T_1} {T_2}}
{\subDft {\tTypeDec A {S_1} {T_1}} {\tTypeDec A {S_2} {T_2}}}

\newruletrue
\infrule[Typ-$<:$-Typ-Inv$_1$]
{\uniqueFlow{U}{\tTypeDec A {S_1} {T_1}} \\
  \subDft {U} {\tTypeDec A {S_2} {T_2}}}
{\subDft {S_2} {S_1}}

\infrule[Typ-$<:$-Typ-Inv$_2$]
{\uniqueFlow{U}{\tTypeDec A {S_1} {T_1}} \\
  \subDft {U} {\tTypeDec A {S_2} {T_2}}}
{\subDft {T_1} {T_2}}
\newrulefalse

\infrule[$<:$-Sel]
{\typDft {p} {\tTypeDec A S T}}
{\subDft S {{p}.A}}

\infrule[Sngl$_{pq}$-$<:$]
{\typDft p {\single q} \andalso \typeable q}
{\subDft T {\repl p q T}}

\infrule[Sngl$_{qp}$-$<:$]
{\typDft p {\single q} \andalso \typeable q}
{\subDft T {\repl q p T}}

\infrule[Sel-$<:$]
  {\typDft {p} {\tTypeDec A S T}}
  {\subDft {p.A} T}

\infrule[All-$<:$-All]
{\subDft {S_2} {S_1}
	\\
	\sub {\extendG x {S_2}} {T_1} {T_2}}
{\subDft {\tForall x {S_1} {T_1}} {\tForall x {S_2} {T_2}}}
\end{multicols}

\newruletrue
\setlength{\fboxsep}{0.5em}
\textbf{Unique membership \quad \fbox{$\uniqueFlow T U$}}
\setlength{\fboxsep}{1pt}
\infrule[Unique-Label]
{\uniqueFlowLabel T U {\mathcal L}}
{\uniqueFlow T U}

\setlength{\fboxsep}{0.5em}
\textbf{Unique membership with label \quad \fbox{$\uniqueFlowLabel T U {\mathcal L}$}}
\setlength{\fboxsep}{1pt}

\begin{multicols}{2}
  \infax[One-Typ]
  {\uniqueFlowLabel {\tTypeDec A S T} {\tTypeDec A S T} {\set{A}}}

  \infax[One-Fld]
  {\uniqueFlowLabel {\tFldDec a T} {\tFldDec a T} {\set{a}}}

  \infax[One-Rec]
  {\uniqueFlowLabel {\tRec x T} {\tRec x T} {\emptysetAlt}}

  \infrule[And-Left]
  {\uniqueFlowLabel {U_1} {T_1} {\mathcal {L}_1} \andalso
   \uniqueFlowLabel {U_2} {T_2} {\mathcal {L}_2} \\
   \mathcal {L}_1 \cap \mathcal {L}_2 = \emptysetAlt
  }
  {\uniqueFlowLabel {\tAnd{U_1}{U_2}} {T_1} {\mathcal L_1 \cup \mathcal L_2}}

  \infrule[And-Right]
  {\uniqueFlowLabel {U_1} {T_1} {\mathcal {L}_1} \andalso
   \uniqueFlowLabel {U_2} {T_2} {\mathcal {L}_2} \\
   \mathcal {L}_1 \cap \mathcal {L}_2 = \emptysetAlt
  }
  {\uniqueFlowLabel {\tAnd{U_1}{U_2}} {T_2} {\mathcal L_1 \cup \mathcal L_2}}
\newrulefalse
\end{multicols}

\normalsize
\caption{\calculus{} subtyping rules}
\label{fig:subtyping-rules}
\end{wide-rules}
We now inspect the subtyping rules defined in \autoref{fig:subtyping-rules}.
Subtyping is made reflexive and transitive with \rn{Refl} and \rn{Trans}.
Rules for top, bottom and intersection types are typical (respectively \rn{Top}, \rn{Bot} and \rn{And$_1$-<:}, \rn{And$_2$-<:}, \rn{<:-And}).
Rules \rn{Fld-<:-Fld}, \rn{Typ-<:-Typ} and \rn{All-<:-All} derive the standard subtyping relationships between object fields, type members and functions.
Rules \rn{<:-Sel} and \rn{Sel-<:} relate a path-dependent type and its bounds.
Importantly, these rules also allow path-dependent types to introduce new subtyping relationships.
For example, if we have $\typDft{p}{\tTypeDec{A}{\Int}{q.T}}$,
then by \rn{Trans} and relating $q.T$ with its bounds we also can derive $\subDft{\Int}{q.T}$.

\paragraph{Aliasing paths}
Rules \rn{Sngl$_{pq}$-<:} and \rn{Sngl$_{qp}$-<:} establish equivalence of aliasing paths:
if we can derive $\Gamma |- p : \single{q}$, then with subtyping we can freely substitute $p$ for $q$ in types and vice versa.
The expressivity of these rules is augmented by \rn{Sngl-Self}.
Consider the following example:
\begin{equation*}
	\ttCase{p}{g.\ident{IntLit}}{y}{p}{\cdots}
\end{equation*}
In the matched branch we should be able to type $p$ as $g.\ident{LitInt}$.
However, this is not possible without the \rn{Sngl-Self} rule.
When typing the type case, the environment is extended with $y : \single p \wedge g.\ident{LitInt}$.
In \pdot{}, we can only type $\typDft y {\single p}$, but not the other direction due to the asymmetric nature of \pdot{} path aliasing:
the \rn{Sngl-Trans} rule only allows the propagation of typing from $p$ to $y$, but not the reverse.
This means that, although $y$ should witness that $p$ can be typed as $g.\ident{LitInt}$, there is no way to assign the witnessed type to $p$.
The \rn{Sngl-Self} rule mitigates the limitation by making path aliasing symmetric (i.e. we can derive $\typDft p {\single q}$ from $\typDft q {\single p}$).
The \rn{Sngl-Trans} can type $p$ as $g.\ident{LitInt}$.
The following derivation tree illustrates how \rn{Sngl-Self} enables us to type path aliasing symmetrically
Here $\G^\prime$ denotes $\extendG{y}{\single p \wedge g.\ident{IntLit}}$.
\begin{mathpar}
	\inferrule*[right=\rn{Sub}]{
    \inferrule*[right=\rn{Sngl-Self}]{
      \G^\prime \vdash p
    }{
      \typ {\G^\prime} p {\single p}
    } \\
    \inferrule*[right=\rn{Sngl$_{qp}$-<:}]{
      \typ {\G^\prime} {y} {\single p} \\
      \G^\prime \vdash p
    }{
      \sub {\G^\prime} {\single p} {\single y}
    }
  }{
    \typ {\G^\prime} p {\single y}
  }
\end{mathpar}

\subsubsection*{Subtyping inversion}
\label{sec:inversion-subtyping}
Finally, we can reconstruct premises of other subtyping rules
with \rn{Fld-$<:$-Fld-Inv}, \rn{Typ-$<:$-Typ-Inv$_1$} and \rn{Typ-$<:$-Typ-Inv$_2$}, the \emph{inversion rules}.
For instance, using \rn{Fld-$<:$-Fld-Inv}, we can derive $\Gamma \vdash \Int <: x.T$ from $\Gamma \vdash \set{ a: \Int } <: \set{ a: x.T }$,
respectively the premise and conclusion of \rn{Fld-$<:$-Fld}.
The inversion rules are only useful if the bindings in the context introduce some subtyping relationships as assumptions.
Consider the following example:
\begin{align*}
  &\tLambda x {\tTypeDec T \bot \top }\,
    \tLambda f {x.T \rightarrow \Int}\,
    \tLambda y {\set{ A : {\tFldDec a {\Int}}..{\tFldDec a {x.T}} }} \;\; f\,0
\end{align*}
\noindent
To type it, we need to derive that $\Int <: x.T$.
Based on the bounds of $y.A$, we can derive that $\tFldDec{a}{\Int} <: \tFldDec{a}{x.T}$; in other words, this relationship is an assumption introduced by $y$.
Intuitively, in any context in which we can construct a value for $y$, we can also derive that $\Int <: x.T$;
using bounds of $y$ and subtyping inversion, we can also derive that if we only have $y$:
\begin{prooftree}
  \eqfontsize
  \AxiomC{\ldots}
  \RightLabel{\rn{<:-Sel}}
  \UnaryInfC{$\subDft {\tFldDec a {\Int}} {y.A}$}
  \AxiomC{\ldots}
  \RightLabel{\rn{Sel-<:}}
  \UnaryInfC{$\subDft {y.A} {\tFldDec a {x.T}}$}
  \RightLabel{\rn{Trans}}
  \BinaryInfC{$\subDft {\tFldDec a {\Int}} {\tFldDec a {x.T}}$}
  \AxiomC{$\uniqueFlow {\tFldDec a {\Int}} {\tFldDec a {\Int}}$}
  \RightLabel{\rn{Fld-<:<-Fld-Inv}}
  \BinaryInfC{$\subDft {\Int} {x.T}$}
\end{prooftree}
\noindent
The combination of bindings that introduce subtyping assumptions and inversion rules that allow deriving premises of assumptions is \emph{subtyping reconstruction}:
the ability to use a binding to derive subtyping relationships necessary to construct a value which could stand for the binding.

The inversion rules use the unique membership relation $\uniqueFlow T U$, defined in \autoref{fig:subtyping-rules}.
It allows extracting a component $U$ out of an intersection type $T$,
which allows inverting relationships involving intersection types, for instance:
\begin{prooftree}
  \eqfontsize
  \AxiomC{$\subDft {\set{a : S} \wedge \set{b : U_2}} {\set{a : T}}$}
  \AxiomC{$\uniqueFlow {\set{a : S} \wedge \set{b : U}} {\set{a : S}}$}
  \RightLabel{\rn{Fld-<:-Fld-Inv}}
  \BinaryInfC{$\subDft S T$}
\end{prooftree}
\noindent
In order to invert relationships that involve an intersection type on the RHS, we can use \rn{Trans}.
For instance, in the above derivation, the inverted subtyping relation could be derived like this:
\begin{prooftree}
  \AxiomC{$\subDft {\set{a : S} \wedge \set{b : U_1}} {\set{a : T} \wedge \set{c : U_2}}$}
  \AxiomC{$\subDft {\set{a : T} \wedge \set{c : U_2}} {\set{a : T}}$}
  \RightLabel{\rn{Trans}}
  \BinaryInfC{$\subDft {\set{a : S} \wedge \set{b : U_2}} {\set{a : T}}$}
\end{prooftree}

In order for subtyping inversion to be sound, all the labels of $T$ in $\uniqueFlow T U$ must be unique.
To see why, consider this relationship:
$$
\subDft {\tFldDec{a}{T_1} \wedge \tFldDec{a}{T_2}} {\tFldDec{a}{U}},
$$
\noindent
The original premise used to derive this relationship may have been either $\subDft{T_1}{U}$ or $\subDft{T_2}{U}$.
Since we cannot know which one it was, we can derive neither with the inversion rules.

\subsection{Evaluation}

\setlength{\multicolsep}{9pt}
\newcommand\hole{[\ ]}
\begin{wide-rules}

\begin{minipage}{0.5\textwidth}
  \begin{align*}
    \sta\Coloneqq& \ \varnothing\ |\ \sta\co x\mapsto v
                   \tag*{\textbf{Store}}\\[10pt]
    \resolvedDft\Coloneqq& \ p \quad \text{(if $\exists v.\, \lookupStepDft p v$)}
                           \tag*{\textbf{Resolved Path}}\\[10pt]
  \end{align*}
\end{minipage}%
\begin{minipage}{0.5\textwidth}
  \begin{align*}
    e^\gamma\Coloneqq& \hole \ | \ \hole\, q \ |\ \resolvedDft\, \hole
                       \tag*{\textbf{Evaluation Context}} \\
    |& \ \ttLet x {e^\gamma} u \\
    |& \ \ttCase {\hole} {q.A} y {t_1} {t_2} \\
    |& \ \ttCase {\resolvedDft} {\hole.A} y {t_1} {t_2} \\
  \end{align*}
\end{minipage}

\begin{minipage}{0.45\textwidth}

\infrule[\redCtx]
{\reduction {\sta} t {\sta^\prime} {t^\prime}}
{\reduction 
    {\sta}
    {e^\gamma[t]}
    {\sta^\prime}
    {e^\gamma[t^\prime]}}
\vspace{0.5em}
\infrule[\redResolve]
{{\lookupStepDft p q}}
{\reductionDft
    {p}
    {q}}
\vspace{0.5em}
\infrule[\redApply]
{{\lookupStepDft q {\tLambda z T t}}}
{\reductionDft
    {{q\,\resolvedDft}}
    {\tSubst z {\resolvedDft} t}}
\vspace{0.5em}
\infax[\redLetpath]
{\reductionDft
    {\ttLet x {\resolvedDft} t}
    {\tSubst x {\resolvedDft} t}}
\vspace{0.5em}
\infrule[\redLetvalue]
{x \notin \dom\sta}
{\reduction
    \sta
    {\ttLet x v t}
    {\extendSta x v} 
    t}
\end{minipage}%
\begin{minipage}{0.55\textwidth}
\vspace{-0.95em}
\infrule[\redCaseThen]
{\lookupStepDft q {\tNew r A x T d} \andalso
 \lookupDft r {\resolvedDft}}
{\reductionDft {\ttCase q {\resolvedDft\!.\,A} y {t_1} {t_2}} {\tSubst y p {t_1}}}
\vspace{1.8em}
\infrule[\redCaseElse]
{\lookupStepDft q {\tNew r A x T d} \quad
 \lookupDft r {\resolvedDft_1} \quad
 {\resolvedDft_1}\!.\,A \neq {\resolvedDft_2}.B
 }
{\reductionDft {\ttCase q {{\resolvedDft_2}.B} y {t_1} {t_2}} {t_2}}
\vspace{1.2em}
\infrule[\redCaseLambda]
{\lookupStepDft p {\tLambda x T U}}
{\reductionDft {\ttCase p {\resolvedDft\!.\,A} y {t_1} {t_2}} {t_2}}
\end{minipage}
\vspace{0.5em}
\begin{multicols}{3}
	\infrule[Lookup-Step-Var]
		{\sta(x) = v}
		{\lookupStepDft x v}

	\infrule[Lookup-Step-Val]
		{\lookupStepDft p {\tNew r A x T {\dots \set{a=\assignTrm} \dots}}}
		{\lookupStepDft {p.a} {\tSubst x p \assignTrm}}		
		
	\infrule[Lookup-Step-Path]
		{\lookupStepDft p q}
		{\lookupStepDft{p.a}{q.a}}
\end{multicols}

\caption{Operational semantics of \calculus{}}
\label{fig:reduction-rules}

\end{wide-rules}

\setlength{\multicolsep}{2pt}

The operational semantics of \calculus{} is presented in \autoref{fig:reduction-rules}.
We define $\lookupSymbol^{*}$ as the transitive, reflexive closure of $\lookupSymbol$.
Since our reduction rules are significantly different from \pdot, we do not highlight the differences.

\subsubsection*{Store.}
Similarly to pDOT, \calculus reduction is store-based.
During evaluation, we still need to keep track of object identities; to do so, the store $\sta$ binds \emph{variables} to values.
Note that objects still contains nested objects during evaluation: in general, the identity of an object may be a path.

\subsubsection*{Path resolution.}
Compared to pDOT, the most significant change in \calculus is that the \rn{Resolve} rule allows reducing aliasing paths:
given $\lookupStepDft p q$, we can reduce $\reductionDft p q$.
We call a path $p$ that directly resolves to a value in $\gamma$ a \emph{resolved path} $\rho^{\gamma}$ (see \autoref{fig:reduction-rules}).
Resolved paths cannot be reduced any further;
much like singleton types capture object identities on the type level, on the term level a resolved path directly corresponds to the runtime identity of an object.
Importantly, as a consequence of modelling recursive objects and modules, it is possible to create objects with circular fields.
References to such fields results in infinite loops.
A circular object may be defined like this:
\[\nu(x : \set{a:\single{x.b}} \wedge \set{b:\single{x.a}}) \set{a=x.b} \wedge \set{b=x.a}\]

In \pdot, it is impossible to reduce paths, and even resolving path aliases breaks preservation.
For example, if $p$ can be typed as $q.type$ in $\G$ and it looks up to $q$ in $\sta$,
there is still no guarantee that $q$ can be typed as $q.type$ in $\G$ and so allowing $\reductionDft p q$ would violate type preservation.
In \calculus, we can always type $q$ as $q.type$ thanks to the \rn{Sngl-Self} rule and so reducing paths preserves their types.

\begin{figure}[h]
\begin{multicols}{2}
\noindent
  \infax[]
  {\reductionDft
    {\ttLet x {p} t}
    {\tSubst x {p} t}}

  \infax[]
  {\reductionDft
    {\ttLet x {\new \resolvedDft} t}
    {\tSubst x {\resolvedDft} t}}
\end{multicols}
\vspace{0.5em}
\begin{multicols}{2}
  \noindent
  \infrule[]
  {{\lookupDft q {\tLambda z T t}}}
  {\reductionDft
    {{q\,p}}
    {\tSubst z {p} t}}

  \infrule[]
  {{\lookupStepDft q {\tLambda z T t}}}
  {\reductionDft
    {{q\,\new{\resolvedDft}}}
    {\tSubst z {\new \resolvedDft} t}}
\end{multicols}

  \caption{The \rn{Let-Path} (top) and \rn{Apply} (bottom) rules of \pdot{} (left) and \calculus{} (right).}
  \label{fig:comparing-let-path-reduction}
\end{figure}

The reduction of paths also brings significant changes to the reduction of let bindings and lambda applications,
as we can see on \autoref{fig:comparing-let-path-reduction}.
In \pdot, a let-bound path is reduced by substituting the path into the let body;
in \calculus{}, we first reduce the path until it is fully resolved.
\subsubsection*{Case term reduction.}
Both pDOT and \calculus{} ensure that paths typeable with function and object types always resolve to a value, which is necessary for soundness \cite{rapoportPathDOTFormalizing2019}.
\autoref{fig:comparing-let-path-reduction} shows that pDOT transitively looks up paths when reducing apply forms $q\,p$: being unable to look up $q$ would result in a stuck term.
However, \calculus{} case forms allow scrutinees of arbitrary types and such scrutinees may refer to a circularly-defined field.
To avoid transitive lookups of such fields, \calculus{} allows reducing paths, letting case forms with such scrutinees loop endlessly.
This mirrors how looking up lazily defined, circular fields in Scala also results in endless loops.

\subsection{Metatheory}
\label{sec:metatheory}

We provide a mechanized soundness proof for \calculus, basing on soundness proof for \pdot{} \cite{rapoportPathDOTFormalizing2019}, at its core formulated in the standard progress-and-preservation style \cite{wright1994}.
Our Coq proof scripts have $\approx$ 13100 lines of code, compared to $\approx$ 9800 lines in the \pdot{} soundness proof.
We attach the proof as an artifact.

Evaluating any well-typed \calculus{} program either diverges or results in a normal form.
\begin{theorem}[Type Safety]
If $\typ {} t T$, then the evaluation of $t$ either diverges, 
or $\reduction {\emptysetAlt} t {\sta} {u}$ such that
(a) $u$ is a normal form under $\sta$; (b) $\typDft u T$ for some $\G$ and (c) \tnew{$\wf \G \sta$}.
\end{theorem}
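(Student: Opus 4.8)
The plan is to follow the standard syntactic method of Wright and Felleisen, deriving type safety from two lemmas: \emph{progress}---a well-typed closed term is either a normal form or can take a reduction step---and \emph{preservation}---reduction preserves typing, allowing the typing context $\G$ to be extended as the store $\sta$ grows. Type safety then follows by iterating: starting from $\typ {} t T$, either the program reduces forever (it diverges), or it reaches a normal form $u$ whose type $T$ and store well-formedness $\wf \G \sta$ are furnished by the accumulated applications of preservation. Since \calculus{} extends \pdot{}, I would reuse \pdot{}'s Coq infrastructure and concentrate on the genuinely new ingredients. As in every DOT metatheory, the principal hazard is \emph{bad bounds}: an abstract member $p.A$ may have a lower bound that is not below its upper bound, so that in an inconsistent context the \rn{$<:$-Sel} and \rn{Sel-$<:$} rules let one derive false subtyping. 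I would therefore carry over \pdot{}'s stratification of typing into precise, tight, and invertible layers and re-prove the canonical-forms and ``tight equals general'' correspondence lemmas; these hold in any context matching a well-formed store precisely because runtime objects are built by \rn{Def-New} subject to the $\tightbounds T$ condition, which forces every stored type member to have coinciding bounds.

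For progress I would induct on the typing derivation of the closed term and appeal to canonical-forms lemmas for function and object types; all cases but the \rn{Case} rule are inherited from \pdot{}. For a case form, both the scrutinee and the tag path must first be resolved, and the evaluation contexts for \rn{Case} allow stepping each of them; an inherited lemma guarantees that a typeable path is either already resolved or can itself reduce, so the configuration never sticks while resolving them. Once the scrutinee is resolved it looks up to a value, which---values being exactly tagged objects and lambdas---triggers exactly one of the three case-reduction rules: the then-rule or the else-rule when it is a tagged object, according to whether its stored tag syntactically equals the resolved tag $q.A$, and the lambda-rule when it is a lambda. A subtlety flagged by the semantics is that a scrutinee may reference a circularly-defined field and loop forever; progress only requires that such a configuration can take \emph{some} step, which path resolution always provides, so the program diverges rather than getting stuck.

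For preservation I would prove subject reduction by cases on the reduction rule, extending \pdot{}'s argument. The substantive new cases are path resolution and the then-branch of \rn{Case}. For path resolution, which reduces $p$ to $q$ when $p$ looks up to $q$, I would prove that a well-formed store yields $\typDft p {\single q}$, and combine this with \rn{Sngl-Self} (giving $\typDft q {\single q}$) and \rn{Sngl-Trans} to transfer any type of $p$ onto $q$; this is exactly the step that fails in \pdot{} and that \rn{Sngl-Self} is designed to repair. The crux is the then-branch: the rule substitutes the scrutinee path for $y$ in a branch $t_1$ typed under the assumption $y : \tAnd{\single p}{q.A}$, so I must establish that the scrutinee genuinely has type $\tAnd{\single p}{q.A}$. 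The $\single p$ component is immediate from \rn{Sngl-Self}; the $q.A$ component is the heart of soundness and needs a new \emph{tag-soundness} lemma---when the then-rule fires, the scrutinee resolves to a tagged object whose runtime tag equals $q.A$, and the conformance conditions imposed on tags by \rn{\{\}-I} and \rn{Def-New}, together with store well-formedness, guarantee that this object can be typed at $q.A$. With this witness substituted for $y$, \calculus{}'s subtyping reconstruction (via \rn{Fld-$<:$-Fld-Inv}, \rn{Typ-$<:$-Typ-Inv$_1$}, and \rn{Typ-$<:$-Typ-Inv$_2$}) reproduces the subtyping facts on which $t_1$'s typing rested, so the substituted branch keeps type $U$.

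I expect the main obstacle to be proving the inversion rules sound in tandem with the symmetric singleton rules, i.e. re-establishing the core replacement, narrowing, and substitution lemmas for \calculus{}. The inversion rules derive a component relation $\subDft{T_1}{T_2}$ from a structural relation $\subDft U {\tFldDec a {T_2}}$ whenever $\uniqueFlow U {\tFldDec a {T_1}}$; showing they introduce no unsound subtyping in store-corresponding contexts demands a careful characterization of subtyping derivations between single-label types in tight contexts, and the unique-membership side condition is exactly what rules out the ambiguity (two $a$-labelled components) that would otherwise invalidate the inversion. Compounding this, \rn{Sngl-Self} and \rn{Sngl$_{qp}$-$<:$} make path aliasing symmetric, so substituting one aliased path for another must be shown to preserve typing in \emph{both} directions; I anticipate that this symmetric replacement reasoning---absent from \pdot{}---is where most of the delicate bookkeeping lives, and correspondingly accounts for the bulk of the additional Coq development over \pdot{}'s proof.
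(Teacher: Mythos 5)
Your strategy coincides with the paper's: progress and preservation over \pdot{}'s stratified typing levels, \rn{Sngl-Self} to make path resolution type-preserving, a tag-soundness argument giving the scrutinee the witness type $\tAnd{\single p}{q.A}$ in the then-branch, and admissibility of the inversion rules at the tight level (the paper does this by adding an \emph{invertible subtyping} layer that inlines \rn{Trans} into the path-replacement and type-member rules, which is the ``careful characterization of subtyping derivations in tight contexts'' you anticipate). One step in your Progress argument, however, would fail as written. For a case form $\ttCase{p}{q.A}{y}{t_1}{t_2}$ you correctly note that the scrutinee and the pattern's tag prefix can be driven to resolved paths by the evaluation contexts. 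But once the scrutinee looks up to a tagged object $\tNew{r}{A}{x}{T}{d}$, both \rn{\redCaseThen} and \rn{\redCaseElse} additionally require that the object's \emph{own} stored tag prefix $r$ transitively resolves, i.e.\ $\lookupDft{r}{\resolvedDft}$, before the two tags can be compared. The path $r$ sits inside a value in the store; it is not in evaluation position, so no evaluation context can step it, and your ``a typeable path is either resolved or can reduce'' lemma does not apply to it. Without a separate argument that $r$ resolves, neither case-reduction rule fires and the term is stuck, contradicting progress.

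The paper closes this gap with a dedicated Tag Resolution lemma (\autoref{thm:tag-resolution}): in an inert, well-formed environment with $\wf \G \sta$, if a typeable path looks up to $\tNew{r}{A}{x}{T}{d}$ then $r$ itself resolves to some $\resolvedDft$. Its proof leans on the conformance conditions that \rn{\{\}-I} and \rn{Def-New} impose on tags together with store well-formedness --- the same ingredients you invoke for your tag-soundness lemma in preservation, so the fix is local --- but it is a genuinely separate statement from anything inherited from \pdot{} and must be proved explicitly. The rest of your plan (including the diagnosis that the symmetric singleton rules and the inversion rules account for most of the new mechanization effort) matches the paper's development.
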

\noindent
\tnew{$\wf \G \sta$} means that $\sta$ conforms to $\G$, i.e. for any variable $x_i \in \mathrm{dom}\, \G$, if $\G(x_i) = T_i$ and $\sta(x_i) = v_i$, then $\typDft {v_i} {T_i}$.
Normal forms are values and resolved paths.

\subsubsection{Soundness Proof: From \pdot{} to \calculus{}}
Our soundness proof extensively reuses the infrastructure and toolkits provided by \pdot{}.
The strategy central to the proof of \pdot{} is to stratify typing rules into seven levels \cite{rapoportPathDOTFormalizing2019}, as illustrated in the following diagram:
\begin{align*}
  {\tiny
  \text{General } (\vdash) \rightarrow
  \text{Tight } (\turnstileTight)\rightarrow
  {\text{{Introduction-$qp$ }} (\turnstileRepl)} \rightarrow
  \text{Introduction-$pq$ }(\turnstileInvertible) \rightarrow
  {\text{Elim-III }(\turnstilePrecThree)} \rightarrow
  {\text{Elim-II }(\turnstilePrecTwo)} \rightarrow
  \text{Elim-I }(\turnstilePrecOne)}
\end{align*}
The stratified typing level hierarchy tackles the bad bounds problem \cite{amin2016} and eliminates cycles in type derivation \cite{rapoportPathDOTFormalizing2019,simpleSoundnessProof}.
Innermost typing levels assign precise typing information to paths directly followed from the environment and path aliasing.
It is easy to do induction and reason about typing judgments on these levels.
With a series of transformation lemmas converting judgments between levels,
the general recipe of \pdot{} soundness proof is to transform the surface typing level (general typing $\vdash$) to inner typing levels and then reason about the typing judgments.
(See Appendix \ref{sec:soundness-proof-of-pdot} for a detailed description.)

The soundness proof of \calculus{} follows the stratified typing levels
of the \pdot{} soundness proof.
We describe our extensions in detail in \autoref{sec:invertible-subtyping}.
We state Preservation and Progress as follows:
\begin{theorem}[Preservation] \label{thm:preservation}
  Let $\G$ be an \tnew{inert}, \tnew{well-formed} typing environment and let $\wf \G \sta$.
  If $\reduction \sta t {\sta^\prime} {t^\prime}$ and $\typDft t T$,
  then there exists an \tnew{inert} and \tnew{well-formed} typing environment $\G^\prime$
  such that $\wf {\sta^\prime} {\G^\prime}$ and $\typ {\G^\prime} {t^\prime} T$.
\end{theorem}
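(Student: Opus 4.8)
The plan is to prove Preservation by case analysis on the reduction step $\reduction \sta t {\sta^\prime} {t^\prime}$, reusing the stratified precise-typing hierarchy of \pdot{}, running from the surface General typing judgement down to the precise elimination levels, to invert the hypothesis $\typDft t T$ and read off canonical-forms information about the paths and values involved. Concretely, for each reduction rule I would first peel the outer \rn{Sub} and recursive-type coercions off the derivation of $\typDft t T$ down to a precise typing at an inner level, extract the relevant shape, build a derivation of $\typ {\G^\prime} {t^\prime} T$, and finally verify that $\G^\prime$ remains inert and well-formed and that $\wf {\sta^\prime} {\G^\prime}$. The congruence case (reduction under an evaluation context $e^\gamma$) follows the usual pattern: invert the typing of $e^\gamma[t]$ to type the redex at some $T^\prime$, apply the induction hypothesis, and re-assemble with a replacement lemma for evaluation contexts. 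The application and let cases are essentially inherited from \pdot{}, using the canonical-forms lemma for function types together with the path-substitution lemma; the only wrinkle is that \calculus{} first fully resolves any path argument.

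The first genuinely new case is path resolution: when $p$ reduces to $q$ because $p$ looks up to $q$ in $\sta$, I must re-derive $\typ {\G} q T$. Here $\wf \G \sta$ yields $\typ \G p {\single q}$ through the store-correspondence lemma, while the new \rn{Sngl-Self} rule supplies $\typ \G q {\single q}$; combining these with the symmetric aliasing rule \rn{Sngl$_{qp}$-<:} and \rn{Sngl-Trans} transfers $T$ from $p$ onto $q$ (e.g.\ \rn{Sngl$_{qp}$-<:} gives $\subDft {\single q}{\single p}$, hence $\typ \G q {\single p}$, and then $\typ \G p T$ closes it). This is exactly the step that breaks in \pdot{} and that \rn{Sngl-Self} is designed to repair, so its correctness is the lynchpin that makes reducing aliasing paths type-preserving.

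The case forms split three ways. The else-branch and lambda-scrutinee reductions both step to $t_2$, which the \rn{Case} rule already types at the result type, i.e.\ at $T$, so these are immediate. The matching-branch reduction is the crux: it steps to $\tSubst y p {t_1}$, substituting the scrutinee $p$ for $y$, where \rn{Case} typed $t_1$ under $\extendG y {\tAnd {\single p}{q.A}}$. To discharge this substitution I need $\typ \G p {\tAnd {\single p}{q.A}}$ at runtime. The component $\typ \G p {\single p}$ is immediate from \rn{Sngl-Self}, and the hard half is $\typ \G p {q.A}$. For this I would establish a \emph{runtime-tag} lemma: the matching reduction fires only when the object underlying $p$ carries a tag resolving to the same path-member $q.A$, and the tagging discipline forced by \rn{\{\}-I} and \rn{Def-New} (which require every object to be typeable at its own tag) then guarantees that this object is statically typeable as $q.A$. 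An application of \rn{\&-I} recombines the two components, and the path-substitution lemma closes the case, using $y \notin \fv T$ to keep the result type fixed.

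I expect this tag-to-membership argument to be the \textbf{main obstacle}, for two reasons. First, it is precisely where the operational tag-matching discipline has to be reconciled with the static semantics of type members, and the soundness of the entire \rn{Case} rule rests on it. Second, it --- like the path-resolution case --- depends on canonical-forms and inversion lemmas that must be re-established through all seven stratified typing levels in the presence of the new \rn{Sngl-Self} rule and the subtyping-inversion rules \rn{Fld-$<:$-Fld-Inv}, \rn{Typ-$<:$-Typ-Inv$_1$} and \rn{Typ-$<:$-Typ-Inv$_2$}; threading these additions through the inter-level transformation lemmas (detailed in \autoref{sec:invertible-subtyping}) is where most of the mechanized effort lies. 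Preserving the side conditions, by contrast, is routine: only the let-value reduction extends the environment, and the tight-bounds premise of \rn{Def-New} makes the new binding's precise type inert, while well-formedness is maintained because the new binding refers only to previously bound paths.
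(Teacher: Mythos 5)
Your proposal is correct and follows essentially the same route as the paper's (mechanized) proof: a case analysis on the reduction relation carried out over \pdot{}'s stratified typing levels, with \rn{Sngl-Self} as the lynchpin both for type-preserving path resolution and for typing the scrutinee at $\single{p} \wedge q.A$ in the \rn{\redCaseThen} case, and with the inversion rules discharged by proving them admissible at the tight level via the invertible-subtyping layer. You correctly identify where the genuinely new work lies (the tag-to-membership argument and threading the new rules through the inter-level transformation lemmas), which is exactly where the paper reports spending its effort.
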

\begin{theorem}[Progress] \label{thm:progress}
  Let $\G$ be an \tnew{inert}, \tnew{well-formed} typing environment and let $\wf \G \sta$.
  If $\typDft t T$,
  then either $t$ is a normal form,
  or there exists $\sta^\prime$ and $t^\prime$, such that $\reduction \sta t
  {\sta^\prime} {t^\prime}$.
\end{theorem}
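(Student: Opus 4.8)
The plan is to prove Progress by induction on the structure of the term $t$, using in each case the standard inversion (generation) lemmas of the stratified type system to recover the typing of the immediate subterms. The two workhorses I would establish first are store-level lemmas in the spirit of canonical forms, both assuming that $\G$ is inert and well-formed and that $\wf \G \sta$. The first is a \emph{lookup totality} lemma: if $\typeable p$, then $p$ looks up one step in $\sta$ to some stable term $s$, i.e. $\lookupStepDft p s$ with $s$ a value or a path. The second is a \emph{canonical forms for functions} lemma: if $p$ is typeable with a dependent function type $\tForall z S U$, then $p$ looks up in one step to a lambda value. Both are proved by induction on $p$, reading the precise shape of the stored value off the inert, well-formed invariants exactly as in \pdot{}; the field case $p = p'.a$ uses that a path typeable with a field declaration resolves to an object whose body supplies the field $a$, and then steps by \rn{Lookup-Step-Val}, while an aliasing prefix is carried along by \rn{Lookup-Step-Path}.

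With these in hand, the value and path cases are immediate. Lambdas and tagged objects are already normal forms. For a path $p$, lookup totality yields $s$ with $\lookupStepDft p s$: if $s$ is a value then $p$ is a resolved path, hence a normal form; if $s$ is a path $q$ then $p$ steps by \rn{Resolve}. The application case $p\,q$ (where $p,q$ are paths, since terms are in ANF) is driven by the evaluation contexts $\hole\,q$ and $\resolvedDft\,\hole$: if $p$ is not yet a resolved path it steps as above; once $p$ is resolved, canonical forms makes it look up to a lambda, so we either reduce the argument $q$ (if it is an unresolved path) or fire \rn{Apply} (if $q$ is already resolved). The let case $\ttLet x {t'} u$ is symmetric: a bound value triggers \rn{Let-Value}, a bound resolved path triggers \rn{Let-Path}, an unresolved bound path steps by \rn{Resolve}, and any other bound term is not a normal form and steps by the induction hypothesis, in every case under the context $\ttLet x {e^\gamma} u$.

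The only genuinely new case is pattern matching $\ttCase p {q.A} y {t_1} {t_2}$, whose \rn{Case} premises give $\typeable p$ and $\typeable q$. I would drive it with the two case-specific evaluation contexts, first resolving the scrutinee $p$ and then the tag path $q$: if either is an unresolved path it steps by \rn{Resolve} under the appropriate context, so assume both are resolved paths. The scrutinee then looks up in one step to a value. If it is a lambda, \rn{CaseLambda} selects $t_2$. If it is a tagged object $\nu(x\colon T)[r.A']\,d$, a further appeal to (transitive) path resolution resolves its stored tag base $r$, and comparing the two now-concrete tags selects \rn{CaseThen} (tags equal, substituting the scrutinee for $y$) or \rn{CaseElse} (tags distinct). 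Thus every shape of resolved scrutinee admits a step, and crucially an \emph{unresolved} scrutinee --- possibly one referring to a circularly defined field --- still steps by \rn{Resolve}, so the term is never stuck even though it may diverge.

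I expect the main obstacle to be precisely the lookup totality lemma together with the guarantee that the tag base recorded inside a stored tagged object always resolves. Unlike \pdot{}, where case scrutinees never arise and one only needs that paths of \emph{function or object} type resolve all the way to a value, \calculus{} must cope with case scrutinees of \emph{arbitrary} type; the compensating simplification is that we only ever require a \emph{single} lookup step, since an incomplete resolution can always be advanced by \rn{Resolve}. The delicate work is therefore entirely on the store side: extracting from the inert, well-formed invariants that every typeable path takes at least one lookup step, and that the self-referential tag $r.A'$ of each object value has a resolvable base $r$. These follow the \pdot{} recipe of converting general typing down to precise typing and reading off canonical forms, but they must be re-established for the extended value form carrying a tag and in the presence of the symmetric singleton typing introduced by \rn{Sngl-Self}; re-checking that these invariants are maintained under the new path-reduction and case rules is where I anticipate the bulk of the mechanized effort to lie.
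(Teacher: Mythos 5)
Your proposal matches the paper's approach: the paper's proof is the standard pDOT-style induction you describe, and the one genuinely new ingredient you single out --- that the tag base $r$ stored inside a looked-up object $\tNew r A x T d$ must itself resolve, under inertness and well-formedness --- is exactly the paper's Lemma~\ref{thm:tag-resolution} (Tag Resolution), which it identifies as the crux of the $\kCase$ case. Your observation that arbitrary-typed scrutinees are handled by single-step \rn{Resolve} rather than transitive lookup is also precisely the design point the paper makes, so there is nothing substantive to add.
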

An environment is \tnew{inert} if it only contains inert types, i.e. types that cannot introduce subtyping assumptions.
A type is inert if it is a lambda type or an object types with tight type members (see Appendix \ref{sec:soundness-proof-of-pdot}).
An environment $\G$ is \tnew{well-formed} if every path mentioned in $\G$ is typeable in $\G$.
Notably, \rn{Sngl-Self} is necessary to show Preservation.
For example, consider the following term:
\begin{equation*}
	\ttCase p {q.A} y {y} {t_2}
\end{equation*}
We assume that both $p$ and $q$ are resolved, the pattern is matched and $t_2$ is a term that can be typed as $\single p$.
We can type the case term $t$ as $\single p$ with the \rn{Case} rule.
Then, when reducing the case term, we apply the \rn{Case-Then} rule and reduce the term to $\tSubst y p y = p$.
Here, to prove preservation, $\typDft p {\single p}$ must be derivable, which is only made possible by the \rn{Sngl-Self} rule.

The Progress proof in the case for terms of the form $\ttCase{p}{{\resolvedDft}.A}{y}{t_1}{t_2}$
relies on \autoref{thm:tag-resolution}, which shows if $p$ resolves to an object $\tNew r A x T d$ then $r$ also can be resolved.
Otherwise, neither \rn{\redCaseThen} nor \rn{\redCaseElse} would apply, making the term stuck.
\begin{lemma}[Tag Resolution] \label{thm:tag-resolution}
  Let $\G$ be inert and well formed. If $p$ is typeable in $\G$, $\wf \G \sta$ and $\lookupStepDft{p}{\tNew{r}{A}{x}{T}{d}}$,
  then there exists $\resolvedDft$ such that $\lookupDft {r} {\resolvedDft}$.
\end{lemma}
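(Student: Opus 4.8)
The plan is to reduce tag resolution to the pDOT-style guarantee, reused in \calculus{}, that \emph{paths typeable with object types always resolve to a value}. The key observation is that the tag base $r$ appearing in the looked-up value $\tNew r A x T d$ is far from arbitrary: since this value is well typed, its tag projection $r.A$ must be a legal type, which forces $r$ to be typeable with a type declaring the type member $A$ --- that is, with an object type.

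First I would show that the value $v = \tNew r A x T d$ to which $p$ looks up is itself well typed. Since $\G$ is inert and well formed, $\wf \G \sta$, $p$ is typeable, and $\lookupStepDft p v$, the correspondence between lookup and typing provided by the pDOT infrastructure yields a type $U$ with $\typDft v U$. As $v$ is a tagged object value, its \emph{precise} typing is necessarily derived by \rn{\{\}-I} once subsumption is peeled off using the stratified precise-typing levels. Inverting that use of \rn{\{\}-I} exposes its premises, and in particular the tag-conformance premise $\typ {\extendG x T} x {r.A}$.

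Second, I would extract typing information about $r$ from this premise. The object identity of $v$ is $p$, so substituting the self-reference $x$ by $p$ gives $\typ \G p {(\tSubst x p r).A}$; write $r' = \tSubst x p r$. (For an external tag, $r$ does not mention $x$ and $r' = r$; for a self-tag such as the ``top tag'', $r' = p$, which already resolves reflexively to $v$.) Any derivation of a judgment of the shape $\typDft p {r'.A}$ must introduce the projection $r'.A$ through \rn{<:-Sel} or \rn{Sel-<:}, whose premise is $\typDft {r'} {\tTypeDec A S {T'}}$ for some $S$ and $T'$. Tracing this down through the precise typing levels therefore shows that $r'$ is typeable with a type declaring the member $A$, i.e. with an object type.

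Finally, I would invoke the resolution guarantee: in an inert, well-formed $\G$ with $\wf \G \sta$, a path typeable with an object type resolves to a value. Applied to $r' : \tTypeDec A S {T'}$, this gives some $\resolvedDft$ with $\lookupDft {r'} {\resolvedDft}$. Since the lookup of $r$ inside $v$ coincides with that of $r'$ after the object identity is substituted for the self-reference, we obtain the required $\resolvedDft$ with $\lookupDft r {\resolvedDft}$. I expect the main obstacle to be the bookkeeping of the middle step --- cleanly connecting the tag-conformance premise (stated for the bound self-reference $x$ in $\extendG x T$) to a usable object typing of $r'$ in the ambient $\G$, while threading everything through the precise typing levels so that the object-type resolution lemma actually applies.
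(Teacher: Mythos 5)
The paper does not actually include a written proof of this lemma: it is stated as part of the Progress argument and discharged only in the accompanying mechanized Coq development, so there is no in-text proof to compare your proposal against. Judged on its own, your plan has the right overall shape and leans on exactly the infrastructure the paper advertises --- the store/environment correspondence to type the looked-up value, inversion of the tag-conformance premise of \rn{\{\}-I} (resp.\ \rn{Def-New} for nested objects), and the pDOT-style guarantee that, in an inert, well-formed environment with a conforming store, a path typeable with an object type resolves to a value.

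The weak point is your middle step. The claim that any derivation of $\Gamma \vdash p : r'.A$ ``must introduce the projection $r'.A$ through \rn{<:-Sel} or \rn{Sel-<:}, whose premise is $\Gamma \vdash r' : \tTypeDec{A}{S}{T'}$'' is too strong as stated: the rules \rn{Sngl$_{pq}$-<:} and \rn{Sngl$_{qp}$-<:} can rewrite the \emph{prefix} of a type projection, so the selection rule may have been applied to some alias $s$ of $r'$ rather than to $r'$ itself, and \rn{Bot} can in principle assign any type to a term typed $\bot$. Both escapes are closed by the stratification (general typing must first be pushed down to the tight and invertible levels, where $\bot$-typed paths are ruled out in an inert environment and the replacement rules are inlined so that the aliasing chain becomes explicit), and aliased paths resolve together, so the argument survives --- but the case analysis is genuinely larger than your sketch suggests, and this is precisely where the proof lives. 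The remaining bookkeeping issue you already flag yourself: for nested objects the relevant premise is that of \rn{Def-New}, which is stated directly in the ambient environment with the self-reference already replaced by the object's path, so the statement one actually proves (and inducts on the lookup derivation for) concerns $\tSubst{x}{p}{r}$ rather than $r$; reconciling that with the literal $r$ in the lemma's conclusion is a matter of the formalization's conventions for stored objects rather than a mathematical obstacle.
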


\begin{wide-rules}
  \begin{multicols}{3}

\infrule[Typ-$<:$-Typ$_{\#\#}$]
{\new{\subTightDft {S_2} {S_1}}
	\andalso
	\new{\subTightDft {T_1} {T_2}}
}{
  \subInvDft {\tTypeDec A {S_1} {T_1}} {\tTypeDec A {S_2} {T_2}}}

\infrule[$<:$-Sel$_{\#\#}$]
{\typPrecThreeDft {p} {\tTypeDec A T T} \\
  \new{\subInvDft S T}}
{\subInvDft {\new S} {{p}.A}}

\infrule[Sel-$<:$$_{\#\#}$]
{\typPrecThreeDft {p} {\tTypeDec A T T} \\
  \new{\subInvDft T U}}
{\subInvDft {p.A} {\new U}}

\infrule[Sngl$_{pq2}$-$<:$$_{\#\#}$]
{\typPrecThreeDft p {\single q} \andalso \typeablePrecTwo q \\
 \subInvDft {\new S} T}
{\subInvDft {\new S} {\repl p q T}}

\newruletrue
\infrule[Sngl$_{pq1}$-$<:$$_{\#\#}$]
{\typPrecThreeDft p {\single q} \andalso \typeablePrecTwo q \\
 \subInvDft S T}
{\subInvDft {\repl p q S} T}
\newrulefalse

\infrule[Sngl$_{qp2}$-$<:$$_{\#\#}$]
{\typPrecThreeDft p {\single q} \andalso \typeablePrecTwo q \\
 \subInvDft {\new S} T}
{\subInvDft {\new S} {\repl q p T}}

\newruletrue
\infrule[Sngl$_{qp1}$-$<:$$_{\#\#}$]
{\typPrecThreeDft p {\single q} \andalso \typeablePrecTwo q \\
 \subInvDft S T}
{\subInvDft {\repl q p S} T}
\newrulefalse

  \end{multicols}

\normalsize
\caption{Invertible subtyping rules (interesting cases, with changes highlighted).}
\label{fig:invertible-subtyping-rules}
\end{wide-rules}

\subsubsection{Proving Inversion Rules with Invertible Subtyping}
\label{sec:invertible-subtyping}
When implementing the soundness proof, we add inversion rules only to the surface typing level, and keep the inner levels unchanged.
Then, we only need to re-prove the transformation lemma from general typing to the tight level.
This is to prove that the inversion rules are \emph{admissible} in tight typing when the environment is inert,
which is stated in the following two lemmas.
\begin{lemma}[Field Inversion] \label{thm:field-inversion}
	In an \tnew{inert} environment $\G$, if $\subTightDft U {\tFldDec a {T_2}}$ and
  $\uniqueFlow U {\tFldDec a {T_1}}$, then $\subTightDft {T_1} {T_2}$.
\end{lemma}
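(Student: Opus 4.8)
The plan is to route the argument through the \emph{invertible} subtyping judgment $\subInvDft{\cdot}{\cdot}$ of \autoref{fig:invertible-subtyping-rules}, which, unlike tight subtyping, carries no transitivity rule and is therefore amenable to inversion. First I would re-derive the hypothesis at the invertible level, turning $\subTightDft U {\tFldDec a {T_2}}$ into $\subInvDft U {\tFldDec a {T_2}}$; this tight-to-invertible transformation is exactly where the inertness of $\G$ is consumed, since it is what forbids the bad-bounds pathologies that would otherwise let a path-dependent type manufacture a spurious field. A useful observation up front is that the structural invertible rules take their component premises at the tight level (cf.\ the two $\subTightDft{\cdot}{\cdot}$ premises of \rn{Typ-$<:$-Typ$_{\#\#}$}), so the field subtyping I eventually extract will already have the form $\subTightDft{\cdot}{\cdot}$ required by the conclusion.

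I would then prove the strengthened statement ``if $\subInvDft U {\tFldDec a {T_2}}$ and $\uniqueFlow U {\tFldDec a {T_1}}$ then $\subTightDft {T_1} {T_2}$'' by induction on the invertible subtyping derivation, inverting the unique-membership derivation wherever the shape of $U$ demands it. The leaf cases (\rn{Refl} and the field rule analogous to \rn{Typ-$<:$-Typ$_{\#\#}$}) hand back $\subTightDft {T_1} {T_2}$ immediately, using that \rn{One-Fld} forces $T_1$ to be precisely the field content in question. When the derivation ends in an intersection-projection rule we have $U = \tAnd {U_1} {U_2}$ and the derivation continues with one conjunct, say $U_1$; I invert the membership derivation (\rn{And-Left}/\rn{And-Right}) to learn which conjunct actually carries the label $a$, and when it is $U_1$ I apply the induction hypothesis to the subderivation $\subInvDft {U_1} {\tFldDec a {T_2}}$ together with $\uniqueFlow {U_1} {\tFldDec a {T_1}}$.

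The main obstacle is this intersection step: I must rule out that the invertible derivation projects onto the ``wrong'' conjunct, i.e.\ one that does not mention $a$ (the disjointness of the label sets in \rn{And-Left}/\rn{And-Right} guarantees $a$ lives in exactly one side). For this I would establish an auxiliary no-phantom-membership lemma: in an inert environment, $\subInvDft V {\tFldDec a {T_2}}$ with $V$ of the shape admitted by unique membership forces $a$ to occur among the labels of $V$. The point is that a type built only from field declarations, type declarations, recursive types, and intersections can be an invertible subtype of a field type only through a genuine field of the same name — type and recursive declarations never subtype a field, and $\bot$ as well as path-dependent types are excluded precisely because a unique-membership derivation exists for $V$. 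This is where label uniqueness earns its keep, matching the paper's remark that $\subDft{\tAnd{\tFldDec a {T_1}}{\tFldDec a {T_2}}}{\tFldDec a {U}}$ cannot be inverted.

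A final routine-but-fiddly concern is the family of singleton rules such as \rn{Sngl$_{pq1}$-$<:$$_{\#\#}$}, which rewrite one side of the invertible judgment by a path replacement $\repl p q {\cdot}$. Since such replacements act only on the paths appearing inside a type and leave the field/intersection skeleton — and hence all labels — intact, the unique-membership structure transfers verbatim to the pre-replacement type; I can therefore apply the induction hypothesis to the premise judgment and transport the resulting tight subtyping back across the replacement using the corresponding tight-level singleton subtyping rules (in the spirit of \rn{Sngl$_{pq}$-$<:$} and \rn{Sngl$_{qp}$-$<:$}) together with \rn{Trans}. The companion Type Inversion statement would follow by the identical recipe, reading \rn{One-Typ} and \rn{Typ-$<:$-Typ$_{\#\#}$} in place of the field rules.
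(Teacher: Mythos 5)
Your proposal follows essentially the same route as the paper: convert the tight judgment to the invertible layer (Lemma~\ref{thm:tight-to-invertible}, where inertness is consumed), invert by induction on the transitivity-free invertible derivation, and return to tight subtyping via Lemma~\ref{thm:invertible-to-tight}, with the path-replacement cases handled by the inductive hypothesis exactly as in the paper's sketch. The only difference is one of presentation — you carry the general $\uniqueFlow U {\tFldDec a {T_1}}$ hypothesis through the induction and treat the intersection/label-disjointness bookkeeping explicitly, whereas the paper states its auxiliary invertible-level lemma for the special case $U = \tFldDec a S$ — but this is a refinement, not a different argument.
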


\begin{lemma}[Type Member Inversion] \label{thm:type-member-inversion}
	In an \tnew{inert} environment $\G$, if $\subTightDft U {\tTypeDec A {S_2} {T_2}}$ and
  $\uniqueFlow U {\tTypeDec A {S_1} {T_1}}$, then $\subTightDft {S_2} {S_1}$ and
  $\subTightDft {T_1} {T_2}$.
\end{lemma}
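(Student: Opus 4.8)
The plan is to route the argument through the auxiliary \emph{invertible subtyping} relation $\subInvDft$ of \autoref{fig:invertible-subtyping-rules}, a transitivity-free refinement of tight subtyping engineered so that subtyping into a member declaration can be inverted structurally. I would prove Field Inversion and Type Member Inversion uniformly and spell out the type member case. The argument has three movements: (i) transform the hypothesis $\subTightDft U {\tTypeDec A {S_2}{T_2}}$ into an invertible derivation $\subInvDft U {\tTypeDec A {S_2}{T_2}}$; (ii) invert that derivation, using $\uniqueFlow U {\tTypeDec A {S_1}{T_1}}$ to single out the relevant component of $U$; and (iii) read off the two conclusions, which the rule \rn{Typ-$<:$-Typ$_{\#\#}$} already states at the tight level.

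First I would establish the transformation lemma: in an inert environment, $\subTightDft S T$ implies $\subInvDft S T$. Since $\subInvDft$ carries no transitivity rule, the heart of this step is showing that transitivity (\emph{cut}) is admissible for $\subInvDft$ --- from $\subInvDft S T$ and $\subInvDft T U$ construct $\subInvDft S U$ --- by induction on the two derivations. Inertness is exactly what makes this go through: it forbids ``bad bounds'', so that whenever the selection rules \rn{$<:$-Sel$_{\#\#}$} and \rn{Sel-$<:$$_{\#\#}$} fire, the type member they mention carries precise, tight-coherent bounds ($\typPrecThreeDft p {\tTypeDec A T T}$), which prevents the spurious derivations that would otherwise collapse the relation. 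With cut available, \rn{Trans} of tight subtyping is discharged and the remaining tight rules map onto their invertible counterparts, whose recursive premises are themselves tight judgments.

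Next I would prove inversion at the invertible level: if $\subInvDft U {\tTypeDec A {S_2}{T_2}}$ and $\uniqueFlow U {\tTypeDec A {S_1}{T_1}}$, then $\subTightDft {S_2}{S_1}$ and $\subTightDft {T_1}{T_2}$, by induction on the invertible derivation. The unique-membership hypothesis forces $U$ to be an intersection of member declarations containing a \emph{single} $A$-labelled type declaration $\tTypeDec A {S_1}{T_1}$; this determinacy is precisely what licenses inversion, and its absence is what makes the $\subTightDft{\tFldDec a {T_1} \wedge \tFldDec a {T_2}}{\tFldDec a U}$ example unsound. The structural and singleton-replacement steps of the derivation are pushed through the induction --- a replacement preserves the label $A$ and its bounds up to path aliasing --- until the derivation isolates the $A$-component and must conclude by \rn{Typ-$<:$-Typ$_{\#\#}$}, whose premises are literally $\subTightDft {S_2}{S_1}$ and $\subTightDft {T_1}{T_2}$, the desired conclusions. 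No back-conversion is needed, since these premises are already stated at the tight level.

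I expect the main obstacle to be the cut-admissibility argument of the first movement, and in particular its interaction with the singleton-replacement rules \rn{Sngl$_{pq1}$-$<:$$_{\#\#}$}, \rn{Sngl$_{pq2}$-$<:$$_{\#\#}$} and their $qp$ counterparts: a path replacement performed on one premise of the cut must be commuted against, or re-applied within, the other premise, and these commutation cases --- together with the case where the two premises meet on a type projection $p.A$ via \rn{$<:$-Sel$_{\#\#}$} and \rn{Sel-$<:$$_{\#\#}$}, where the precise-typing side conditions and inertness do the real work --- are where the difficulty concentrates. This is also the genuinely new content relative to \pdot{}: its invertible machinery handled path typing but did not need to re-establish a general-to-tight transformation for the new subtyping inversion rules.
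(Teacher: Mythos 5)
Your proposal matches the paper's proof essentially step for step: both route the tight hypothesis through the transitivity-free invertible layer via the tight-to-invertible transformation lemma, both identify the \rn{Trans} case (discharged by inlining transitivity into the path-replacement and member rules) as the real difficulty, and both perform the inversion by induction on the invertible derivation with unique membership isolating the single $A$-labelled component, reading the conclusions directly off the tight premises of \rn{Typ-$<:$-Typ$_{\#\#}$}. No substantive differences or gaps to report.
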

$\subTightDft S T$ denotes tight subtyping, a notion inherited from \pdot{} (see Appendix \ref{sec:soundness-proof-of-pdot}).
Tight typing and inertness forbid the derivation of subtyping relations from type member bounds.
They can rule out absurd subtyping relations, such as $\tFldDec{a}{\forall(x: S)T} <: \tFldDec{a}{\bot}$, which cannot be inverted.

The major obstacle when proving subtyping inversion is \rn{Trans}.
For instance, if we try to invert the tight subtyping relation $\subTightDft {\tFldDec a S}
{\tFldDec a T}$ to get $\subTightDft S T$ by induction on its derivation,
we become stuck at the \rn{Trans} case.
There, we have $\subTightDft {\tFldDec a S} U$ and $\subTightDft U {\tFldDec a T}$ as premises
and the inductive hypothesis is not applicable since the shape of $U$ is unknown.

To tackle this problem, we add invertible subtyping as an additional layer (see \autoref{fig:invertible-subtyping-rules}).
In this layer, we remove \rn{Trans} and instead inline it in the path replacement and type member subtyping rules.
We have four rules for path replacement in invertible subtyping, compared to two in general and tight subtyping.
This is because general subtyping relies on transitivity to replace aliasing paths on the left hand side.
If $\subTightDft S T$ and $p$ aliases $q$, to derive $\subTightDft{\repl p q S} T$ we first need to derive $\subTightDft {\repl p q S} S$ by \rn{Sngl$_{qp}$-<:$_{\#}$} and then derive $\subTightDft {\repl p q S} T$ by \rn{Trans}.

Invertible subtyping can easily be inverted and inductively reasoned about.
Similarly to typing, we prove that tight subtyping can be transformed to invertible subtyping.
\begin{lemma}[$<:_{\#}$ to $<:_{\#\#}$] \label{thm:tight-to-invertible}
  If $\G$ is an inert typing environment and $\subTightDft S T$, then
  $\subInvDft S T$.
\end{lemma}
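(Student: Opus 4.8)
The plan is to prove \autoref{thm:tight-to-invertible} by induction on the derivation of $\subTightDft S T$, showing that the conclusion of every tight subtyping rule is reconstructible in the invertible layer of \autoref{fig:invertible-subtyping-rules}. The purely structural tight rules---\rn{Top}, \rn{Bot}, \rn{Refl}, \rn{And$_1$-$<:$}, \rn{And$_2$-$<:$}, \rn{$<:$-And}, \rn{Fld-$<:$-Fld} and \rn{All-$<:$-All}---each have a direct invertible counterpart (the non-highlighted invertible rules elided from the figure, which only lists the interesting cases), so for these I would simply apply the induction hypothesis to the subderivations and re-apply the matching invertible rule; the tight rule \rn{Typ-$<:$-Typ} maps to \rn{Typ-$<:$-Typ$_{\#\#}$} in the same fashion. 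Before running this induction I would establish the routine closure facts that invertible subtyping in an inert $\G$ is reflexive and is closed under these structural rules.

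The selection cases require a bridging lemma. The tight rules \rn{$<:$-Sel} and \rn{Sel-$<:$} only assume a \emph{tight} typing of the path $p$ at a declaration $\tTypeDec A S T$, whereas the invertible rules \rn{$<:$-Sel$_{\#\#}$} and \rn{Sel-$<:$$_{\#\#}$} demand a \emph{precise} (level-III) typing of $p$ at a declaration with \emph{equal} bounds, $\typPrecThreeDft p {\tTypeDec A T T}$. Inertness is exactly what licenses this conversion: in an inert environment a path cannot be ascribed a type member with non-tight bounds, so a tight typing of $p$ at a type declaration can be refined to a precise typing at equal bounds together with tight subtyping relating the declared and precise bounds. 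This is a pDOT-style lemma that I would lift to \calculus, checking in particular that the added \rn{Sngl-Self} rule does not perturb the precise-typing machinery. The tight singleton rules \rn{Sngl$_{pq}$-$<:$} and \rn{Sngl$_{qp}$-$<:$} likewise map onto the invertible singleton rules.

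The crux is the \rn{Trans} case: the induction hypothesis gives $\subInvDft S T$ and $\subInvDft T U$, but invertible subtyping deliberately has no transitivity rule, so I must prove that transitivity is \emph{admissible}, namely that $\subInvDft S T$ and $\subInvDft T U$ imply $\subInvDft S U$. I would prove this as a separate lemma by induction on the two invertible derivations with case analysis on their last rules. The structural combinations compose straightforwardly through the induction hypothesis, but the genuinely delicate cases are precisely those where transitivity has already been inlined---the selection rules \rn{$<:$-Sel$_{\#\#}$}/\rn{Sel-$<:$$_{\#\#}$} and the four singleton-replacement rules. This is why the invertible layer carries four singleton rules rather than the two of tight subtyping: the extra left-hand-side replacement rules \rn{Sngl$_{pq1}$-$<:$$_{\#\#}$} and \rn{Sngl$_{qp1}$-$<:$$_{\#\#}$} absorb the transitive steps that general and tight subtyping perform explicitly (as noted in the discussion preceding the lemma), so the transitivity proof can repackage a composed derivation using them instead of appealing to \rn{Trans}. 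I expect the main obstacle to be this admissible-transitivity lemma, and within it the interaction between a selection against a precise type member and an adjacent singleton replacement, where I must align the path being replaced with the path whose member is selected, leaning on the precise-typing premises to keep these paths matched.

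With admissible transitivity established, the \rn{Trans} case of the outer induction closes immediately, completing the proof. Throughout, I would verify that every side condition mentioning typing---in the selection and singleton cases---remains derivable under the inert, well-formed $\G$, since those premises are exactly what authorize the use of the precise-typing and path-aliasing reasoning on which the invertible rules depend.
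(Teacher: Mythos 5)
Your overall architecture---induction on the tight derivation, direct re-packaging of the structural rules, and isolating the \rn{Trans} case as the crux that the inlined-transitivity design of the invertible layer is meant to absorb---matches the intent the paper describes, and you correctly identify why four singleton-replacement rules are needed. However, your account of the selection cases misreads the source judgment: the \emph{tight} rules \rn{$<:$-Sel$_\#$} and \rn{Sel-$<:$$_\#$} (\autoref{fig:tight-subtyping-rules}) already carry precise Elim-III premises with \emph{equal} bounds, $\typPrecThreeDft p {\tTypeDec A S S}$ --- exactly what \rn{$<:$-Sel$_{\#\#}$} and \rn{Sel-$<:$$_{\#\#}$} demand. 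The bridging lemma you propose (refining a tight typing at possibly-unequal bounds into a precise typing at equal bounds) is the work done by the general-to-tight transformation under inertness (\autoref{thm:general-to-tight}), not by this lemma; here the selection cases close by supplying the extra invertible premise via \rn{Refl$_{\#\#}$}. Relatedly, \rn{Typ-$<:$-Typ$_{\#\#}$} keeps \emph{tight} premises rather than invertible ones, so that case needs no appeal to the induction hypothesis at all --- the tight subderivations are reused verbatim.

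On the \rn{Trans} case: proving full admissible transitivity of $<:_{\#\#}$ by a double induction is a viable but heavyweight route --- essentially a cut-elimination argument in which right-extending rules (\rn{$<:$-Sel$_{\#\#}$}, \rn{Sngl$_{pq2}$}, \rn{Sngl$_{qp2}$}) of the first derivation must be cancelled against left-extending rules (\rn{Sel-$<:$$_{\#\#}$}, \rn{Sngl$_{pq1}$}, \rn{Sngl$_{qp1}$}) of the second, using functionality of precise typing in an inert environment to align the selected members and replaced paths. The lighter and more standard alternative, mirroring how pDOT handles the analogous \rn{Sub} case for invertible \emph{typing}, is to generalize the statement to a closure lemma (an invertible derivation extended by one tight step remains invertible) proven by a single induction on the tight derivation; the \rn{Trans} case then closes by two applications of the induction hypothesis, and the delicate inversions are confined to the selection and singleton cases where they are licensed by the same precise-typing reasoning you mention. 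Either route works; you should just be aware that the double-induction formulation forces you to prove invertibility of the $_{\#\#}$ judgment up front rather than on demand.
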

\noindent
And it is easy to prove that we can transform invertible subtyping to tight subtyping.
\begin{lemma}[$<:_{\#\#}$ to $<:_{\#}$] \label{thm:invertible-to-tight}
	For any environment $\G$, if $\subInvDft S T$ then $\subTightDft S T$
\end{lemma}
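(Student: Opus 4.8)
The plan is to prove the lemma by a routine induction on the derivation of $\subInvDft S T$, re-deriving each invertible subtyping rule within tight subtyping. The conceptual content is minimal: invertible subtyping was obtained from tight subtyping precisely by deleting \rn{Trans} and inlining it into the selection and path-replacement rules, so the proof amounts to re-inserting exactly those transitivity steps that were inlined. Because none of the invertible rules impose any side condition on $\G$ beyond the typing premises (which are statements about $\turnstilePrecThree$ and $\turnstilePrecTwo$, not about the shape of $\G$), the translation will go through for an arbitrary environment, with no appeal to inertness, matching the stated generality of the lemma.

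The easy cases are the purely structural rules. Reflexivity and the \rn{Top}, \rn{Bot}, intersection, field, and function rules each have a tight counterpart of identical shape, so we apply the induction hypothesis to any invertible subderivations and reassemble with the matching tight rule. The case \rn{Typ-$<:$-Typ$_{\#\#}$} is even more immediate: its premises $\subTightDft{S_2}{S_1}$ and $\subTightDft{T_1}{T_2}$ are \emph{already} tight subtyping judgments, so no induction hypothesis is needed and we conclude directly by the tight version of \rn{Typ-$<:$-Typ}.

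The interesting cases are the selection and singleton rules of \autoref{fig:invertible-subtyping-rules}, and here a single transitivity step is reintroduced in each case. For \rn{$<:$-Sel$_{\#\#}$} we have $\typPrecThreeDft p {\tTypeDec A T T}$ and $\subInvDft S T$ and must show $\subTightDft S {p.A}$; the induction hypothesis yields $\subTightDft S T$, the precise premise is shared verbatim with the tight counterpart of \rn{$<:$-Sel} and gives $\subTightDft T {p.A}$, and one application of \rn{Trans} closes the goal. The case \rn{Sel-$<:$$_{\#\#}$} is symmetric, using the tight counterpart of \rn{Sel-$<:$}. The four singleton rules follow the same template: each combines an aliasing premise $\typPrecThreeDft p {\single q}$ with $\typeablePrecTwo q$ and an invertible subderivation, and we reconstruct the inlined transitivity by applying the induction hypothesis, then the appropriately oriented tight path-replacement rule, then \rn{Trans}. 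Crucially, the precise ($\turnstilePrecThree$) and level-II ($\turnstilePrecTwo$) premises are requirements of both the invertible and the tight rules, so they transfer without any auxiliary typing lemma.

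The main obstacle, though a minor one, is the bookkeeping in the path-replacement cases, where the orientation of the replacement must be tracked carefully. The invertible layer provides two variants per aliasing direction — for instance \rn{Sngl$_{pq1}$-$<:$$_{\#\#}$}, which replaces a path on the left-hand side, versus its $2$-variant, which replaces on the right — and one must select the correctly oriented tight rule so that the replaced type lands on the intended side of the relation. Concretely, to re-derive the conclusion $\subTightDft {\repl p q S} T$ of \rn{Sngl$_{pq1}$-$<:$$_{\#\#}$} from the induction hypothesis $\subTightDft S T$, we first obtain $\subTightDft {\repl p q S} S$ using the oppositely oriented tight rule \rn{Sngl$_{qp}$-$<:$$_{\#}$} and then apply \rn{Trans}, exactly as in the worked reasoning already given when motivating invertible subtyping. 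Once the four orientations are matched to the right tight rules, the remaining verification is entirely mechanical.
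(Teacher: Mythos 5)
Your proposal is correct and follows essentially the same route as the paper: a routine induction on the invertible derivation that re-inserts the inlined \rn{Trans} steps, with the only subtlety being the orientation of the path-replacement rules, which you resolve exactly as the paper does (deriving $\subTightDft{\repl p q S}{S}$ via the oppositely oriented tight rule \rn{Sngl$_{qp}$-$<:$$_{\#}$} before applying \rn{Trans}). The paper leaves this lemma as ``easy to prove'' and defers the details to the mechanization, but your case analysis, including the observation that no inertness assumption is needed and that the \rn{Typ-$<:$-Typ$_{\#\#}$} case needs no induction hypothesis because its premises are already tight, matches the intended argument.
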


\begin{lemma}[Field Inversion in $<:_{\#\#}$]
  If $\subInvDft {\tFldDec a S} {\tFldDec a T}$, then $\subInvDft S T$.
\end{lemma}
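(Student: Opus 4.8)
The plan is to proceed by induction on the derivation of $\subInvDft{\tFldDec a S}{\tFldDec a T}$, crucially exploiting the fact that invertible subtyping has no \rn{Trans} rule. In general subtyping, a field-to-field relationship could be obtained by chaining through an intermediate type $U$ of unknown shape, which is exactly what blocks a direct inversion (as the accompanying discussion of \rn{Trans} explains). In invertible subtyping, by contrast, transitivity has been inlined into the path-replacement and selection/type-member rules, so the last rule of the derivation is tightly constrained by the shape of its conclusion. I would therefore begin by enumerating which invertible rules can possibly conclude with a judgment whose left- and right-hand sides are both single field declarations with the same label $a$.

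First I would dispatch the easy cases. The rules targeting intersections (\rn{And}-style), selections (\rn{$<:$-Sel$_{\#\#}$}, \rn{Sel-$<:$$_{\#\#}$}), type members (\rn{Typ-$<:$-Typ$_{\#\#}$}), and \rn{Top}/\rn{Bot} cannot conclude a field-to-field judgment, since neither $\tFldDec a S$ nor $\tFldDec a T$ is syntactically an intersection, a selection, a type declaration, $\top$, or $\bot$; these cases are vacuous. The reflexivity rule forces $S = T$, so $\subInvDft S T$ follows by reflexivity. The invertible counterpart of \rn{Fld-$<:$-Fld} is the one informative structural case: its premise is exactly $\subInvDft S T$, which is the goal. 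This leaves only the four path-replacement rules as the cases requiring real work.

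For the path-replacement cases (\rn{Sngl$_{pq1}$-$<:$$_{\#\#}$}, \rn{Sngl$_{qp1}$-$<:$$_{\#\#}$}, \rn{Sngl$_{pq2}$-$<:$$_{\#\#}$}, \rn{Sngl$_{qp2}$-$<:$$_{\#\#}$}), the conclusion is obtained by replacing one aliasing path by another on one side of a subderivation. The key observation is that path replacement commutes with the field-type constructor, i.e. $\repl p q {\tFldDec a U} = \tFldDec a {\repl p q U}$, and more fundamentally that replacement never alters the top-level type constructor: if $\repl p q {T'}$ is a field declaration with label $a$, then $T'$ was already a field declaration $\tFldDec a {T'_0}$ with the same label. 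Using this, in each replacement case I would decompose the premise back into a strictly smaller field-to-field judgment (e.g. for \rn{Sngl$_{pq2}$-$<:$$_{\#\#}$}, from the conclusion $\subInvDft{\tFldDec a S}{\repl p q {\tFldDec a {T'_0}}}$ the subderivation is $\subInvDft{\tFldDec a S}{\tFldDec a {T'_0}}$), apply the induction hypothesis to obtain a relation between the field payloads, and then re-apply the very same replacement rule (its path-typing side conditions are carried over unchanged) to re-wrap the result and recover $\subInvDft S T$.

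The main obstacle is precisely these replacement cases: everything hinges on the structure-preservation property of $\repl{\cdot}{\cdot}{\cdot}$, namely that substituting one path for another cannot create or destroy a field declaration at the head of a type and leaves its label untouched. This is intuitively clear but must be verified carefully against the definition of path replacement, since it is what lets the induction hypothesis fire on a genuinely smaller field-to-field subderivation and justifies re-applying the replacement rule to the payload rather than the whole field. Once this commutation fact is in hand, the remaining reasoning is routine, and the lemma follows; it then serves as the payload-level engine behind the more general field inversion argument that additionally threads in the unique-membership relation $\uniqueFlow{\cdot}{\cdot}$.
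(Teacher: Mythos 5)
Your proposal is correct and follows essentially the same route as the paper: induction on the invertible-subtyping derivation, with \rn{Refl$_{\#\#}$} and \rn{Fld-$<:$-Fld$_{\#\#}$} handled trivially and the four path-replacement rules discharged via the induction hypothesis plus re-application of the same rule. Your explicit justification that path replacement commutes with the field-type constructor (and preserves the label) is exactly the detail the paper's sketch leaves implicit under ``the inductive hypothesis and the typing rules.''
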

\emph{Proof sketch.}
We can prove the theorem by induction on the derivation of the judgment.
The \rn{Refl$_{\#\#}$} and \rn{Fld-<:-Fld$_{\#\#}$} cases are trivial.
    In the first case we get $S = T$.
    The goal can be proven by applying the reflexivity rule.
    In the second case the goal is already proven.
The other four cases from path replacement can be proven with the inductive
hypothesis and the typing rules.

\section{Translating Constraint-Based GADTs into \texorpdfstring{\calculus{}}{cDOT}}
\label{sec:validation-encoding}
In this section, we demonstrate the expressiveness of subtyping reconstruction by proving that it subsumes generalised algebraic data types (GADTs).

\newcommand\ts{\vdash}
\newcommand\entl{\vDash}

\subsection{Introducing GADTs}
Algebraic Data Types (ADTs) are a feature of ML-like languages which allow representing data types that can be created with a set of \emph{constructors} known in advance.
They are similar to class hierarchies one level deep, like @Expr@ from \autoref{fig:motiv-example}; ADT constructors correspond to subclasses of @Expr@.
However, every constructor of a polymorphic ADT must have the exact same type parameters as the ADT itself.
To correspond to an ADT, each subclass of @Expr<A>@ would need to be generic in @A@ and would need to extend @Expr<A>@, a restriction which is violated by @IntLit@.
GADTs lift this restriction: GADT constructors may have arbitrary type parameters, and may apply the GADT type constructor to arbitrary types.
In particular, @Expr<A>@ can be faithfully represented as a GADT.

\citeauthor{xi2003} studied GADTs in the ML setting and presents the GADT calculus \Xis{}.
Given an appropriate definition of {\tt Expr}, we can express the {\tt eval} example in \Xis{} as follows:
\begin{align*}
  \ident{eval} &: \ident{Expr}\, \alpha \rightarrow \Int \\
  \ident{eval} &= \Lambda\, \alpha.\, \lambda x : \ident{Expr}\, \alpha.\, \kCase x \kOf \ident{IntLit}(i) => i \;|\; \dots
\end{align*}

To support GADTs, the typing judgment in \Xis{} keeps track of a {\it constraint} $\Delta$;
the judgment itself has the form $\Delta;\Pi |- t : \tau$, where $\Pi$ stands for a typing context in \Xis{}.\footnote{
  \citeauthor{xi2003} used $\Gamma$ for their environments as expected. We use $\Pi$ instead, to avoid reusing the $\Gamma$ metavariable.}
Pattern matching on GADTs extends $\Delta$ with additional information.
As long as the constraint {\it entails} equivalence between some types (denoted as $\Delta \entl \tau_1 \equiv \tau_2$),
they are interchangeable to the typing judgment.
For instance, in the above $\ident{eval}$ example, the constraint $\Delta$ used to type the $\ident{IntLit}$ branch would contain $\alpha \equiv \Int$,
which we could use to type $i$ as $\alpha$ instead of $\Int$.
The entailment relationship in \Xis{} is based on substitution:
we have $\Delta \entl \tau_1 \equiv \tau_2$ if for any $\theta$ that solves $\Delta$, we also have $\theta\tau_1 \equiv \theta\tau_2$.
To illustrate, given $\ident{List}\, \alpha \equiv \ident{List}\, \Int \in \Delta$, we would also have $\Delta \entl \alpha \equiv Int$.

Our notion of subtyping reconstruction subsumes GADTs from \Xis{}: it supports subtyping and class hierarchies of arbitrary depth.
\calculus{} does not need an additional context $\Delta$, since its bindings can introduce subtyping assumptions which later can be inverted (\autoref{sec:inversion-subtyping}).
In order to make this claim precise and validate it, we show an encoding of a variant of \Xis{} into \calculus.

\subsection{A Variant of \texorpdfstring{\Xis}{the GADT System}}
\label{sec:variant-of-xi}

We have adjusted \Xis{} by removing those features which are not relevant for the purpose of demonstrating that \calculus{} can encode GADTs.
In order of significance, the changes are:

\subsubsection*{Adjusted constraint reasoning.}
Besides the constraint-based entailment, \Xis{} also features an equivalent relation $\Delta |- \tau_1 \equiv \tau_2$ defined syntactically \cite[Fig. 7]{xi2003}.
Our variant of \Xis{} uses a restricted version of this relation, which we denote as $|-'$.
Our changes are limited to removing rules that allow ex-falso reasoning, as well as reasoning about function and universal types.
We give the exact rules in the appendix.

Ex-falso rules allow deriving arbitrary type equivalences from a contradictory constraint.
Branches which introduce contradictory constraints would never be entered at runtime, since they correspond to patterns that cannot match.
Such cases could in principle be handled while preserving their semantics by replacing the branch bodies with a loop, for instance.
We excluded these rules, as programming languages that support GADTs (for instance OCaml, Haskell, Agda) typically do not allow typing arbitrary code if a branch cannot be entered.

Similarly, we have excluded the rules that allow reasoning based on equality of function and universal types.
Again, such reasoning is not core to GADTs and not all implementations of GADTs allow it
(for instance, Agda and Idris don't invert function types
and Haskell does not invert polymorphic types).
We still allow inverting equalities of GADTs and tuples, for example given $\alpha * \beta \equiv \gamma * \delta \in \Delta$ we can infer $\Delta |-' \alpha \equiv \gamma$.

To show what the rules defining $\vdash'$ look like, we present two of them below.
One allows replacing a type with a variable equal to it, another allows inverting GADT equality.

\vspace{0.5em}
\begin{minipage}{0.5\linewidth}
  \infrule{
    \alpha \text{ has no free occurrences in } \tau \\
    \vec{\alpha}, \Delta[\alpha \mapsto \tau] \vdash' \tau_1[\alpha \mapsto \tau] \equiv \tau_2[\alpha \mapsto \tau]
  }{
    \vec{\alpha}, \tau \equiv \alpha, \Delta \vdash' \tau_1 \equiv \tau_2
  }
\end{minipage}%
\begin{minipage}{0.5\linewidth}
\begin{prooftree}
  \AxiomC{$\vec{\alpha}, \vec{\tau_A} \equiv \vec{\tau_B} \vdash' \tau_1 \equiv \tau_2$}
  \UnaryInfC{$\vec{\alpha}, (\vec{\tau_A}) T \equiv (\vec{\tau_B}) T \vdash' \tau_1 \equiv \tau_2$}
\end{prooftree}
\end{minipage}

\subsubsection*{Deterministic reduction.}
\Xis{} reduces $\kCase$ forms non-deterministically if multiple patterns %
match, which is a strange choice of evaluation semantics.
Handling non-determinism in \calculus{} would be quite involved, as we would either need to modify the evaluation rules of pDOT to be non-deterministic or encode non-determinism.
While either choice is an interesting problem, non-determinism is orthogonal to GADTs and reduction is deterministic in our variant of \Xis{}.

\subsubsection*{Exhaustive and not-nested patterns.}
In contrast to \citeauthor{xi2003}, our variant of \Xis{} disallows nested patterns and requires $\kCase$ forms to match on all GADT constructors.
Nested patterns can be desugared to remove the nesting.
Many languages, like Coq for example \cite{coqpatterns}, desugar such matches to make compilation into primitive operations easier.
See \cite{DecisionTrees} for an efficient algorithm converting nested pattern matches into decision trees.

Original \Xis{} allowed inexhaustive pattern matches, which were allowed to get stuck.
We instead require patterns to be exhaustive; verifying this is simple without nested patterns.

\subsection{Metatheory}
In order to validate our variant of \Xis, we provide a mechanised proof of soundness, expressed as standard progress and preservation theorems.
Thanks to the additional exhaustivity requirement we added to pattern matching, well-typed terms never get stuck:
we can prove progress, which did not hold in the original formulation.
The mechanized proof was originally described by \citet{RadekThesis} and is attached as one of the artifacts of this paper.

\begin{restatable}[Preservation]{thm}{metaPreserv}
  For any well typed term $\Delta; \Pi \vdash e : \tau$,
  if $e \longrightarrow e'$,
  then the other term is also well-typed and has the same type: $\Delta; \Pi \vdash e' : \tau$.
\end{restatable}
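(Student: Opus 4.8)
The plan is to prove subject reduction by induction on the derivation of $e \longrightarrow e'$, inverting the typing derivation $\Delta; \Pi \vdash e : \tau$ at each step. The congruence steps coming from the evaluation-context rules are discharged immediately by the induction hypothesis, together with a routine lemma stating that replacing a well-typed subterm by another term of the same type inside a context preserves the type of the whole. All the content therefore lies in the base reduction rules, and before attacking them I would establish the standard structural lemmas. First, a \emph{term substitution} lemma: if $\Delta; \Pi, x : \tau' \vdash e : \tau$ and $\Delta; \Pi \vdash v : \tau'$, then $\Delta; \Pi \vdash e[x \mapsto v] : \tau$. Second, a \emph{type substitution} lemma: for any type substitution $\theta$, if $\Delta; \Pi \vdash e : \tau$ then $\theta\Delta; \theta\Pi \vdash \theta e : \theta\tau$. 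The type substitution lemma rests on showing that the restricted equivalence $\vdash'$ is itself closed under substitution, i.e.\ $\Delta \vdash' \tau_1 \equiv \tau_2$ implies $\theta\Delta \vdash' \theta\tau_1 \equiv \theta\tau_2$, proved by induction on the $\vdash'$ derivation; it is precisely here that having dropped the ex-falso and the function/universal-type rules pays off, since every remaining rule is a congruence or GADT/tuple-inversion step that commutes cleanly with $\theta$.

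With these lemmas available, the ordinary redexes are routine. For $\beta$-reduction $(\lambda x : \tau_0.\, e)\, v \longrightarrow e[x \mapsto v]$ I invert the application typing and apply term substitution; for type application $(\Lambda \alpha.\, e)\,[\sigma] \longrightarrow e[\alpha \mapsto \sigma]$ I invert and apply type substitution with $\theta = [\alpha \mapsto \sigma]$; projections out of tuple values follow from inversion plus a canonical-forms lemma characterising the shape of values at each type.

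The interesting case is $\kCase$ reduction. Suppose the scrutinee is a constructor value $C(\vec v)$ of type $T\,\vec\tau$ and that matching selects the branch $C(\vec x) \Rightarrow e_C$. By the typing of the $\kCase$ form, $e_C$ was checked under the constraint $\Delta$ \emph{augmented} with the fresh existential variables $\vec\beta$ of $C$ and the equations that matching $C$ introduces, which for a constructor declared $C : \forall \vec\beta.\, \vec\sigma \to T\,\vec{\tau_C}$ have the form $T\,\vec{\tau_C} \equiv T\,\vec\tau$. I would first apply a constructor-inversion (canonical forms) lemma to the derivation of $\Delta; \Pi \vdash C(\vec v) : T\,\vec\tau$: the value was built at concrete witness types $\vec\rho$, so $\Delta; \Pi \vdash v_i : \sigma_i[\vec\beta \mapsto \vec\rho]$ for each argument, and moreover $\Delta \vdash' T\,\vec{\tau_C}[\vec\beta \mapsto \vec\rho] \equiv T\,\vec\tau$. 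Instantiating the existentials through $\theta = [\vec\beta \mapsto \vec\rho]$ turns each match-introduced equation into one that is $\vdash'$-derivable from $\Delta$, so by a constraint-cut lemma (if every equation of $\Delta'$ is entailed by $\Delta$, then $\Delta'; \Pi \vdash e : \tau$ gives $\Delta; \Pi \vdash e : \tau$) the augmented constraint collapses back to $\Delta$; applying the type substitution lemma and then term substitution for $\vec x \mapsto \vec v$ yields $\Delta; \Pi \vdash e_C[\vec x \mapsto \vec v] : \tau$, as required.

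I expect this last step to be the main obstacle, because it is where the static GADT discipline and the operational witnesses must be reconciled: one must verify that the witness types $\vec\rho$ supplied by the value genuinely solve the constraint introduced at pattern-matching time, that the branch result type $\tau$ (which does not mention the existentials $\vec\beta$) is left fixed by $\theta$, and that the constraint-cut and $\vdash'$-substitution lemmas compose correctly. The exhaustivity and no-nesting restrictions adopted for this variant of \Xis{} are what make the step tractable: together with determinism of reduction they guarantee that every constructor value matches exactly one branch, so there is a unique, well-defined body $e_C$ to analyse and no stuck or ambiguous configurations to rule out.
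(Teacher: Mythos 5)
Your proposal is correct and follows the standard subject-reduction recipe (induction on the reduction step, term- and type-substitution lemmas, constructor inversion, and a constraint-cut argument for the $\kCase$ redex), which is exactly the structure of the paper's proof; note that the paper does not spell this proof out in the text but defers it to the mechanization and the cited thesis. Your identification of the $\kCase$ case as the crux --- instantiating the pattern's existentials with the witness types carried by the value so that the match-introduced equations become $\vdash'$-derivable and can be cut --- matches the essential content of that mechanized argument.
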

\begin{restatable}[Progress]{thm}{metaProg}
  For any well typed term $\Delta; \Pi \vdash e : \tau$,
  the term is either a value or it can be reduced further: there exists a term $e'$ such that $e \longrightarrow e'$.
\end{restatable}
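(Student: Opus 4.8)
The plan is to prove Progress by induction on the typing derivation $\Delta;\Pi \vdash e : \tau$, specialised to closed terms (i.e.\ with empty term context $\Pi$); the constraint $\Delta$ restricts only typing and never drives reduction, so it may remain arbitrary throughout. Before running the induction I would establish a \emph{canonical forms} lemma characterising the shape of closed values at each type former: a closed value of a function type $\sigma \rightarrow \tau$ is a $\lambda$-abstraction, a value of a universal type $\forall\alpha.\,\tau$ is a type abstraction $\Lambda\alpha.\,v$, a value of a GADT type $(\vec{\tau})\,T$ is a saturated constructor application $c(\vec v)$, and (where products are present) a value of a product type is a tuple of values. Each of these follows by inversion on the value typing derivation, discarding the non-introduction rules.

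With canonical forms in hand, the induction splits into introduction forms and elimination forms. Variables cannot occur since $\Pi$ is empty; $\lambda$-abstractions, type abstractions, and fully-applied constructors are already values, so these cases are immediate. For each elimination form --- application $e_1\,e_2$, type application $e\,[\tau]$, a $\kCase\,e\,\kOf\,\dots$ analysis, and any other eliminator such as tuple projection --- I apply the induction hypothesis to the subterm in evaluation position, which is typed under the same $\Pi$ and $\Delta$ and is therefore closed. If that subterm can step, the whole term steps by the corresponding congruence rule; otherwise it is a value, and the matching canonical-forms lemma supplies its shape, so the appropriate $\beta$-style redex fires (e.g.\ an application whose head is a value is a $\lambda$, hence $\beta$-reduces, and a type application of a $\Lambda$ reduces by type substitution). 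Determinism of our variant's reduction relation guarantees a single well-defined next step. Crucially, the induction hypothesis is only ever invoked on subterms in evaluation position, never on branch bodies (which live under an extended $\Pi$ and are not closed), so the argument never needs to reduce under a binder.

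The crux is the $\kCase$ form, which is exactly where Progress failed in the original \Xis{}. Once the induction hypothesis shows the scrutinee is a value, canonical forms at the GADT type forces it to be a constructor application $c(\vec v)$. Because our variant requires \emph{exhaustive} matches, the case provides a branch for every constructor of the datatype, so a branch guarding on $c$ necessarily exists; and because we forbid \emph{nested} patterns, this branch has the simple shape $c(\vec x) => e_c$, making selection and the substitution of $\vec v$ for $\vec x$ immediate. This is the heart of the argument: exhaustivity rules out the ``no matching branch'' stuck state that the original calculus permitted, while non-nestedness keeps the match deterministic and the redex trivial to fire.

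I expect the main obstacle to be precisely this $\kCase$ case --- specifically, formally connecting the exhaustivity side-condition on the case typing rule with the existence of a branch matching the canonical constructor, and threading the constructor's identity through the canonical-forms inversion so that the chosen branch and the substitution line up. The remaining cases (congruence bookkeeping and the $\beta$ and type-application steps) are routine. Since $\Delta$ never participates in reduction, its possible inconsistency is irrelevant here --- consistency only matters for Preservation --- so no reasoning about $\vdash'$ or the removed ex-falso rules is needed for Progress. As with the rest of the development, I would carry this argument out in the mechanised Coq proof attached as an artifact.
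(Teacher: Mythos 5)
Your overall strategy---induction on the typing derivation with a canonical-forms lemma, using exhaustivity of matches to supply the branch whose absence made the original \Xis{} get stuck---is exactly the approach the paper takes: the paper gives no written proof, only the remark that the added exhaustivity requirement is what makes Progress recoverable, and defers the details to the mechanization. On the structure of the argument you and the authors agree, including the observation that the induction hypothesis need only be applied to subterms in evaluation position.

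There is, however, one genuine gap: your claim that ``$\Delta$ may remain arbitrary throughout'' and that ``no reasoning about $\vdash'$ or the removed ex-falso rules is needed for Progress.'' The canonical-forms lemma cannot be established by ``discarding the non-introduction rules,'' because the conversion rule (\textbf{ty-eq}) is neither an introduction nor an elimination and can retype a value at any $\vdash'$-equivalent type. To conclude that a closed value of arrow type is a $\lambda$-abstraction---and not, say, a constructor application retyped via an equation $(\vec{\tau})\,T \equiv \sigma \to \sigma'$ sitting in $\Delta$---you must show that $\vdash'$ never relates types with distinct head constructors. That is a consistency property of $\vdash'$, and it holds precisely because of the modifications you dismiss as irrelevant: the ex-falso rules were removed, and a constraint containing a head-mismatched equation cannot be decomposed by the remaining rules, which renders $\vdash'$ vacuous under it and makes \textbf{ty-eq} inert. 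So the removed rules are not orthogonal to Progress; they are exactly what makes the \textbf{ty-eq} case of each canonical-forms inversion go through. Once you state and prove this head-preservation lemma for $\vdash'$ and thread it through the canonical-forms inversions, the rest of your plan (congruence steps, $\beta$-redexes, and branch selection via exhaustivity and non-nestedness) is sound.
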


\subsection{Encoding}
\label{sec:encoding}

To encode \Xis{} in \calculus, we emulate GADTs using object types equipped with type members
which will serve the purpose of encoding type parameters and evidence for type equalities
together with a visitor function (analogous to Scott-encoding \cite[Sec. 5.2]{stump2009}) which will allow to emulate exhaustive pattern matching together with refinement of type equalities.
As our encoding does not depend on $\kCase$ forms, it remains applicable to other systems in the DOT family which may potentially support subtyping reconstruction, such as gDOT \cite{giarrusso2020}.

We denote an encoding of a term $e$ as $\E(e, \Theta)$ and an encoding of a type $\tau$ as $\T(\tau, \Theta)$.
Both definitions depend on a substitution $\Theta$ mapping \Xis{} type names $\alpha$ to type projections $p.A$ and \Xis{} variable names $x$ to \calculus{} expressions.
For example, we may have: $\Theta = [\alpha \mapsto x_{\alpha}.T, x \mapsto y]$.

In order to encode a complete \Xis{} program, we define a function to encode the signature $\Sigma$ and a term \texttt{lib} representing basic primitives like the unit and tuple types.
The signature $\Sigma$ defines what GADTs definitions a program may use:
it maps GADT constructors to their signatures, for example: $\Sigma(c_j) = \forall \vec{\beta_j}. \tau_j \to (\vec{\sigma_j}) T$.
We show a full definition in the appendix.
Overall, encoding of a \Xis{} program $e$ under a signature $\Sigma$ is defined as follows:

\begin{lstlisting}[mathescape=true,basicstyle=\small\ttfamily]
  EncodeProgram(e, $\Sigma$) =
    $\textbf{let}$ lib = ... $\textbf{in}$ $\textbf{let}$ env = EncodeSigma($\Sigma$) $\textbf{in}$ $\E$(e, $\varnothing$)
\end{lstlisting}

\noindent
\autoref{fig:expr-translation} shows how we would translate a \Xis{} version of @Expr@ from \autoref{fig:motiv-example} into \calculus.

\begin{figure}
  \begin{align*}
    \Let\ &\lib\ = \nu(\lib)[\lib.A]\{ \\
    & A = \top; \\
    & \ident{Expr} = \mu(x: \tTypeDec{X}{\bot}{\top} \wedge \tFldDec{\ident{pmatch}}{\forall(r: \tTypeDec{R}{\bot}{\top}) \forall(\ident{litIntCase}: \lib.\ident{IntLit} \rightarrow {r.R})\, {r.R}}); \\
    & \ident{IntLit} = \mu( \\
    & \quad x: \tTypeDec{X}{\Int}{\Int} \wedge \tFldDec{\ident{data}}{\lambda(\_)\, x.X} \\
    & \quad \quad\wedge \tFldDec{\ident{pmatch}}{\forall(r: \tTypeDec{R}{\bot}{\top}) \forall(\ident{litIntCase}: \lib.\ident{IntLit} \rightarrow {r.R})\, {r.R}} \\
    &); \\
    & \ident{mkIntLit}\colon \forall(x: \Int)\, \lib.\ident{IntLit}; \\
    & \ident{mkIntLit} = \lambda(x: \Int) \nu(y)[\lib.\ident{IntLit}]\{ \\
    & \quad X = \Int; \ident{data}: \forall(\_)\, y.X; \ident{data} = \lambda(\_)\, x; \\
    & \quad \ident{pmatch}\colon {\forall(r: \tTypeDec{R}{\bot}{\top}) \forall(\ident{litIntCase}: \lib.\ident{IntLit} \rightarrow {r.R})\, {r.R}} \\
    & \quad \ident{pmatch} = {\lambda(r: \tTypeDec{R}{\bot}{\top}) \lambda(\ident{litIntCase}: \lib.\ident{IntLit} \rightarrow {r.R})\  \ident{litIntCase}\ {y}} \\
    & \}; \\
    & \ident{eval}\colon \forall(x: \lib.\ident{Expr})\, x.X; \\
    & \ident{eval} = \lambda(x: \lib.\ident{Expr})\, \ttLet{y}{\lambda(y: \lib.\ident{IntLit})\ {y.\ident{data}}}{{x.\ident{pmatch} \ y}} \\
    &\!\!\!\!\!\!\!\!\! \} \ \In \ \cdots
  \end{align*}
  \caption{Translating \lstinline!Expr! from \Xis{} into \calculus{}}
  \label{fig:expr-translation}
\end{figure}

\subsection{Typing Preservation}
\label{sec:typing-preservation}

To prove typing preservation, we first define a notion of correspondence between environments in \Xis{} and \calculus{}.
We define \calculus{} type equality as follows:

\begin{restatable}[Type equality in \calculus{}]{definition}{typEqualityCDOT}
  We introduce a shorthand notation $\Gamma \vdash A =:= B$ which just stands for $\Gamma \vdash A <: B \wedge \Gamma \vdash B <: A$.
\end{restatable}

Then we formalize that our environment contains definitions of \texttt{lib} and \texttt{env} as expected by the described encoding.

\begin{restatable}[Signature encoding]{definition}{sigEncoding}
  \label{def:sigEnc}
  For any $\lambda_{2, G\mu}$ signature $\Sigma$ we denote by $\mathcal{S}(\Sigma)$ the types defined by its \calculus{} encoded counterpart.
  $\mathcal{S}(\Sigma)$ contains two variables:
  \begin{itemize}
    \item $\lib$ which is defined as in the encoding, giving us the builtin types,
    \item $\env$ which encodes the signature $\Sigma$: for every $\Sigma(c_i) = \forall \vec{\alpha}. \tau_A \to \tau_B$ we have:
      \begin{align*}
        &\env.c_i : \forall(ts: \{ B_{i,1}: \bot..\top; \ldots \}) \; \forall(v: \T(\tau_A, \theta_{\vec{\alpha}})) \; \T(\tau_B, \theta_{\vec{\alpha}}) \\
        &\qquad \text{where}\; \theta_{\vec{\alpha}} = [\alpha_1 \mapsto ts.B_{i,1}, \ldots, \alpha_m \mapsto ts.B_{i,m}]
      \end{align*}
  \end{itemize}
\end{restatable}

Now we can define a property specifying that all equations from the original environment $\Delta$ are also analogously emulated in the encoded environment $\Gamma$.

\begin{restatable}[Equational correspondence $\Psi(\Delta; \Theta; \Gamma)$]{definition}{eqCorr}
  A \calculus{} environment $\Gamma$ together with a substitution $\Theta$ satisfies equations of $\Delta$ if the following conditions are met:
  \begin{enumerate}
    \item For each $\alpha \in \Delta$, for $\Theta(\alpha) = p.A$, $\Gamma \vdash p : \{ A : \bot .. \top \} $.
    \item For each $\tau_1 \equiv \tau_2 \in \Delta$, $\Gamma \vdash \T(\tau_1, \Theta) =:= \T(\tau_2, \Theta)$.
    \item $\Gamma$ contains the bindings \texttt{lib} and \texttt{env} as defined in $\mathcal{S}(\Sigma)$.
  \end{enumerate}
\end{restatable}

We also introduce a similar notion for data environments: every variable in the original environment $\Pi$ also has its encoded counterpart, typeable to the encoded variant of its original type.
We gather both notions in a single property:

\begin{restatable}[Environment correspondence $\Phi(\Delta; \Pi; \Theta; \Gamma)$]{definition}{envCorr}
  A \calculus{} environment $\Gamma$ corresponds to $\lambda_{2, G\mu}$ contexts $\Delta$ and $\Pi$ under a substitution $\Theta$
  if $\Psi(\Delta; \Theta; \Gamma)$ and for each $\mathit{xf}$ such that $\Pi(\mathit{xf}) = \tau$, we have $\Gamma \vdash \Theta(\mathit{xf}) : \T(\tau, \Theta)$.
\end{restatable}

With all that, we can state the main theorem.

\begin{restatable}[Type Preservation]{thm}{encTypPres}
  \label{thm:typepreservation}
  For every $\Pi$, $\Delta$, $e$, $\tau$,
  any $\Theta$ such that $\mathit{fv}(e) \cup \mathit{fv}(\tau) \subseteq \mathit{dom}(\Theta)$,
  any $\Gamma$ such that $\Phi(\Delta; \Pi; \Theta; \Gamma)$,
  we have $\Gamma \vdash \E(e, \Theta) : \T(\tau, \Theta)$.
\end{restatable}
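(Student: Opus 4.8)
The plan is to prove Theorem~\ref{thm:typepreservation} by structural induction on the typing derivation $\Delta; \Pi \vdash e : \tau$ of the original \Xis{} term. The induction hypothesis states that for every well-typed subterm, and for every $\Theta$ and corresponding $\Gamma$ (satisfying $\Phi(\Delta; \Pi; \Theta; \Gamma)$), the \calculus{} encoding of the subterm has the encoded type. For each typing rule of \Xis{}, I would unfold the definitions of $\E(e, \Theta)$ and $\T(\tau, \Theta)$ on that syntactic form, then construct a \calculus{} typing derivation for the result, appealing to the induction hypotheses for subterms and to the correspondence property $\Phi$ for the free variables and type equations in scope.

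The routine cases are variables, constructors, and function application. For a variable $\mathit{xf}$, the goal $\Gamma \vdash \Theta(\mathit{xf}) : \T(\tau, \Theta)$ follows almost immediately from the data-environment clause of $\Phi$. For a constructor application $c_i(\ldots)$, I would instantiate the encoded $\env.c_i$ method (whose type is spelled out in Definition~\ref{def:sigEnc}) at the appropriate type members and argument, using \rn{All-E} twice and the induction hypothesis on the argument, together with the signature-encoding clause of $\Phi$. Throughout, I expect to lean heavily on \emph{subtyping reconstruction}: wherever \Xis{} silently uses a constraint $\tau_1 \equiv \tau_2 \in \Delta$ to retype a term (via the entailment relation $\vdash'$), the corresponding \calculus{} step must invoke clause~(2) of the equational correspondence $\Psi$, which gives $\Gamma \vdash \T(\tau_1, \Theta) =:= \T(\tau_2, \Theta)$, followed by \rn{Sub}. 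I would therefore first establish a supporting lemma showing that whenever $\Delta \vdash' \tau_1 \equiv \tau_2$, any $\Gamma$ with $\Psi(\Delta; \Theta; \Gamma)$ satisfies $\Gamma \vdash \T(\tau_1, \Theta) =:= \T(\tau_2, \Theta)$; this is proved by a separate induction on the derivation of $\vdash'$, using the inversion rules \rn{Typ-<:-Typ-Inv$_1$} and \rn{Typ-<:-Typ-Inv$_2$} to handle the GADT-inversion rule of $\vdash'$ and \rn{Sngl$_{pq}$-<:}/\rn{Sngl$_{qp}$-<:} for the variable-substitution rule.

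The main obstacle is the $\kCase$ case, since it is where the essential content of the encoding lives. Here $\E(\kCase x \kOf \ldots, \Theta)$ becomes a call to the encoded $\pmatch$ visitor method on the scrutinee, passing one handler lambda per GADT constructor. The difficulty is twofold: each handler must be type-checked in a \calculus{} environment that emulates the extended \Xis{} constraint $\Delta$ obtained by matching on a particular constructor, and I must verify that this extended environment indeed satisfies the equational correspondence $\Psi$ for the extended constraint. Concretely, entering the branch for constructor $c_j$ introduces fresh type members (the existentially quantified type arguments $\vec{\beta_j}$ of the constructor, encoded as abstract type members with bounds $\bot..\top$) and introduces, through the type member $X$ of the constructor object, the subtyping evidence equating the scrutinee's type parameters with $(\vec{\sigma_j})$; this evidence is exactly what the inversion rules then convert into the type equations that populate the extended $\Delta$. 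I would show that after binding the handler's parameters, the resulting \calculus{} context satisfies $\Phi$ with respect to the \Xis{} context used to type the branch body, and then apply the induction hypothesis to the branch body. The bookkeeping of substitutions $\Theta$ against binders, and ensuring freshness of the introduced type members relative to $\mathit{dom}(\Theta)$, is where most of the technical care will be required.

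Two further points warrant attention. First, I would prove a substitution/weakening compatibility lemma relating $\T(\tau[\alpha \mapsto \sigma], \Theta)$ to $\T(\tau, \Theta[\alpha \mapsto \T(\sigma, \Theta)])$, needed to align the type-level substitutions performed in \Xis{} (e.g.\ instantiating a polymorphic constructor signature) with the type-member projections produced on the \calculus{} side. Second, because the result type of $\pmatch$ is the return type $r.R$ chosen at the match site, I must instantiate $r$ with a suitable object whose type member $R$ is set to $\T(\tau, \Theta)$, and then discharge each handler's obligation against this $R$; the \emph{covariant} use of $r.R$ as the common result of all branches is what lets a single \calculus{} type stand in for all the branch bodies, mirroring how \Xis{} types every branch of a $\kCase$ at the single type $\tau$.
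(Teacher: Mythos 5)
Your proposal matches the paper's proof in all essentials: induction on the \Xis{} typing derivation, a separate equation-emulation lemma for $\vdash'$ (the paper's Lemma~\ref{lemma:eqpreserved}, proved by induction on $\vdash'$ using the inversion rules for the GADT/tuple cases), a substitution-through-translation lemma (the paper's Lemma~\ref{lemma:substTranslation}), and the same treatment of the $\kCase$ case via $\pmatch$, handler lambdas, an $r$ object carrying the return type, and re-establishing $\Phi$ for the extended context before applying the induction hypothesis. The only minor divergence is that in the paper's $\kCase$ case the new equations $\sigma_{i,j} \equiv \delta_j$ are obtained directly from \textsc{Sel} and \textsc{Sngl} reasoning on $arg_i$ and $v$ rather than from the inversion rules, which enter only inside the equation-emulation lemma.
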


Based on this theorem and correctness of the \texttt{env} and \texttt{lib} encodings (shown in the appendix),
we can prove that for any well-typed \Xis{} program, the result of \texttt{EncodeProgram} is also a well-typed program in \calculus{}.

Apart from other more technical lemmas, an important lemma that we are using in the proof is one that specifies that under our modified notion of constraint reasoning, if an equation is derivable using this reasoning, then an analogous equation is also derivable in \calculus{} starting from analogous assumptions:

\begin{restatable}[Equation emulation]{lemma}{eqEmulation}
  \label{lemma:eqpreserved}
  For any $\Delta \vdash' \tau_1 \equiv \tau_2$ and any $\Theta$ and $\Gamma$, if $\Psi(\Delta; \Theta; \Gamma)$, 
  then $\Gamma \vdash \T(\tau_1, \Theta) =:= \T(\tau_2, \Theta)$.
\end{restatable}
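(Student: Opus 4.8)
The plan is to induct on the derivation of $\Delta \vdash' \tau_1 \equiv \tau_2$, showing for each rule that $\Gamma \vdash \T(\tau_1,\Theta) =:= \T(\tau_2,\Theta)$ holds under the hypothesis $\Psi(\Delta;\Theta;\Gamma)$. The structural rules are immediate. When $\tau_1 \equiv \tau_2$ is taken directly from $\Delta$, the goal is literally condition~(2) of $\Psi$. For reflexivity, symmetry and transitivity I would use that $=:=$ is an equivalence relation in \calculus{}: reflexivity is \rn{Refl}, symmetry is built into the definition of $A =:= B$, and transitivity follows from two applications of \rn{Trans} to the inductive hypotheses. These cases carry $\Psi$ through unchanged.

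The rules that rewrite the constraint are the interesting ones, and both rest on two auxiliary facts about the encoding. The first is a \emph{congruence} lemma: if $\Gamma \vdash \T(\sigma_i,\Theta) =:= \T(\sigma_i',\Theta)$ for all $i$, then $\Gamma \vdash \T((\vec{\sigma})\,T,\Theta) =:= \T((\vec{\sigma'})\,T,\Theta)$, and likewise for tuples. Since the encoding turns an applied constructor into an intersection $\lib.T \wedge \{ B_1 : \T(\sigma_1,\Theta)..\T(\sigma_1,\Theta) \} \wedge \cdots$, this reduces to the facts that equal type-member bounds yield equivalent declarations (by \rn{Typ-$<:$-Typ}) and that intersection is monotone (by \rn{$<:$-And} with \rn{And$_1$-$<:$}/\rn{And$_2$-$<:$}); this lemma also directly discharges the constructor-congruence rules of $\vdash'$. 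The second is a \emph{substitution} lemma: if $\Gamma \vdash \T(\tau,\Theta) =:= \Theta(\alpha)$ then $\Gamma \vdash \T(\sigma[\alpha\mapsto\tau],\Theta) =:= \T(\sigma,\Theta)$ for every source type $\sigma$, proved by induction on $\sigma$ using the congruence lemma, with the base case $\sigma = \alpha$ discharged directly by the hypothesis.

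With these in hand, the substitution rule is handled as follows. Its premise adds $\tau \equiv \alpha$ to the constraint, so condition~(2) of $\Psi$ gives $\Gamma \vdash \T(\tau,\Theta) =:= \Theta(\alpha)$, exactly the hypothesis of the substitution lemma. I would first re-establish $\Psi(\vec{\alpha},\Delta[\alpha\mapsto\tau];\Theta;\Gamma)$: condition~(1) survives because $\Theta$ is unchanged on the remaining variables, condition~(3) is untouched, and condition~(2) follows by applying the substitution lemma to each equation of $\Delta$ and chaining with transitivity of $=:=$. The inductive hypothesis then yields $\Gamma \vdash \T(\tau_1[\alpha\mapsto\tau],\Theta) =:= \T(\tau_2[\alpha\mapsto\tau],\Theta)$, and one more use of the substitution lemma on each side transports this back to the desired $\Gamma \vdash \T(\tau_1,\Theta) =:= \T(\tau_2,\Theta)$. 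The GADT-inversion rule is where the subsumption of GADT reasoning actually happens: its premise replaces $(\vec{\tau_A})\,T \equiv (\vec{\tau_B})\,T$ by the componentwise equations $\vec{\tau_A} \equiv \vec{\tau_B}$, so to build $\Psi$ for the premise I must derive $\Gamma \vdash \T(\tau_{A,i},\Theta) =:= \T(\tau_{B,i},\Theta)$ from the equivalence of the two encoded intersections. This is precisely \calculus{}'s subtyping inversion: applying \rn{Typ-$<:$-Typ-Inv$_1$} and \rn{Typ-$<:$-Typ-Inv$_2$} to each member $B_i$, whose unique-membership side conditions hold because the encoding gives the arguments distinct labels. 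The main obstacle is the substitution case: it simultaneously rewrites the constraint and the goal, so the argument must thread the substitution lemma through both the re-establishment of $\Psi$ and the final transport step, and it is the congruence lemma underlying it --- rather than any single rule case --- that carries the real technical weight.
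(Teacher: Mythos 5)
Your proposal follows essentially the same route as the paper's proof: induction on the $\vdash'$ derivation; a substitution lemma (the paper's \autoref{lemma:substEqual}) used both to re-establish $\Psi$ for the rewritten constraint and to transport the inductive hypothesis back through the substitution; an intersection-congruence lemma (the paper's \autoref{lemma:6}) carrying the routine work; and the inversion rules \textsc{Typ-$<:$-Typ-Inv$_1$}/\textsc{Typ-$<:$-Typ-Inv$_2$} for the constructor- and tuple-decomposition cases.

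One step would fail as you state it. In the decomposition case you discharge the unique-membership premise of the inversion rules ``because the encoding gives the arguments distinct labels.'' Distinctness of labels is necessary but not sufficient: the encoded type is an intersection whose leading conjunct is the type \emph{projection} $\lib.\mathrm{Tuple}$ (resp.\ $\env.T$), and the unique-membership relation $\searrow$ has rules only for type declarations, field declarations, recursive types, and intersections of these --- there is no rule for a projection $p.A$, so $U \searrow \{B_i = \ldots\}$ is simply not derivable for $U$ in that shape. The paper's proof first rewrites the projection to its expanded definition, using $\Gamma \vdash \lib.\mathrm{Tuple} =:= \mathit{TupleDef}$ (where $\mathit{TupleDef}$ is the $\mu(\ldots)$ type from \texttt{lib}) together with \autoref{lemma:6} to obtain an equivalent intersection $U'$ whose leading conjunct is a recursive type; \textsc{One-Rec} then applies, the inversion goes through on $U'$, and transitivity of $=:=$ recovers the componentwise equations needed to build $\Psi$ for the premise. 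With that repair your argument matches the paper's. (A cosmetic point: the restricted $\vdash'$ has no separate symmetry, transitivity, or assumption-lookup rules, so those cases of your induction are vacuous.)
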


\subsection{Preservation of Operational Semantics}

We conjecture that our encoding preserves the operational semantics of \Xis{} in the sense that
if $e$, a \Xis{} term, reduces to some $e'$, then $\ident{Enc}(e) -->^{*} t'$ such that $\ident{Enc}(e')$ corresponds to $t'$.
This is illustrated by the following diagram, where $\eqsim$ is an equivalence relation described below:
\newcommand{\arrowstar}{\vphantom{\to}^{*}}
\tikzset{
	startip/.tip={Glyph[glyph math command=arrowstar]},
	tostar/.style={->.startip}}
\begin{center}
\begin{tikzcd}
	e \arrow[d, "Enc"] \arrow[r, tostar] & e' \arrow[d, "Enc"] \\
	\ident{Enc}(e) \arrow[r, tostar] & t' \eqsim \ident{Enc}(e')
\end{tikzcd}
\end{center}

\noindent
The main technicality which would be involved in the proof of the above property is that reduction in \Xis{} relies on substitution as opposed a store like \calculus{}.
We envision the gap between the two can be bridged by defining an equivalence relation $t \eqsim s$ on \calculus{} terms.
Two terms should be considered equivalent if they are the same modulo $\alpha$-equivalence as well as order of let- and store-bindings and presence of unused bindings.
In particular, this equivalence makes it possible to deal with the fact that reduction in \Xis{} inherently may duplicate a value, while reduction in \calculus{} only duplicates references to a store-binding.

Aside from this problem, the proof should be rather straightforward: most forms from \Xis{} are translated to their direct \calculus{} equivalents,
and our encodings of type abstraction and tuples are standard \cite{amin2016}.
The most involved case of the proof will be correctness of the pattern matching translation, which relies on encoding $\kCase$ forms into calls to $\ident{pmatch}$ like in Scott encoding \cite[Sec. 5.2]{stump2009}.
In particular, the proof will have to demonstrate that replacing contradictory branches by infinite loops in the encoded term preserves the operational semantics;
intuitively, this is because a contradictory branch cannot be entered, which is precisely why \Xis{} allows typing any term as any type in such branches in the first place.

\newcommand\atyp[1][A]{\hat{#1}}
\newcommand\bctx[1][\Psi]{\hat{#1}}

\newcommand\alor{\mathbin{\widetilde{\lor}}}

\section{Implementing Subtyping Reconstruction}
How can we implement subtyping reconstruction?
We begin by inspecting the basic problem with writing an algorithm that can use subtyping relationships reconstructed by pattern matching.

\subsection{Towards Formal Decidable Subtyping}

\newcommand\Emirs{C$^\sharp$ minor\xspace}

Recall that \calculus{} term variables can introduce subtyping assumptions to the context
(see \emph{subtyping inversion} in \autoref{sec:inversion-subtyping}).
As a concrete example, consider $\Gamma = x: \tTypeDec{A}{\bot}{\top}, y: \tTypeDec{B}{\bot}{\top}, z: \tTypeDec{C}{x.A}{y.B}$.
We can derive $\Gamma |- x.A <: y.B$ thanks to the bounds of $z.C$.

The question is how to design an algorithmic subtyping judgment which is a subset of the declarative subtyping rules from \autoref{fig:subtyping-rules}.
More specifically, how do we decide what subtyping assumptions are introduced by bindings and when to use the inversion rules?
\citet{Kennedy05:gadts-oo} show \Emirs, a formalisation of C$^\sharp$ which allows pattern matching much like \calculus.
The declarative subtyping judgment in \Emirs has the form $\Delta |- S <: T$, where $\Delta$ is a list of subtyping assumptions. %
This judgment features inversion rules which allow %
recovering the premises of assumptions, again much like \calculus.
In their system, a branch of a pattern match may introduce subtyping assumptions;
specifically, if $Q$ is the type of the value being matched and $P$ is the type of values matched by the pattern,
then the subtyping assumptions introduced by the branch can be found by finding a class $K$ such that $Q <: K[\overline{T}]$ and $P <: K[\overline{S}]$,
and then relating $S_i$ to $T_i$ based on type parameter variance.
Conceptually, the problem is much simpler in \calculus:
the assumptions introduced by a binding can be found by simply listing all of its type members and then relating each member's upper and lower bounds.
\calculus{} also clearly points to the reason we can relate $S_i$ to $T_i$;
if the branch is entered, we have a value $v$ which can be typed as both $Q$ and $P$,
and this value's type member (representing the \emph{most precise} type argument to $K$) is related to both $S_i$ and $T_i$.

This leaves the question of deciding when to use the inversion rules.
\citet{Burak06:variance-gen-constraints-csharp} solve the problem for \Emirs:
they show an algorithmic subtyping judgment $\Psi |- S <: T$,
with $\Psi$ being a list of \emph{upper and lower bounds} on type variables of the form $X <: T$ and $T <: X$.
This judgment no longer needs inversion rules: the assumptions in $\Psi$ are as simple as possible and cannot be inverted.
$\Delta$ and $\Psi$ are related: any subtyping assumption $S <: T$ can be \emph{decomposed} into bounds by repeatedly applying appropriate inversion rules;
we denote the result of doing so as $\ident{Dec}(S <: T)$.
Any $\Delta$ can be converted to an equivalent $\Psi$ by first decomposing assumptions in $\Delta$
and then repeatedly simplifying the assumptions transitively introduced by the bounds, i.e. by reaching a fixpoint of the following sequence:
\begin{align*}
  \Psi_1 &= \bigcup \left\{ \ident{Dec}(S <: T) : S <: T \in \Delta \right\} \\
  \Psi_{i+1} &= \bigcup \left\{ \ident{Dec}(S <: T) : \{S <: X, X <: T\} \subset \Psi_i \right\} \cup \Psi_i
\end{align*}
\noindent
The algorithmic and declarative subtyping judgments are equivalent;
subtyping in \Emirs{} is decidable only if we restrict contravariant classes \cite{kennedy2007}.

Intuitively, we should be able to define an algorithmic subtyping judgment for \calculus{} with a similar approach,
though we leave the details of this process to future work.
This hypothetical subtyping would have the form $\Gamma ; \bctx |- S <: T$,
with $\bctx$ being a list of upper and lower bounds on \emph{abstract types} $\atyp$: type projections $p.A$ and singleton types $\single{p}$.
However, \calculus{} features intersection types and dependent function types (both absent from \Emirs), which make the problem more complex.
For instance, it is not possible in general to decompose $\atyp[A] \land \atyp[B] <: T$ into bounds.
These difficulties would be encountered when implementing subtyping reconstruction for any sufficiently advanced type system, such as the one in Scala or TypeScript.
Still, it is possible to have an algorithm which gives an incomplete approximation of the declarative subtyping.
For the particular case of \calculus{}, we are unlikely to do better.
We can encode System $F_{<:}$ in \calculus{} \cite{amin2014},
making it likely that \calculus{} subtyping is undecidable \cite{hu2020}.

\subsection{Outline of the Scala Implementation}

The first author of this paper was at the center of the effort to implement subtyping reconstruction for Scala 3.
Based on this experience, we outline key aspects of the implementation and relate them to the formalism developed in the present paper.

The sketch from the previous section gives us the two basic problems the implementation needs to solve:
finding subtyping assumptions introduced by patterns, and decomposing them into bounds so that they can be used for type checking.
In the implementation, the abstract types $\atyp$ for which bounds can be reconstructed include type projections $p.A$ and singleton types $\single{p}$,
as well as type parameters and pattern-bound types,\footnote{
  In Scala, types starting with a lower-case letter mentioned in a pattern are \emph{bound} by the pattern, c.f. \autoref{sec:uncovering-tps-rels}.
} both denoted as $X$.
As the assumptions following from a pattern are decomposed into bounds, we store the latter in a map from abstract types to pairs of types: upper and lower bounds of the abstract types.\footnote{
  The particular data structures of the Scala implementation were described by \citet{xu2021}.
}
This map will later be used for type checking the branch associated with the pattern, making it part of the context used to type check Scala code.
Before we describe how we find assumptions and decompose them, we should note that the implementation can be incomplete:
it is always sound to reconstruct less precise bounds than the ones permitted by the formalism.
Being able to do so is helpful when dealing with a language with a complex type system.
The Scala implementation at first only supported reconstructing bounds on function type parameters and finding assumptions following from class\footnote{
  When discussing Scala, we follow \citet{martres2022} and say ``class'' to mean either a \emph{proper} class or a trait.
} types; it was then progressively expanded to support bounds on class type parameters, type projections and singleton types, as well as assumptions following from structural, intersection and union types.%
\footnote{The support for type projections, singleton types, and structural types is currently under review and not yet merged.}

We now inspect the two problems in a simplified setting.
The bounds are reconstructed based on the \emph{scrutinee} type $Q$, i.e. the type of the value being matched,
and the \emph{pattern} type $P$, i.e. the type of values matched by the pattern.
We initially disregard Scala's variant refinement, as well as intersection, union, and higher-kinded types.
We only consider two forms of pattern types: either $P = K$, where $K$ is a class, or $P = K[\overline{X}]$, where $\overline{X}$ are all pattern-bound types, i.e. fresh abstract types.
To illustrate, only the first two patterns in the following example would be allowed:
\begin{lstlisting}
  def eval[T](expr: Expr[T]): T = expr match
    case e : IntLit        => e.value
    case e : First[b,c]    => eval[b](e.pair)._1
    case e : Second[Int,c] => // pattern type disallowed
\end{lstlisting}
Observe how \emph{naming} type arguments in the above example mirrors how in \calculus{}
we need to use type projections $x.A$ to reference type arguments of a pattern type representing a generic class.
To illustrate, we show a \calculus{} term that formally models the two first cases (assuming a simple generalisation of $\kCase$ forms).
In particular, note how we reference the type argument of the encoded version of @First@:
\begin{align*}
  &\ident{eval} = \tLambda {\ident{tl}} {\tTypeDec T \bot \top }\,
    \tLambda {\ident{expr}} {g.\ident{Expr} \land \{ A : \ident{tl}.T..\ident{tl}.T \}} \\
  &\quad\quad\quad \kCase \ident{expr}
    \begin{aligned}[t]
      &\kOf e : g.\ident{IntLit} => e.\ident{value} \\
      &\;\,|\;\,  e : g.\ident{First}  => \ident{eval}(\nu(s){\set {T = {e.B}}})(e.\ident{pair}).\ident{\_1}
    \end{aligned}
\end{align*}

Essentially, there are three procedures involved in reconstructing bounds.
First, \texttt{ASSUM(Q,P)} lists subtyping assumptions following a value of type $Q$ being also of type $P$.
Second, \texttt{DEC(S <: T)} decomposes an assumption into a list of bounds.
Third, \texttt{BOUNDS(Q,P) = ASSUM(Q,P).map(DEC)} reconstructs \emph{bounds} following from $Q$ and $P$.
These definitions should be understood as pseudocode rather than the precise implementation.

To find subtyping assumptions based on scrutinee and pattern types $Q$ and $P$,
we look for classes $K$ such that $Q <: K[\overline{T}]$ and $P <: K[\overline{S}]$;
such classes can be found by walking the inheritance hierarchy and in general there might be more than one such class.
The relationships between the type arguments of each such $K$ are what we are looking for:
if the $i$-th parameter of C is covariant or invariant, we must have $S_i <: T_i$, and likewise for contravariance.
For instance, if $Q = \ident{Expr}[X]$ and $P = \ident{IntLit}$,
we have $Q <: \ident{Expr}[X]$ and $P <: \ident{Expr}[\Int]$
and we return $\Int <: X, X <: \Int$.

Then, each assumption we find is decomposed into bounds and stored in the aforementioned map.
Decomposing an assumption $S <: T$ is implemented similarly to checking if $S$ is a subtype of $T$.
The main difference is that if we encounter a (sub)goal of the form $\atyp <: T$ or $T <: \atyp$
which cannot be concluded based on current bounds, we add the relationship to the map of reconstructed bounds.
For instance, when decomposing $\ident{List}[X] <: \ident{List}[\Int]$, we encounter the subgoal $X <: \Int$ and store the relationship.
Much like in the formal sketch, adding a new bound may result in new transitive subtyping assumptions:
as a simple example, if we extend $\ident{List}[X] <: Y$ with $Y <: \ident{List}[\Int]$, it now follows that $\ident{List}[X] <: \ident{List}[\Int]$.
To handle such cases, we first decompose these transitive assumptions and then return to decomposing the original assumption $S <: T$.

What we described so far is very close to the algorithm sketched by \citet{Kennedy05:gadts-oo} and presented there more formally.
We will now briefly sketch how we support more complex type system features.
The implementation supports pattern types containing subelements of the form $K[\overline{T}]$ (generic classes applied to arbitrary types, not only pattern-bound fresh abstract types)
by behaving as though all type arguments were fresh abstract types, then checking if the concrete arguments conform to the bounds of the corresponding abstract type.
Due to higher-kinded types (HKTs), we need to consider the \emph{injectivity} of type operators when decomposing assumptions.
Specifically, if we encounter a (sub)goal of the form $F[\overline{S}] <: T$ or $S <: F[\overline{T}]$,
we proceed only if $F$ is an alias of a class type, and therefore is known to be injective.
This correctly handles situations where one can conclude that $F[\overline{S}] <: F[\overline{T}]$ with $\overline{S}$ and $\overline{T}$ being unrelated,
for instance when we define \lstinline!type F[X] = Int!.
While we could support annotating abstract types with their injectivity \cite{stolarek2015}, so far we did not find the complexity that this would introduce worth the corresponding gains.
Scala's variant inheritance\footnote{
  Recall that in Scala, only final and case classes invariantly inherit from their parents (see \autoref{sec:ty-ctor-variance}).
} forces adjustments to listing subtyping assumptions \texttt{ASSUM(Q,P)}.
If $P$ is an (applied) class type using variant inheritance and $Q <: K[\overline{T}]$ as well as $P <: K[\overline{S}]$, we only take the relationships between invariant type arguments as assumptions.
The support for intersection and union types is the most complex extension and relies on seeing reconstructed bounds $\bctx$ as \emph{constraints} on abstract types;
we unfortunately lack the space for an appropriately detailed description.\footnote{
  The most important aspect is that we need to consider constraint \emph{disjunctions} (see, for instance, \citet{petrucciani2019})
  in \texttt{ASSUM(Q,P)} when either type is a union type and in \texttt{DEC(S <: T)} when $S$ is an intersection type or $T$ is a union type.
  Since calculating exact disjunctions is too costly, we approximate them such that the exact constraint entails the approximation.
}

While developing the implementation, we always used a translation from class types into structural types as a guide for what bounds to reconstruct.
Contrary to expectations, correctly handling various features of Scala's type system required quite sophisticated notions.
Based on our findings, we believe that the complexity of the implementation may be reduced if we instead based it directly on \calculus{} concepts.
While we leave developing this alternative implementation as future work, we envision it as follows.
To find subtyping assumptions based on pattern and scrutinee types $P,Q$,
we would first translate them to structural types $\widetilde{P},\widetilde{Q}$
and then return the relationships between upper and lower bounds of the type members of $\widetilde{P} \land \widetilde{Q}$.
For example, @Expr[X]@ and @IntLit@ would be translated to @{@ @type@ @A@ @=@ @X@ @}@ and @{@ @type@ @A@ @=@ @Int@ @}@ and we would return @Int@ $<:$ @X@, @X@ $<:$ @Int@, like the current implementation.
Likewise, to decompose an assumption $S <: T$ into bounds,
we would first translate $S$ and $T$ into structural types $\widetilde{S}, \widetilde{T}$
and then decompose $\widetilde{S} <: \widetilde{T}$ into bounds according to \calculus{} inversion rules.
For example, to decompose @List[X]@ $<:$ @List[Int]@, we would translate it to @{@ @type@ @A@ @<:@ @X@ @}@ $<:$ @{@ @type@ @A@ @<:@ @Int@ @}@, which can be inverted into @X@ $<:$ @Int@.
We believe that this algorithm would at least match the precision of the current implementation,
would require less special cases and would potentially naturally extend to intersections and unions without resolving to constraint-like reasoning.
To illustrate, consider that the current implementation needs special reasoning to handle a covariant @Expr@ and @IntLit@ using invariant refinement,
while the alternative implementation would simply translate @Expr@ to @{@ @type@ @A@ $<:$ @X@ @}@ and @IntLit@ to @{@ @type@ @A@ @=@ @Int@ @}@.

\section{Related Work}

The earliest implementation of pattern matching based on runtime subtype checks we are aware of
is that of Modula-3 \cite{boszormenyi2012programming} with its @TYPECASE@ statements,\footnote{
  \url{https://www.cs.purdue.edu/homes/hosking/m3/reference/typecase.html}
} allowing type-safe conditional refinement based on the runtime type of a scrutinee.

\newcommand\sysfc{System F$_C$\xspace}
Subtyping in DOT systems is believed to be undecidable, as we can encode $F_{<:}$ bounded polymorphism in DOT.
\citet{hu2020} show that this reasoning does not constitute a proof and demonstrate $D_{<:}$,
an undecidable fragment of DOT without self-references and intersection types.
$D_{<:}$ subtyping itself has two fragments which have subtyping algorithms.
However, undecidability of \calculus{} subtyping is unlikely to be a problem in practice.
Specific class hierarchies (allowed in Scala) on their own make subtyping undecidable \cite{kennedy2007}.
Similarly, Java generics are also Turing-complete, as one can reduce a Turing machine to a fragment of Java \cite{Grigore2017}.

\citet{parreaux2019} describe that DOT type members can be used to encode GADTs.
They observe that doing so solves the soundness problem of variant inheritance \cite{giarrusso2013}.

GADTs are well-known and widely implemented \cite{peytonjones2006}.
\citet{xi2003} may have been the first to study them in an ML-like setting with \Xis.
The original OCaml GADT implementation was based on a type-and-constraint system like \Xis~\cite{garrigue2011}.
\citet{sulzmann2007} present \sysfc, an intermediate representation in the Haskell compiler which supports GADTs (and other features) with type equality witnesses and coercions \cite{mitchell1984}.
The Haskell implementation was previously based on a type-and-constraint system, which can be encoded into \sysfc \cite{sulzmann2007}.
In \sysfc{} the $\mathrm{IntLit}$ constructor of $\mathrm{Expr}\, \alpha$ contains a witness $\mathsf{co} : \mathrm{\alpha \sim \Int}$.
Pattern matching on $\mathrm{Expr}\,\alpha$ binds this witness, which then can be used to type $0$ as $\Int$ with an explicit coercion: $0 \blacktriangleright \mathrm{sym}\,\mathsf{co}$.
\citeauthor{sulzmann2007} claim that explicit casts make \sysfc code more robust to transformations by ensuring that they do not violate witness scoping.
\calculus{} objects also act like witnesses of subtyping assumptions, but they only participate implicitly in subtyping.
It would be interesting to see a variant of \calculus{} where objects participate in subtyping through explicit coercions, like \sysfc{} witnesses.

\citet{scherer2013} study the combination of GADTs and variance in the OCaml setting.
They demonstrate that associating the constructors of a GADT such as @Expr@ with type equality constraints prevents the GADT from being covariant.
The problem does not arise in our setting, where constructors are, essentially, associated with subtyping constraints.

GADTs present a unique challenge for checking pattern exhaustivity:
based on a value's type, we can rule out that it was created using particular constructors.
For instance, a value of type @Expr<Int>@ cannot be an instance of @MkPair@.
Without special support for GADTs, checking pattern exhaustivity emits spurious inexhaustivity warnings,
for instance that a pattern for @MkPair@ is missing when matching on an @Expr<Int>@.
\citet{karachalias2015} show a GADT-aware algorithm implemented in GHC, the Haskell compiler.
\citet{abg2017} builds on their work to propose an analogous algorithm for Scala.

Recently, \citet{martres2022}
has shown a complete version of the class encoding we sketched in \autoref{sec:presentation-calculus}.
In order to preserve subtyping relationships in general,
this encoding relies on subtyping between recursive types using rules from an understudied branch of the DOT family \cite{rompf2016}.
Such rules are unnecessary to encode specific class hierarchies (like in our sketch);
\citeauthor{martres2022} conjectures that not relying on them is almost always possible if encoding a class type accounts for generic classes it extends.
For instance, an applied type like @MkPair[T1,T2]@ would be encoded as
$g.\ident{MkPair} \land \set{ B <: \widetilde{T_1};\; C <: \widetilde{T_2};\; A <: \widetilde{(T_1, T_2)} }$, where $\widetilde{T}$ is the encoded version of $T$.

Intensional type analysis allows performing runtime type analysis on values.
\citet{harper1995} develop $\lambda^{\mathrm{ML}}_i$ to allow efficient universal representations of values in polymorphic contexts.
Their calculus was later generalised by \citet{trifonov2000}, arriving at a \emph{fully reflexive} version: one that can analyse the runtime type of any value.
\calculus{} supports a limited version of runtime type analysis with its tagged objects and $\kCase$ forms.
Unlike true intensional type analysis, we cannot analyze arbitrary runtime type representations,
but on the other hand our system admits efficient implementation strategies
(like the runtime-class-based design it formalizes).

Bindings in DOT calculi, including \calculus, can introduce subtyping assumptions.
This was noticed early in the study of DOT systems, as it leads to the crucial \emph{bad bounds} problem \cite{rompf2016,amin2014}.
\citet{hu2020} frame this idea as \emph{subtyping reflection}: the notion that the context may contain evidence of subtyping, which can be \emph{reflected} into the subtyping judgment.
This name hints at the analogy to \emph{equality reflection} from extensional type theories,\footnote{See, for instance, \citet{nordstrom1990}.}
where the context can contain evidence of equality which can be reflected into the typing judgment.

\begin{acks}
  We thank Paolo Giarrusso for his help with the student project that became part of this work.
  We also thank the anonymous reviewers, whose feedback let us substantially improve this paper.
\end{acks}

\section*{Data-Availability Statement}
The mechanized soundness proofs of \calculus{} and our variant of \Xis{},
as well as some lemmas related to encoding \Xis{} into \calculus{},
are available online \cite{abgCDOTArtifact2022}.

\bibliography{bibliography,autogenerated}

\appendix
\section{Subsuming GADTs}

\subsection{Modifications to \texorpdfstring{\Xis{}}{the GADT system}}

In this section, we cover those details of our modified version of $\lambda_{2, G\mu}$ which are only relevant to the encoding.
The variant is exhaustively described in \cite{RadekThesis}.

\subsubsection*{Type annotations}
We will want to encode \Xis{} into \calculus{}.
In some places this encoding will need to state the types of some objects in the syntax of the target calculus.
To ensure that the encoding can be defined just by looking at the syntax of the input terms, we add the necessary annotations to the syntax of the calculus.
These annotations do not change semantics or expressivity.
In fact, a well-typed program without these annotations can be easily converted into a program with the annotations, just by appending the necessary types coming from the typing derivation.
We actually have mechanized two variants of the calculus: one with these annotations and another without them.

The only modifications are in the tuple syntax, where $\langle e_1, e_2 \rangle$ becomes $\langle e_1 : \tau_1, e_2 : \tau_2 \rangle$ and in the pattern matching syntax, explained in the next paragraph.
This modification does not affect the reduction rules (they can simply ignore the annotations).
In the typing rules, we simply verify that the type in the annotation matches the actual type stemming from the typing judgment.
For example, below we show the old and new versions of the \textbf{ty-tup} rule. The change is purely technical in that the type now not only appears in the judgment, but also at the syntax level.

\begin{prooftree}
  \AxiomC{$\Delta; \Pi \vdash e_1 : \tau_1$}
  \AxiomC{$\Delta; \Pi \vdash e_2 : \tau_2$}
  \RightLabel{(old \textbf{ty-tup})}
  \BinaryInfC{$\Delta; \Pi \vdash \langle e_1, e_2 \rangle : \tau_1 * \tau_2$}
\end{prooftree}
\begin{prooftree}
  \AxiomC{$\Delta; \Pi \vdash e_1 : \tau_1$}
  \AxiomC{$\Delta; \Pi \vdash e_2 : \tau_2$}
  \RightLabel{(new \textbf{ty-tup})}
  \BinaryInfC{$\Delta; \Pi \vdash \langle e_1 : \tau_1, e_2 : \tau_2 \rangle : \tau_1 * \tau_2$}
\end{prooftree}

\subsubsection*{Case forms}

As we mentioned before, our variant disallows nested patterns and requires exhaustive matches.
Without nesting, matching on tuples (which can be unpacked using the \textbf{fst} and \textbf{snd} operations) and unit is not really useful,
so we remove it from our variant.

Since our pattern matching requires all branches to be of the same constructor, we actually know statically which GADT is being matched by a particular match.
To simplify the encoding, we can also include this information at the syntax level.
Just like the type annotations, this can be trivially reconstructed from an unannotated variant coupled with a typing derivation.
The match form that used to be $\operatorname{\mathbf{case}} e \operatorname{\mathbf{of}} ms$
now becomes $\operatorname{\mathbf{matchgadt}} e \operatorname{\mathbf{as}} T \operatorname{\mathbf{returning}} \tau_{\mathit{ret}} \operatorname{\mathbf{with}} ms$,
where $T$ is the name of the matched GADT and $\tau_{\mathit{ret}}$ is the type that is being returned by the whole match.

Because we have removed the nesting of pattern matches and the matching of tuples and unit, the only rule left from the \textbf{pat-} rules is \textbf{pat-cons}.
When removing nesting we actually need to merge it with \textbf{pat-var}, thus we create the following new rule:

\begin{prooftree}
  \AxiomC{$\Sigma(c) = \forall \vec{\alpha}. \tau \to (\vec{\tau_1}) T$}
  \AxiomC{$\Delta_0, \vec{\alpha}, \vec{\tau_1} \equiv \vec{\tau_2} \vdash \tau : *$}
  \RightLabel{\textbf{(pat-cons')}}
  \BinaryInfC{$\Delta_0 \vdash c[\vec{\alpha}](x) \downarrow (\vec{\tau_2}) T \Rightarrow \vec{\alpha}, \vec{\tau_1} \equiv \vec{\tau_2}; x : \tau$}
\end{prooftree}

Due to our syntactic changes, the \textbf{ty-case} rule becomes:

\begin{prooftree}
  \AxiomC{$\Delta; \Gamma \vdash e : (\vec{\sigma}) T$}
  \AxiomC{$\Delta; \Gamma \vdash ms : (\vec{\sigma}) T \Rightarrow \tau_2$}
  \RightLabel{\textbf{(ty-case'')}}
  \BinaryInfC{$\Delta; \Gamma \vdash \operatorname{\mathbf{matchgadt}} e \operatorname{\mathbf{as}} T \operatorname{\mathbf{returning}} \tau_2  \operatorname{\mathbf{with}} ms : \tau_2$}
\end{prooftree}

The rule used in the mechanization of the calculus goes one step further and inlines \textbf{pat-cons'} in \textbf{ty-case'} to create a single big rule.
This helps with the mechanized proof, but would decrease the clarity on paper, so we will use the rules as shown above.
The rule used in the mechanization is however equivalent to this pair of rules.

\subsection{Encoding definition}
In this section, we present the translation from $\lambda_{2, G\mu}$ into \calculus.
The translation was originally presented by \citet{RadekThesis} and is presented here with some minor modifications (mostly notational).

Moreover, we add yet one more modification to the typing relation: we rely on a weaker $\vDash$ notion which is a strict weakening of the original one.
The paper introduced a relation $\vdash$ defined by a set of rules (see Figure 7) and proved that it is equivalent to the original $\vDash$.
We weaken the relation by removing some of the rules.
We will call this new relation $\vdash'$ and use it in place of $\vDash$.
We remove the \textit{ex falso}-like rules and the ones allowing to infer equalities from equality of type lambdas and arrows.
The remaining rules will be enumerated as part of proof of lemma \ref{lemma:eqpreserved}.

For conciseness, we will use some shorthand conventions when working with \calculus{} terms.
For example, instead of writing $\{a\} \wedge \{b\}$ we will sometimes write $\{a; b\}$.
Also to avoid repetition, we will use $\{A = T \}$ as a shorthand for $\{A: T..T\}$.
We will also use a more concise syntax for defining objects: instead of writing $\nu(s: \{ A: T..T; x: U \})[p.B] \{ A = T; x = u \}$ we write:
\begin{flalign*}
& \nu(s)[p.B]\{ A = T ; x : U ; x = u \}
\end{flalign*}

Whenever we use some variables that do not appear in the inputs of the encoding functions, we mean that these are some freshly generated identifiers that do not appear anywhere else in the program. We use the underscore for names which are not referred to anywhere.

Now we can show the encoding of types:

\begin{align*}
\T(\alpha, \Theta) &= \Theta(\alpha) \\
\T(\mathbf{1}, \Theta) &= \texttt{lib.Unit} \\
\T(\tau_1 * \tau_2, \Theta) &= 
\texttt{lib.Tuple} \wedge \{T_1 = \T(\tau_1, \Theta)\} \wedge \{T_1 = \T(\tau_2, \Theta)\} \\
\T(\tau_1 \to \tau_2, \Theta) &= 
\forall \left(\mathit{arg}: \T(\tau_1, \Theta) \right). \; \T(\tau_2, \Theta) \\
\T(\left(\tau_1, \dots, \tau_n \right) T, \Theta) &= 
\texttt{env.T} \wedge \{A_1 = \T(\tau_1, \Theta)\} \wedge \dots \wedge \{A_n = \T(\tau_n, \Theta)\} \\
\T(\forall \alpha. \tau, \Theta) &= 
\forall \left(A: \{T: \bot..\top \} \right). \; \T(\tau, \Theta[\alpha \mapsto A.T]) \\
\end{align*}

Then we introduce the term for encoding tuples and unit (the \texttt{lib}) and the construction of \texttt{env} which encodes the signature $\Sigma$.

\begin{lstlisting}[mathescape=true,label=stdlib, basicstyle=\small\ttfamily]
let lib = $\nu$[lib.Any](lib: {
  Any : $\bot$..$\top$
  Unit = {U: $\bot$..$\top$}
  unit : lib.Unit
  Tuple = $\mu$(s: {$T_1$: $\bot$..$\top$; $T_2$: $\bot$..$\top$; fst: s.$T_1$; snd: s.$T_2$})
  tuple : $\forall$(tl: {$T_1$: $\bot$..$\top$; $T_2$: $\bot$..$\top$})
	  $\forall$(x1: tl.$T_1$) $\forall$(x2: tl.$T_2$)
	    lib.Tuple $\wedge$ {$T_1$ = tl.$T_1$} $\wedge$ {$T_2$ = tl.$T_2$}
}) {
  Any = $\top$
  Unit = {U: $\bot$..$\top$}
  unit = $\nu$[lib.Unit](s: {U=$\top$})
  Tuple = $\mu$(s: {$T_1$: $\bot$..$\top$; $T_2$: $\bot$..$\top$; fst: s.$T_1$; snd: s.$T_2$})
  tuple = $\lambda$(tl: {$T_1$: $\bot$..$\top$; $T_2$: $\bot$..$\top$})
	  $\lambda$(x1: tl.$T_1$) $\lambda$(x2: tl.$T_2$)
	    $\nu$[lib.Tuple](s: {$T_1$ = tl.$T_1$; $T_2$ = tl.$T_2$; fst = x1; snd = x2})
}) in ...
\end{lstlisting}

We assume that the provided $\Sigma$ is described by the concrete syntax as described in Section~2.2 of the paper~\cite{xi2003},
and consists of a list of definitions ($\Sigma = \mathbb{T}_1, \ldots, \mathbb{T}_n$) where $\mathbb{T}_i$ has form the following form:
\begin{lstlisting}[mathescape=true,basicstyle=\small\ttfamily]
(type, ..., type) T = {$\vec{\beta_1}$}. ($\vec{\sigma_1}$) $c_1$ of $\tau_1$
                    | ...
                    | {$\vec{\beta_n}$}. ($\vec{\sigma_n}$) $c_n$ of $\tau_n$
\end{lstlisting}

The \texttt{type} symbol is repeated $m$-times and indicates the cardinality of a given type constructor $T$. Each $c_i$ denotes a name of one of its $n$ constructors. In each $j$th constructor, $\vec{\beta_j}$ denotes the list of its type parameter names, $\tau_j$ indicates the type of the constructor value-level parameter and $\vec{\sigma_j}$ indicates to what type each of the type constructor parameters is instantiated in a given case. Each of $\vec{\sigma_j}$ can reference the parameters $\vec{\beta_j}$.

We also use $\Sigma$ as a mapping of GADT constructor signatures. With a definition like the above, we will have $\Sigma(c_j) = \forall \vec{\beta_j}. \tau_j \to (\vec{\sigma_j}) T$.

For example, the definition of a GADT emulating equality would look like following:

\begin{lstlisting}[mathescape=true,basicstyle=\small\ttfamily]
(type, type) T = {$\beta$}. ($\beta$, $\beta$) refl of $\mathbf{1}$
\end{lstlisting}

The function \texttt{EncodeSigma} that we define below takes a list of type definitions ($\mathbb{T}_i$)
and returns an \texttt{env} object encoding each $\mathbb{T}_i$ as a base type and its associated constructors.

\begin{lstlisting}[mathescape=true,basicstyle=\small\ttfamily]
EncodeSigma($\mathbb{T}_1$, ..., $\mathbb{T}_n$) =
  $\nu$[lib.Any](env: {
    EncodeGADT($\mathbb{T}_1$)
    ...
    EncodeGADT($\mathbb{T}_n$) 
  })
\end{lstlisting}

To make the definitions more concise, we define a scheme for helper substitutions:

\begin{equation}
\label{eqn:ThetaShort}
\theta^{(x)}_{\vec{\beta_i}} = [\beta_{i,1} \mapsto x.B_{i,1}, \; \dots, \; \beta_{i,m_i} \mapsto x.B_{i,m_i}]
\end{equation}

With these tools, we can now define the \texttt{EncodeGADT} function, as shown on \autoref{fig:encodeGadtSig}.

To avoid name clashes we assume that all constructor and type names in $\Sigma$ are unique.

\newcommand\xf{\mathit{xf}}

And finally the encoding of terms.
Note that the $\xf$ ligature is used to denote that this rule applies both to $x$-variables and $f$-variables.

\begin{align*}
  \E(\xf, \; \Theta) &= \Theta(\xf) \\
  \E(c_i[\vec{\tau}](e), \; \Theta) &=
    \begin{aligned}[t]
      & \nLet ts = \nu[\texttt{lib.Any}](ts: \{B_{i,1} = \T(\tau_1, \Theta);\; \ldots;\; B_{i,m_i} = \T(\tau_{m_i}, \Theta)\}) \nIn \\
      & \nLet v = \E(e, \Theta) \nIn \\
      & \quad \texttt{env.}c_i \;\; ts \;\; v \\
    \end{aligned} \\
  \E(\left<\right>, \; \Theta) &= \texttt{lib.unit} \\
  \E(\left<e_1 : \tau_1, e_2 : \tau_2\right>, \; \Theta) &=
    \begin{aligned}[t]
      & \nLet v_1 = \E(e_1, \Theta) \nIn \\
      & \nLet v_2 = \E(e_2, \Theta) \nIn \\
      & \quad \texttt{lib.tuple} \; \nu[\texttt{lib.Any}](\_: \{T_1=\T(\tau_1, \Theta); T_2=\T(\tau_2, \Theta) \}) \; v_1 \; v_2 \\
    \end{aligned} \\
  \E(\mathbf{fst}(e), \; \Theta) &=
    \nLet v = \E(e, \Theta) \nIn v.\texttt{fst} \\
  \E(\mathbf{snd}(e), \; \Theta) &=
    \nLet v = \E(e, \Theta) \nIn v.\texttt{snd} \\
  \E(\lambda x : \tau_1. e, \; \Theta) &=
    \lambda (x_0: \T(\tau_1, \Theta)) \; \E(e, \Theta[x \mapsto x_0]) \\
  \E(e_1 (e_2), \; \Theta) &=
    \begin{aligned}[t]
      & \nLet v_1 = \E(e_1, \Theta) \nIn \\
      & \nLet v_2 = \E(e_2, \Theta) \nIn \\
      & \quad v_1 \; v_2
    \end{aligned} \\
  \E(\Lambda \alpha. e, \; \Theta) &=
    \lambda \left(x_{\alpha}: \{T: \bot..\top\} \right) \; \E(e, \Theta[\alpha \mapsto x_{\alpha}.T]) \\
  \E(e [\tau_1], \; \Theta) &=
    \letp{tl}{\nu[\texttt{lib.Any}](\_: \{ T = \T(\tau_1, \Theta) \})}{ \E(e, \Theta) \; tl } \\
  \E(\textbf{fix } f: \tau. e, \; \Theta) &=
    \begin{aligned}[t]
      & \nLet \mathit{hlpObj} = \nu[\texttt{lib.Any}](\mathit{self}: \{ \\
      & \quad \mathit{fix} = \lambda\left(\_ : \texttt{lib.Unit} \right) \; \E(e, \Theta\left[f \mapsto \mathit{self}.\mathit{fix} \;\; \texttt{lib.unit}\right]) \\
      & \}) \nIn \mathit{hlpObj}.\mathit{fix} \;\; \texttt{lib.unit} \\
    \end{aligned} \\
  \E(\nLet x = e_1 \nIn e_2 \textbf{ end}, \; \Theta) &=
     \letp{x}{\E(e_1, \Theta) }{\E(e_2, \Theta[x \mapsto x])} \\
\end{align*}%
\begin{center}
\begin{align*}
& \E( \textbf{matchgadt } e \textbf{ as } T \textbf{ returning } \tau_{ret} \textbf{ with } &&\\ 
& \qquad\qquad\qquad  | \; c_1[\beta_{1,1}, ..., \beta_{1, m_1}](x_1) \Rightarrow e_1 \; &&\\
& \qquad\qquad\qquad  \ldots &&\\
& \qquad\qquad\qquad  | \; c_n[\beta_{n,1}, ..., \beta_{n, m_n}](x_n) \Rightarrow e_n, \; \Theta) = &&\\
& \quad \nLet tl = \nu[\texttt{lib.Any}](tl: \{ R = \T(\tau_{ret}, \Theta) \}) \nIn &&\\
& \quad \nLet v = \E(e, \Theta) \nIn &&\\
& \quad \nLet \texttt{case}_{c_1} = \lambda\left(arg_1: \texttt{env.T}_{C_1} \wedge v.\text{type} \right) &&\\
& \quad \quad \nLet x'_1 = arg_1\texttt{.data in } &&\\
& \quad \quad \E(e_1, \Theta[x_1 \mapsto x'_1, \beta_{1,1} \mapsto arg_1.B_{1,1}, \; ..., \; \beta_{1,m_1} \mapsto arg_1.B_{1,m_1}]) \\
& \quad \nIn \\
& \quad ... \\
& \quad \nLet \texttt{case}_{c_n} = \lambda\left(arg_n: \texttt{env.T}_{C_n} \wedge v.\text{type} \right) &&\\
& \quad \quad \nLet x'_n = arg_n\texttt{.data in } \\
& \quad \quad \E(e_n, \Theta[x_n \mapsto x'_n, \beta_{n,1} \mapsto arg_n.B_{n,1}, \; ..., \; \beta_{n,m_n} \mapsto arg_n.B_{n,m_n}]) \\
& \quad \nIn \\
& \quad v\texttt{.pmatch} \;\; tl \;\; \texttt{case}_{c_1} \;\; ... \;\; \texttt{case}_{c_n} \\
\end{align*}
\end{center}

\begin{figure}[h]
  \caption{The definition of encoding of a single GADT from the signature $\Sigma$.}
  \label{fig:encodeGadtSig}
\begin{lstlisting}[mathescape=true,basicstyle=\small\ttfamily]
EncodeGADT(
    typecon (type$_1$, ..., type$_m$) T = {$\vec{\beta_1}$}. ($\vec{\sigma_1}$) $c_1$ of $\tau_1$
                                  | ...
                                  | {$\vec{\beta_n}$}. ($\vec{\sigma_n}$) $c_n$ of $\tau_n$)
) =
  T = $\mu$(s: {
    $A_1$: $\bot$..$\top$; ... $A_m$: $\bot$..$\top$
    pmatch: $\forall$(r: {R:$\bot$..$\top$})
            $\forall$(case$_{c_1}$: $\forall$(arg: env.T$_{c_1}$ $\wedge$ s.type) r.R)
            ...
            $\forall$(case$_{c_n}$: $\forall$(arg: env.T$_{c_n}$ $\wedge$ s.type) r.R)
            r.R
  })
  // and for each constructor from $c_1$ to $c_n$
  T$_{c_i}$ = $\mu$(s: env.T $\wedge$ {
    $B_{i,1}$ : $\bot$..$\top$, ..., $B_{i,m_i}$: $\bot$..$\top$
    $A_1$ = $\mathcal{T}$($\sigma_{i,1}$, $\theta^{(s)}_{\vec{\beta_i}}$), ..., $A_m$ = $\mathcal{T}$($\sigma_{i,m}$, $\theta^{(s)}_{\vec{\beta_i}}$)
    data: $\mathcal{T}$($\tau_i$, $\theta^{(s)}_{\vec{\beta_i}}$)
  })
  $c_i$ : $\forall$(ts: {$B_{i,1}$: $\bot$..$\top$; ...; $B_{i,m_i}$: $\bot$..$\top$})
       $\forall$(v: $\mathcal{T}$($\tau_i$, $\theta^{(ts)}_{\vec{\beta_i}}$))
         env.T $\wedge$ {$A_1$ = $\mathcal{T}$($\sigma_{i,1}$, $\theta^{(ts)}_{\vec{\beta_i}}$); ...; $A_m$ = $\mathcal{T}$($\sigma_{i,m}$, $\theta^{(ts)}_{\vec{\beta_i}}$)}
  $c_i$ = $\lambda$(ts: {$B_{i,1}$: $\bot$..$\top$; ...; $B_{i,m_i}$: $\bot$..$\top$}) $\lambda$(v: $\mathcal{T}$($\tau_i$, $\theta^{(ts)}_{\vec{\beta_i}}$))
    let s = $\nu$[env.T$_{c_i}$](s: {
      $B_{i,1}$ = ts.$B_{i,1}$, ..., $B_{i,m_i}$ = ts.$B_{i,m_i}$
      $A_1$ = $\mathcal{T}$($\sigma_{i,1}$, $\theta^{(s)}_{\vec{\beta_i}}$), ..., $A_m$ = $\mathcal{T}$($\sigma_{i,m}$, $\theta^{(s)}_{\vec{\beta_i}}$)
      data = v
      pmatch = $\lambda$(r: {R:$\bot$..$\top$})
               $\lambda$(case$_{c_1}$: $\forall$(arg: env.T$_{c_1}$ $\wedge$ s.type) r.R)
               ...
               $\lambda$(case$_{c_n}$: $\forall$(arg: env.T$_{c_n}$ $\wedge$ s.type) r.R)
                 let h=$\nu$[lib.Any](_: {z=s}) in case$_{c_i}$ h.z
    }) in s
\end{lstlisting}
\end{figure}
\subsection{Correctness proof for the GADT encoding}

Our proof will rely on many simple properties of \calculus{} derived from \calculus{} and of the $=:=$ relation. While they are rather trivial, a more thorough discussion of them can be found in section~4.5 of \cite{RadekThesis}. One notable example that we will use quite a lot is the following lemma:
\begin{lemma}[Lemma 6 from \cite{RadekThesis}]
  \label{lemma:6}
  For any types $A$, $B$, $C$ and $D$, if $\Gamma \vdash A =:= B$ and $\Gamma \vdash C =:= D$, then $\Gamma \vdash A \wedge C =:= B \wedge D$.
\end{lemma}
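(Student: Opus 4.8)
The plan is to unfold the definition of $=:=$ and reduce the goal to two ordinary subtyping derivations, each assembled from the standard intersection and transitivity rules. By definition of $=:=$, the two hypotheses give us four subtyping facts: $\Gamma \vdash A <: B$, $\Gamma \vdash B <: A$, $\Gamma \vdash C <: D$, and $\Gamma \vdash D <: C$. The goal $\Gamma \vdash A \wedge C =:= B \wedge D$ likewise unfolds to the two obligations $\Gamma \vdash A \wedge C <: B \wedge D$ and $\Gamma \vdash B \wedge D <: A \wedge C$.

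First I would observe that the statement is symmetric: swapping $A$ with $B$ and $C$ with $D$ turns the first obligation into the second while swapping the available hypotheses correspondingly. Hence it suffices to prove one direction, say $\Gamma \vdash A \wedge C <: B \wedge D$, and obtain the other by the same argument with the roles of the hypotheses exchanged.

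To discharge $\Gamma \vdash A \wedge C <: B \wedge D$, I would produce the two components demanded by \rn{<:-And} separately. For the left component, \rn{And$_1$-<:} gives $\Gamma \vdash A \wedge C <: A$, and composing this with the hypothesis $\Gamma \vdash A <: B$ via \rn{Trans} yields $\Gamma \vdash A \wedge C <: B$. Symmetrically, \rn{And$_2$-<:} gives $\Gamma \vdash A \wedge C <: C$, which combines with $\Gamma \vdash C <: D$ through \rn{Trans} to give $\Gamma \vdash A \wedge C <: D$. Applying \rn{<:-And} to these two derivations produces the desired $\Gamma \vdash A \wedge C <: B \wedge D$.

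There is no substantial obstacle here: the argument uses only the structural intersection rules (\rn{And$_1$-<:}, \rn{And$_2$-<:}, and \rn{<:-And}) together with \rn{Trans}, and in particular never appeals to the inversion rules or to any property of the environment $\Gamma$. The only point worth stating explicitly is the appeal to symmetry, which halves the work; everything else is a routine four-rule assembly, which is presumably why the thesis lists it as a basic supporting lemma.
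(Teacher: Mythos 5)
Your proof is correct and matches the paper's own argument, which simply says ``By \textsc{Trans} and \textsc{And} subtyping rules'' --- you have just spelled out the routine assembly (\rn{And$_1$-$<:$}, \rn{And$_2$-$<:$}, \rn{Trans}, \rn{$<:$-And}) that the paper leaves implicit. No gaps.
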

\begin{proof}
  By \textsc{Trans} and \textsc{And} subtyping rules\footnote{A mechanized proof can be found in the attached artifacts, in \texttt{translation/Helpers.v} under the name \texttt{eq\_and\_merge}.}.
\end{proof}

\begin{lemma}
  \label{lemma:sigmaPrime}
  The term for \texttt{lib} from the encoding and the ones generated by \texttt{EncodeGADT} are well-typed and satisfy the requirements stated in the \autoref{def:sigEnc}.
\end{lemma}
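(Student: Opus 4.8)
The plan is to verify the lemma by direct typing derivations, applying the term- and definition-typing rules of \autoref{fig:typing-rules} to the explicit object definitions produced by the encoding, checking for each the self-type and tag conformance required by \rn{\{\}-I}, and then reading off the resulting types to match them against \autoref{def:sigEnc}. Since both $\lib$ and $\env$ are introduced in let-bound positions, they are typed through \rn{\{\}-I} rather than \rn{Def-New}, so no tight-bounds obligation arises for them directly; the only \rn{Def-New} obligations come from object-valued \emph{fields}, whose self-types (such as $\set{U = \top}$ for $\lib$'s $\ident{unit}$ field) have equal bounds by construction and discharge the tight-bounds side condition at once. Tag conformance is trivial for every object tagged with $\lib.\ident{Any}$, since $\lib.\ident{Any} = \top$ and \rn{Top} makes any path conform.

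First I would handle $\lib$. Its type members $\ident{Any}$, $\ident{Unit}$, $\ident{Tuple}$ are typed with \rn{Def-Typ}; its field $\ident{unit}$, bound to a nested object with self-type $\set{U = \top}$, is typed with \rn{Def-New}; and its field $\ident{tuple}$, bound to a $\lambda$-abstraction, is typed with \rn{Def-All} after its body is typed with \rn{All-I}. The tuple object built in the body of $\ident{tuple}$ is typed with \rn{\{\}-I} against its tag $\lib.\ident{Tuple}$, its $\ident{fst}$ and $\ident{snd}$ fields being checked against the recursively-unfolded self-type via \rn{Rec-E}. Aggregating the member types with \rn{AndDef-I} yields exactly the $\lib$ interface assumed by the encoding.

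The substantive work is the $\env$ object produced by \texttt{EncodeGADT}. For each GADT, the base type $T$ and the per-constructor types $T_{c_i}$ are type-member definitions (\rn{Def-Typ}), while each constructor $c_i$ is a $\lambda$-abstraction (\rn{Def-All}/\rn{All-I}) whose body let-binds $s = \nu[\env.T_{c_i}](\dots)$ and is typed with \rn{\{\}-I}: the type-member definitions $B_{i,j} = ts.B_{i,j}$ and $A_l = \T(\sigma_{i,l}, \theta^{(s)})$ go through \rn{Def-Typ}, the field $\ident{data} = v$ through \rn{Def-Path}, and $s$ is shown to conform to the tag $\env.T_{c_i}$ by \rn{Rec-I} after widening the tight member bounds to $\bot..\top$ with \rn{Typ-<:-Typ} and \rn{Sub}. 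The delicate field is $\ident{pmatch}$: after introducing the binders $r$ and $\ident{case}_{c_j}$ with \rn{All-I}, the body $\kLet h = \nu[\lib.\ident{Any}](\set{z = s}) \kIn \ident{case}_{c_i}\, h.z$ must receive type $r.R$. This reduces to typing $h.z$ as $\env.T_{c_i} \wedge \single{s}$, the argument type expected by $\ident{case}_{c_i}$: since the field $z = s$ gives $h.z : \single{s}$ via \rn{Def-Path}, \rn{Rec-E} and \rn{Fld-E}, we obtain $h.z : \single{s}$ directly and $h.z : \env.T_{c_i}$ by \rn{Sngl-Trans} from $s : \env.T_{c_i}$, and combine them with \rn{\&-I}; the application then types with \rn{All-E}. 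This is the one place where singleton reasoning is essential, mirroring the role of \rn{Sngl-Self} in Preservation (\autoref{thm:preservation}).

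The main obstacle I anticipate is twofold. First, closing the $\ident{pmatch}$ case rests on the self-referential typing just sketched, where the stored object must be seen simultaneously at its nominal type $\env.T_{c_i}$ and at its own singleton type $\single{s}$; this is the only genuinely non-routine use of the singleton machinery, and the helper object $h$ exists precisely to furnish a field-path alias of $s$ for this purpose. Second, to match the inferred result type of $c_i$ against the type demanded by \autoref{def:sigEnc}, I must reconcile the two substitution schemes $\theta^{(s)}$ and $\theta^{(ts)}$: the constructed object defines its $A_l$ members using $s.B_{i,j}$, whereas the declared return type $\T(\tau_B, \theta_{\vec{\alpha}})$ refers to $ts.B_{i,j}$. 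Because the object sets $B_{i,j} = ts.B_{i,j}$, we have $\Gamma \vdash s.B_{i,j} =:= ts.B_{i,j}$, and propagating this equality through the encoded type arguments by congruence of $=:=$ (\autoref{lemma:6}) shows the inferred and declared types to be mutual subtypes, so \rn{Sub} concludes. Finally, \texttt{EncodeSigma} assembles these members into the $\env$ object via \rn{\{\}-I} (again with trivial $\lib.\ident{Any}$ tag conformance), and the constructor types so obtained are exactly the $\env.c_i$ types of \autoref{def:sigEnc}; every remaining case is a mechanical application of the definition-typing rules followed by aggregation with \rn{AndDef-I}.
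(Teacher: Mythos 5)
Your proposal follows essentially the same route as the paper's proof: the same rule-by-rule derivation through \rn{\{\}-I}, \rn{Def-Typ}, \rn{Def-All} and \rn{Def-Path}, the same resolution of the $\ident{pmatch}$ case by typing $h.z$ at $\single{s}$ via field selection and transferring $\env.T_{c_i}$ across the alias with \rn{Sngl-Trans}, and the same reconciliation of $\theta^{(s)}$ with $\theta^{(ts)}$ through $s.B_{i,j} =:= ts.B_{i,j}$ (which the paper packages as \autoref{lemma:equalSubstChange} rather than invoking congruence directly). The two points you flag as delicate are exactly the ones the paper's proof dwells on, so this is a faithful reconstruction.
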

\begin{proof}
The typing proof of \textit{lib} has been mechanized\footnote{See \texttt{translation/Library.v} within attached artifacts.} and quite simple so we skip it here and refer to the mechanization for details.

\newcommand\thetaShorthand[1]{\theta^{(#1)}_{\vec{\beta_i}}}
\newcommand\thetaS{\thetaShorthand{s}}
\newcommand\thetaTS{\thetaShorthand{ts}}

The \texttt{env} contains necessary definitions to encode each GADT from $\Sigma$.
Let's consider a single GADT definition and show that the term returned by \texttt{EncodeGADT} admits the expected type.
We start by \textsc{\{\}-I}, then the type definitions are trivially checked by the \textsc{Def-Typ} rule.
The non-trivial part are the constructors $c_i$.
We start with the \textsc{Def-All} rule, followed by two rather standard applications of \textsc{All-I} rule.

Now, we need to show that the new object created in the body of the lambda is itself well-typed and satisfies the expected type.
As always, we start with \textsc{\{\}-I} and will assign to the result the type that directly stems from the definition --- we will generalize it to the desired result type and to show that it fits the ascribed tag a few lines later.
This ``primitive'' result type will be\footnote{The symbol $\thetaS$ is defined as before in \autoref{eqn:ThetaShort}.}:
\[
\mu(s: \{ B_{i,1} = ts.B_{i,1} \} \wedge \ldots \wedge \{A_1 = \T(\sigma_{i,1}, \thetaS )\} \wedge ... \wedge \{\data: v.type\} \wedge \{ \pmatch : ... \})
\]
(we omit the $\pmatch$ type as it is very long but it is exactly the same as inside of $\env.T$).

The types can be checked trivially with \textsc{Def-Typ}, the \texttt{data} field gets ascribed type $v.type$ by \textsc{Def-Path}. 
Finally, the hardest part is the $\pmatch$ lambda.
We use \textsc{Def-All} followed by \textsc{All-I}, in a rather standard way.
Inside the body of the lambda we construct a value $h$ of type $\{ z : s.type \}$
(checked by \textsc{\{\}-I}, \textsc{Def-Path} and \textsc{Rec-E}) and use it to apply $h.z$ to $\texttt{case}_{c_i}$.
To typecheck this application we need to show that $h.z$ fits the type $\env.T_{c_i} \wedge s.type$.
We have $h.z : s.type$ trivially by \textsc{Fld-E}.
We proceed to show $h.z : \env.T_{c_i}$.
We have $h.z : s.type$, and by \textsc{Sngl-Trans} so now all it remains to be shown is that $s : \env.T_{c_i}$.
But from the assumption of \textsc{\{\}-I} we know that $s$ admits the ``primitive'' type described earlier.
We use \textsc{Rec-E}, then split the big intersection type into parts using \textsc{Sub} with \textsc{And\textsubscript{1}-<:} and \textsc{And\textsubscript{2}-<:}.
After we generalize each type to fit $\env.T_{c_i}$ (shown in the next paragraph), we can rebuild the type back using \textsc{\&-I} and `close' it with \textsc{Rec-I}.

We generalize the type of \texttt{data} using \textsc{Fld-E}, \textsc{Sngl-Trans} again with $v.type$ and then get back with \textsc{Fld-I}.
$v$ itself has type $\T(\tau_i, \thetaTS)$ and we need $\T(\tau_i, \thetaS)$.
We can get from $\thetaTS$ to $\thetaS$, because they differ only in mapping each $\beta_{i,j}$ to $ts.B_{i,j}$ and $s.B_{i,j}$ respectively.
But by definition of $s$, $s.B_{i,j} =:= ts.B_{i,j}$, so we can simply perform the replacement by \autoref{lemma:equalSubstChange}.

We generalize the types of $B_{i,j}$ from $\{B_{i,j} = ts.B_{i,j} \}$ to $\{ B_{i,j}: \bot..\top \}$ by \textsc{Sub} with \textsc{Typ-<:-Typ}.

The types $A_j$ stay as-is. This makes us fit the requirements of $\env.T_{c_i}$ apart from one --- its first requirement is for the overall type to also fit $\env.T$.
But that is easy --- the $\pmatch$ function already has the right type and we simply generalize types $A_j$ in the same way as we did with $B_{i,j}$.
Then we collect the parts with \textsc{\&-I} and \textsc{Rec-I}, in a standard way.

All this allows us to derive both $s : \env.T$ and finally $s : \env.T_{c_i}$, thus giving us $h.z : \env.T_{c_i} \wedge s.type$.
Thus the application $\texttt{case}_{c_i}\; h.z$ is valid --- its result is $r.R$, like we want.

We have shown that the constructor constructs a valid object, with the primitive type described earlier.
All that remains is to generalize it.
Firstly, we need to remember that the newly created $\nu(\ldots)$ object is bound to a path thanks to the expression $\letp{s}{\nu[\env.T_{c_i}](\ldots)}{s}$.
We have that $s : \mu(s: \ldots)$ where $\ldots$ is the primitive type mentioned before.
We now need to show that $s : \T((\sigma_{i,1}, \ldots, \sigma_{i,m}) T, \thetaTS)$ and so we will be able to finish with \textsc{Let}.
First, we can unfold $\T$ a bit, so that what we actually have to show is $\env.T \wedge \{ A_1 = \T(\sigma_{i,1}, \thetaTS; ...; A_m = \T(\sigma_{i,m}, \thetaTS) \}$.
We use exactly the same reasoning that we did inside of \textsc{\&-I} to get that $s : \env.T$ and $s : \env.T_{c_i}$ to satisfy the additional requirement of \textsc{\{\}-I}.
Now we just need to show $s : \{ A_j = \T(\sigma_{i,j}, \thetaTS) \}$.
But the primitive type already gives us $s : \{A_j = \T(\sigma_{i,j}, \theta{s})\}$ (after we unpack it with \textsc{Rec-E} etc., obviously).
Now, again we can easily show that the two types are equal by \autoref{lemma:equalSubstChange}.
Again, $\thetaTS$ and $\thetaS$ map each $\beta_{i,k}$ to $ts.B_{i,k}$ and $s.B_{i,k}$ respectively,
and from the primitive type we also have $s : \{ B_{i,k} = ts.B_{i,k} \}$, giving us $s.B_{i,k} =:= ts.B_{i,k}$ by \textsc{<:-Sel} and \textsc{Sel-<:}.
\end{proof}

Before we can prove the preservation of types, we need to show that we can `emulate' all equalities of our modified entailment relation $\vdash'$ from $\lambda_{2, G\mu}$ in \calculus{}.
To do so we prove the following lemma:

\eqEmulation*
\begin{proof}
  The proof proceeds by induction on the derivation of the judgment $\vdash'$.
  We inspect each case.

  \begin{prooftree}
    \AxiomC{$\vec{\alpha} \vdash' \tau : *$}
    \UnaryInfC{$\vec{\alpha} \vdash' \tau \equiv \tau$}
  \end{prooftree}

  \noindent This case is trivially handled by the \textsc{Refl} rule.

  \begin{prooftree}
    \AxiomC{$\vec{\alpha}, \Delta \vdash' \tau_1 \equiv \tau_2$}
    \UnaryInfC{$\vec{\alpha}, \alpha \equiv \alpha, \Delta \vdash' \tau_1 \equiv \tau_2$}
  \end{prooftree}

  \noindent
  We have some $\Gamma$ and $\Theta$ where $\Psi(\vec{\alpha}, \alpha \equiv \alpha, \Delta; \Theta; \Gamma)$ holds and want to prove that $\Gamma \vdash \T(\tau_1, \Theta) =:= \T(\tau_2, \Theta)$ holds.
  We can just use the inductive hypothesis: we just need to show that $\Psi(\vec{\alpha}, \Delta; \Theta; \Gamma)$ holds.
  But that is a weaker requirement\footnote{Technically it is equivalent because $\alpha \equiv \alpha$ does not introduce any new information --- it holds in every environment.},
  so it holds trivially.

  \begin{prooftree}
    \AxiomC{$\vec{\alpha}, \Delta[\alpha \mapsto \tau] \vdash' \tau_1[\alpha \mapsto \tau] \equiv \tau_2[\alpha \mapsto \tau]$}
    \noLine
    \UnaryInfC{$\alpha \text{ has no free occurrences in } \tau$}
    \UnaryInfC{$\vec{\alpha}, \tau \equiv \alpha, \Delta \vdash' \tau_1 \equiv \tau_2$}
  \end{prooftree}
  \begin{prooftree}
    \AxiomC{$\vec{\alpha}, \Delta[\alpha \mapsto \tau] \vdash' \tau_1[\alpha \mapsto \tau] \equiv \tau_2[\alpha \mapsto \tau]$}
    \noLine
    \UnaryInfC{$\alpha \text{ has no free occurrences in } \tau$}
    \UnaryInfC{$\vec{\alpha}, \alpha \equiv \tau, \Delta \vdash' \tau_1 \equiv \tau_2$}
  \end{prooftree}

  The two rules are symmetric so we will describe the proof for the second one, the first one is completely analogous.

  We want to show that $\Gamma \vdash \T(\tau_1, \Theta) =:= \T(\tau_2, \Theta)$.
  We know that $\Psi(\vec{\alpha}, \alpha \equiv \tau, \Delta; \Theta; \Gamma)$.
  This gives us $\Gamma \vdash \T(\alpha, \Theta) =:= \T(\tau, \Theta)$, equivalently $\Gamma \vdash \Theta(\alpha) =:= \T(\tau, \Theta)$.

  From the inductive hypothesis we know that $\Gamma \vdash \T(\tau_1[\alpha \mapsto \tau], \Theta) =:= \T(\tau_2[\alpha \mapsto \tau], \Theta)$
  if only we can show that $\Psi(\vec{\alpha}, \Delta[\alpha \mapsto \tau]; \Theta; \Gamma)$.
  From the assumption $\Psi(\vec{\alpha}, \alpha \equiv \tau, \Delta; \Theta; \Gamma)$ we have that for any $\tau'_1 \equiv \tau'_2 \in \Delta$,
  we have $\Gamma \vdash \T(\tau'_1, \Theta) =:= \T(\tau'_2, \Theta)$.
  Now, we need to show the same thing for any $\tau''_1 \equiv \tau''_2 \in \Delta[\alpha \mapsto \tau]$.
  But for any $\tau''_1 \equiv \tau''_2 \in \Delta[\alpha \mapsto \tau]$,
  we know that there are some $\tau'_1 \equiv \tau'_2 \in \Delta$ such that $\tau''_1 = \tau'_1[\alpha \mapsto \tau]$ and $\tau''_2 = \tau'_2[\alpha \mapsto \tau]$.
  Thus we know that $\Gamma \vdash \T(\tau'_1, \Theta) =:= \T(\tau'_2, \Theta)$
  and just need to show $\Gamma \vdash \T(\tau'_1[\alpha \mapsto \tau], \Theta) =:= \T(\tau'_2[\alpha \mapsto \tau], \Theta)$.
  We will get $\Gamma \vdash \T(\tau'_1[\alpha \mapsto \tau], \Theta) =:= \T(\tau'_1, \Theta)$ by \autoref{lemma:substEqual}
  and analogously for $\tau_2$ and get the desired conclusion by transitivity of $=:=$.

  We have proven $\Psi(\vec{\alpha}, \Delta[\alpha \mapsto \tau]; \Theta; \Gamma)$,
  which gives us $\Gamma \vdash \T(\tau_1[\alpha \mapsto \tau], \Theta) =:= \T(\tau_2[\alpha \mapsto \tau], \Theta)$.
  Again, in the same way we use \autoref{lemma:substEqual} to show that $\Gamma \vdash \T(\tau_1[\alpha \mapsto \tau], \Theta) =:= \T(\tau_1, \Theta)$
  and also analogously for $\tau_2$ --- then the conclusion follows trivially from transitivity of $=:=$.

  In the last rule we handle the $T$ type constructor could stand for the GADT type constructors but also tuple and arrow types.
  However, due to our restriction the arrow rule is removed, so we stay with the following two rules: one for GADTs and one for tuples.

  \begin{prooftree}
      \AxiomC{$\vec{\alpha}, \vec{\tau_A} \equiv \vec{\tau_B} \vdash' \tau_1 \equiv \tau_2$}
      \UnaryInfC{$\vec{\alpha}, (\vec{\tau_A}) T \equiv (\vec{\tau_B}) T \vdash' \tau_1 \equiv \tau_2$}
    \end{prooftree}
  \begin{prooftree}
      \AxiomC{$\vec{\alpha}, \tau_{A_1} \equiv \tau_{B_1}, \tau_{A_2} \equiv \tau_{B_2} \vdash' \tau_1 \equiv \tau_2$}
      \UnaryInfC{$\vec{\alpha}, \tau_{A_1} * \tau_{A_2} \equiv \tau_{B_1} * \tau_{B_2} \vdash' \tau_1 \equiv \tau_2$}
  \end{prooftree}

  Let's analyze the tuple example first.

  To get the desired equation we can just apply the inductive hypothesis,
  so it suffices that we show $\Psi(\vec{\alpha}, \tau_{A_1} \equiv \tau_{B_1}, \tau_{A_2} \equiv \tau_{B_2}, \Theta, \Gamma)$.

  From our assumptions, we have $\Psi(\vec{\alpha}, \tau_{A_1} * \tau_{A_2} \equiv \tau_{B_1} * \tau_{B_2}; \Theta; \Gamma)$.
  Unpacking the definition of $\Psi$, we know that $\Gamma \vdash \T(\tau_{A_1} * \tau_{A_2}, \Theta) =:= \T(\tau_{B_1} * \tau_{B_2}, \Theta)$.
  Let's introduce shorthands for the tuple types:
  \[
    U_1 \triangleq \T(\tau_{A_1} * \tau_{A_2}, \Theta)
  \]
    \[
    U_2 \triangleq \T(\tau_{B_1} * \tau_{B_2}, \Theta)
  \]
  Unfolding $\T$, we get
  \[ U_1 = \texttt{lib.Tuple} \wedge \{ T_1 = \T(\tau_{A_1}, \Theta) \} \wedge \{ T_2 = \T(\tau_{A_2}, \Theta) \} \]
  \[ U_2 = \texttt{lib.Tuple} \wedge \{ T_1 = \T(\tau_{B_1}, \Theta) \} \wedge \{ T_2 = \T(\tau_{B_2}, \Theta) \} \]
  We will want to use \textsc{Typ-$<:$-Typ-Inv$_1$} and \textsc{Typ-$<:$-Typ-Inv$_2$} to get $\Gamma \vdash \T(\tau_{A_1}, \Theta) =:= \T(\tau_{B_1}, \Theta)$.
  However, we cannot do this straight away, as the $\texttt{lib.Tuple}$ is not acceptable in the $\searrow$ judgment.
  Instead, let's use the judgment $\Gamma \vdash \texttt{lib.Tuple} =:= \mathit{TupleDef}$
  where $\mathit{TupleDef}$ stands for the tuple type definition $\mu(\ldots)$ found in \texttt{lib}.
  By \autoref{lemma:6}, we can replace the reference with the expanded definition to get
  $U'_1 = \mathit{TupleDef}\wedge \{ T_1 = \T(\tau_{B_1}, \Theta) \} \wedge \{ T_2 = \T(\tau_{B_2}, \Theta) \}$
  with an equation $\Gamma \vdash U'_1 =:= U_1$, and analogously for $U'_2$.
  Now it suffices to show $\Gamma \vdash U'_1 =:= U'_2$ and through the helper equations and transitivity we will get the desired equation.
  Because $\mathit{TupleDef} = \mu(\ldots)$,
  we can now derive $U'_1 \searrow \{ T_1 = \T(\tau_{A_1}, \Theta) \}$ and analogously $U'_2 \searrow \{ T_1 = \T(\tau_{B_1}, \Theta) \}$,
  thus using the inversion rules we get $\Gamma \vdash \T(\tau_{A_1}, \Theta) =:= \T(\tau_{B_1}, \Theta)$.
  We have shown that indeed $\Gamma$ with $\Theta$ satisfies the equation $\tau_{A_1} \equiv \tau_{B_1}$\footnote{A mechanized version of this part of the proof (inverting the tuple equality) is attached with the paper. See \texttt{translation/DestructTupleLemma.v}.}.
  We show $\tau_{A_2} \equiv \tau_{B_2}$ completely analogously.

  The case for GADTs is also analogous --- we have an equality
  $\env.T \wedge \{A1 = \T(\tau_{A_1}, \Theta) \} \wedge \ldots =:= \env.T \wedge \{A1 = \T(\tau_{B_1}, \Theta) \} \wedge \ldots$
  and using the same technique we invert it to get $\T(\tau_{A_1}, \Theta) =:= \T(\tau_{B_1}, \Theta)$
  and the analogous for all other pairs of corresponding type parameters.
  This way, we show $\Psi(\vec{\alpha}, \vec{\tau_A} \equiv \vec{\tau_B}; \Theta; \Gamma)$, which (as above) gives us the desired equation by the inductive hypothesis.
\end{proof}

\encTypPres*
\begin{proof}
  We prove the theorem by induction on the derivation of the typing judgment.
  
  \begin{itemize}
    \item (\textbf{ty-unit}) We get $\Gamma \vdash \lib.\texttt{unit} : \lib.\texttt{Unit}$ by weakening and from the trivial conclusion that $\Gamma \vdash \lib : \{ \texttt{unit} : \lib.\texttt{Unit} \}$ because $\Gamma$ is defined to contain the bindings of $\mathcal{S}(\Sigma)$.
    
    \item (\textbf{ty-var}) From the assumption we know that $\Pi(\mathit{xf}) = \tau$, since $\E(\mathit{xf}, \Theta) = \Theta(\mathit{xf})$, we get the result from $\Phi(\Delta; \Pi; \Theta; \Gamma)$.
    
    \item (\textbf{ty-tup}) Let's introduce aliases for shortening the notation: 
    \[
    \mathit{tupleApp} \triangleq 	 
        \texttt{lib.tuple} \;
          \nu(\{T_1=\T(\tau_1, \Theta); T_2=\T(\tau_2, \Theta)\})
          \; x_1 \; x_2
    \]
    \[
    \mathit{tupleTyp} \triangleq
        \texttt{lib.Tuple} \wedge \{T_1=\T(\tau_1, \Theta) \} \wedge \{ T_2=\T(\tau_2, \Theta)\}
    \]

    We want to prove that 
    \[
    \Gamma \vdash 
	\letp{x_1}{\E(e_1, \Theta)}{
      \letp{x_2}{\E(e_2, \Theta)}{
       \mathit{tupleApp}
      }
    }
    : \mathit{tupleTyp}
    \]
  
   From the inductive hypotheses we have $\Gamma \vdash \E(e_1, \Theta) : \T(\tau_1, \Theta)$ and analogously for $e_2$ with $\tau_2$. Let's define
   \[ \Gamma' \triangleq \Gamma, x_1 : \T(\tau_1, \Theta), x_2 : \T(\tau_2, \Theta) \]
    We can use the $\textsc{Let}$ rule to get to $\Gamma' \vdash \mathit{tupleApp} : \mathit{tupleTyp}$.
  There we can just use the standard application rules and the definition of $\texttt{lib.tuple}$.
  
  \item (\textbf{ty-fst}) We need to prove that $\Gamma \vdash \letp{x}{\E(e, \Theta)}{x.\mathit{fst}} : \T(\tau_1, \Theta)$.
   From the inductive hypothesis, we have $\Gamma \vdash \E(e, \Theta) : \texttt{lib.Tuple} \wedge \{T_1=\T(\tau_1, \Theta) \} \wedge \{ T_2=\T(\tau_2, \Theta)\}$, we apply the \textsc{Let} rule and then we can derive\footnote{
     In short, from $x: \texttt{lib.Tuple}$ we get $x: \{ \mathit{fst} : x.T_1 \}$ and then from $x : \{T_1=\T(\tau_1, \Theta) \}$ we get $x.T_1 <: \T(\tau_1, \Theta)$ allowing us to derive the conclusion.
   
 } $\Gamma, x: \ldots \vdash x : \{ \mathit{fst} : \T(\tau_1, \Theta) \} $ to get the desired result.
  
  \item (\textbf{ty-snd}) Analogously as above.
  
  \item (\textbf{ty-lam}) Let's allocate a fresh $x_0 \notin \text{dom}(\Gamma)$. We can easily show that since $\Phi(\Delta; \Pi; \Theta; \Gamma)$, then also $\Phi(\Delta; \Pi, x : \tau_1; \Theta[x \mapsto x_0]; \Gamma, x_0 : \T(\tau_1, \Theta))$. This allows us to get from the inductive hypothesis that 
  \[
  \Gamma, x_0 : \T(\tau_1, \Theta) \vdash \E(e, \Theta[x \mapsto x_0]) : \T(\tau_2, (\Theta[x \mapsto x_0]))
  \]
  Then, with the \textsc{All-I} rule we prove that 
  \[
  \Gamma \vdash \lambda(x_0: \T(\tau_1, \Theta)) \, \E(e, \Theta[x \mapsto x_0]) : \forall(x_0 : \T(\tau_1, \Theta)) \, \T(\tau_2, \Theta)
  \]
  
  \item (\textbf{ty-app}) From the inductive hypotheses we have (we pick a fresh $x_0$ and $\alpha$-rename the type properly):
  \[
  \Gamma \vdash \E(e_1, \Theta) : \forall(x_0 : \T(\tau_1, \Theta)) \T(\tau_2, \Theta)
  \]
  and
  \[
  \Gamma \vdash \E(e_2, \Theta) : \T(\tau_1, \Theta)
  \]
  We can then use the \textsc{Let} rule twice to get from the goal of 
  \[
  \Gamma \vdash 
  \letp{x_1}{\E(e_1, \Theta)}{
    \letp{x_2}{\E(e_2, \Theta)}{
      x_1 \; x_2
    }
  } : \T(\tau_2, \Theta)
  \]
  to
  \[
  \Gamma, x_1 : \forall(x_0 : \T(\tau_1, \Theta)) \T(\tau_2, \Theta), x_2 : \T(\tau_1, \Theta) \vdash 
  x_1 \; x_2 : \T(\tau_2, \Theta)
  \]
  and finish with \textsc{All-E} by noticing that $\T(\tau_2, \Theta)[x_2/x_0] = \T(\tau_2, \Theta)$, because from construction $x_0 \notin \text{fv}(\T(\tau_2, \Theta))$.

  \item (\textbf{ty-tlam}) We need to show 
  \[
  \Gamma \vdash \E(\Lambda \alpha. e, \Theta) : \T(\forall \alpha. \tau, \Theta)
  \]
  Unfolding the encoding (assuming that $x_{\alpha}$ is some fresh name) that is:
  \[
  \Gamma \vdash \lambda(x_{\alpha}: \{T: \bot..\top\}) \, \E(e, \Theta[\alpha \mapsto x_{\alpha}.T]) : \forall(x_{\alpha}: \{T: \bot..\top\}) \, \T(\tau, \Theta[\alpha \mapsto x_{\alpha}.T])
  \]
  Since we have $\Phi(\Delta; \Pi; \Theta; \Gamma)$ then clearly we also have
  \[
  \Phi(\Delta, \alpha; \Pi; \Theta[\alpha \mapsto x_{\alpha}.T]; \Gamma, x_{\alpha} : \{T: \bot..\top\})
  \]
  Thus from the inductive hypothesis, we get
  \[
  \Gamma, x_{\alpha} : \{T: \bot..\top\} \vdash \E(e, \Theta[\alpha \mapsto x_{\alpha}.T]) : \T(\tau, \Theta[\alpha \mapsto x_{\alpha}.T])
  \]
  
  Clearly, $x_{\alpha} \notin \text{fv}(\{T: \bot..\top\})$. 
  Thus we can use the \textsc{All-I} rule to get what we needed:
  \[
  \Gamma \vdash \lambda(x_{\alpha}: \{T: \bot..\top\}) \, \E(e, \Theta[\alpha \mapsto x_{\alpha}.T]) : \forall(x_{\alpha}: \{T: \bot..\top\}) \, \T(\tau, \Theta[\alpha \mapsto x_{\alpha}.T])
  \]
  
  \item (\textbf{ty-tapp}) We want to show that
   \[
   \Gamma \vdash \E(e[\tau_1], \Theta) : \T(\tau[\alpha \mapsto \tau_1], \Theta)
   \]
   
    Unfolding the $\E$, that is 
  \[
  \Gamma \vdash 
  \letp{x_1}{\E(e, \Theta)}{
     \letp{tl}{\nu(\_: \{T = \T(\tau_1, \Theta)\})}{
	x_1 \; tl
     }
  } : \T(\tau[\alpha \mapsto \tau_1], \Theta)
 \]
 From the inductive hypothesis we know that, for some fresh $x_{\alpha}$:
 \[
 \Gamma \vdash \E(e, \Theta) : \forall(x_{\alpha} : \{T:\bot..\top\})  \T(\tau, \Theta[\alpha \mapsto x_{\alpha}.T])
 \]
 Thus by applying the \textsc{Let} rule twice\footnote{We also use \textsc{\{\}-I} and a trivial instantiation \textsc{Rec-E} to typecheck $tl$, but it's completely standard.}. Let's define
 \[
 \Gamma' \triangleq \Gamma, x_1 : \forall(x_{\alpha} : \{T:\bot..\top\})  \T(\tau, \Theta[\alpha \mapsto x_{\alpha}.T]), tl : \{T = \T(\tau_1, \Theta)\}
 \]
 We can then transform our goal into
 \[
 \Gamma' \vdash x_1 \; tl : \T(\tau[\alpha \mapsto \tau_1], \Theta)
 \]
 We can use \textsc{Sub} with \textsc{Typ-<:-Typ} to get $\Gamma' \vdash tl : \{T: \bot..\top\}$. Thus, we can use the \textsc{All-E} rule to get 
 \[
 \Gamma' \vdash x_1 \; tl : \T(\tau, \Theta[\alpha \mapsto x_{\alpha}.T])[tl/x_{\alpha}]
 \]
 Now, knowing that $\Gamma' \vdash tl.T =:= \T(\tau_1, \Theta)$, we need to use \autoref{lemma:substTranslation} (proven just below) to get 
 \[
 \Gamma' \vdash \T(\tau, \Theta[\alpha \mapsto x_{\alpha}.T])[tl/x_{\alpha}] =:= \T(\tau[\alpha \mapsto \tau_1], \Theta)
 \]
 With that we can conclude by using the \textsc{Sub} rule and replacing the type we have with the desired type.
 
  \item (\textbf{ty-fix}) We want to prove 
  \begin{align*}
  \Gamma \vdash \, & \letp{h}{
     \nu[lib.Any](\mathit{self}: \{\mathit{fix} = \lambda(x: \texttt{lib.Unit}) \; \E(e, \Theta[f \mapsto \mathit{self}.\mathit{fix} \; \texttt{lib.unit}]) \})
  }{\\ & \quad
    h.\mathit{fix} \; \texttt{lib.unit}
  }
  : \T(\tau, \Theta)
  \end{align*}

  Let's define
  \[
  \Gamma' \triangleq \Gamma, \mathit{self}: \mu(\mathit{self}: \{\mathit{fix}: \forall(x: \texttt{lib.Unit}) \T(\tau, \Theta) \})
  \]
  First, we show that $\Phi(\Delta; \Pi, f: \tau; \Gamma'; \Theta[f \mapsto \mathit{self}.\mathit{fix} \; \texttt{lib.unit}])$. This goes easily by \textsc{Rec-E}, \textsc{Fld-E} and \textsc{All-E}.
  
  This allows us to use the inductive hypothesis to get 
  \[
  \Gamma' \vdash \E(e, \Theta[f \mapsto \mathit{self}.\mathit{fix} \; \texttt{lib.unit}]) : \T(\tau, \Theta)
  \]
  This allows us to typecheck the $\nu$ in $h$ as $\mu(\mathit{self}: \{\mathit{fix}: \forall(x: \texttt{lib.Unit}) \T(\tau, \Theta) \})$ (we also need to show that $\mathit{self}$ types to \texttt{lib.Any}, but that is trivial since $\texttt{lib.Any} <: \top$).
  
  Then we proceed with the \textsc{Let} rule followed by \textsc{Rec-E}, which transforms our $h$ to have type $\{\mathit{fix}: \forall(x: \texttt{lib.Unit}) \T(\tau, \Theta) \}$ and then we can easily finish by \textsc{All-E}.
  
  \item (\textbf{ty-let}) Follows directly by the \textsc{Let} rule. Clearly $\Theta_x(x) \notin \text{fv}(\T(\tau_1, \Theta))$ as $\lambda_{2, G\mu}$ does not allow variable references in types.
  
  \item (\textbf{ty-eq}) From the inductive hypothesis we have $\Gamma \vdash \E(e, \Theta) : \T(\tau_1, \Theta)$. We also know that $\Delta \vDash \tau_1 \equiv \tau_2$, so by \autoref{lemma:eqpreserved}, we have $\Gamma \vdash \T(\tau_1, \Theta) =:= \T(\tau_2, \Theta)$, thus by \textsc{Sub} we get the desired result.
  
  \item (\textbf{ty-cons}) 
  To avoid getting lost in identifier names, let's rewrite the rule to the following:
  
  \begin{prooftree}
    \AxiomC{$\Sigma(c_i) = \forall \vec{\alpha}. \tau_A \to \tau_B $}
    \AxiomC{$\Delta \vdash \vec{\tau} : \vec{*}$}
    \AxiomC{$\Delta; \Pi \vdash e : \tau_A [ \vec{\alpha} \mapsto \vec{\tau}]$}
    \RightLabel{\textbf{(ty-cons)}}
    \TrinaryInfC{$\Delta; \Pi \vdash c_i[\vec{\tau}](e) : \tau_B [ \vec{\alpha} \mapsto \vec{\tau}]$}
  \end{prooftree}

  and let $\vec{\tau} \triangleq (\tau_1, ..., \tau_m)$, $\vec{\alpha} \triangleq (\alpha_1, ..., \alpha_m)$.
  
  We need to prove that 
  \begin{align*}
  \Gamma \vdash
  \, & \letp{ts}{\nu(ts: \{ B_{i,1} = \T(\tau_1, \Theta); ... \})
  }{
   \\ & 
    \letp{x}{\E(e, \Theta)}{
    \\ & 
      \letp{tmp}{\env.c_i \; ts}{tmp \; x}
    }
  } \\ & 
  : \T(\tau_B[\vec{\alpha} \mapsto \vec{\tau}], \Theta)
  \end{align*}
  
  From the assumption about $\Sigma(c_i)$ we know that $\env.c_i$ has the type:
  \[
  \forall(ts: \{ B_{i,1} = \T(\tau_1, \Theta); ... \}) \; \forall(v: \T(\tau_A, \theta^{(ts)}_{\vec{\alpha}})) \; \T(\tau_B, \theta^{(ts)}_{\vec{\alpha}})
  \]
   where $\theta^{(ts)}_{\vec{\alpha}}$ is defined as before.
  
  First we prove the typing of $ts$ by applying \textsc{\{\}-I} and then numerously applying \textsc{Def-Typ}.
  From the inductive hypothesis we get $\Gamma \vdash \E(e, \Theta) : \T(\tau_A[\vec{\alpha} \mapsto \vec{\tau}])$, so also $x$ will have that type.

  Now we will use \textsc{Let} and \textsc{All-E} twice to finish the derivation.
  
  First, we can type $tmp$ as $(\forall(v: \T(\tau_A, \theta^{(ts)}_{\vec{\alpha}})) \; \T(\tau_B, \theta^{(ts)}_{\vec{\alpha}}) )[ts \mapsto ts]$. %
  
  Because $\tau_A$ and $\tau_B$ should not have any free variables apart from $\vec{\alpha}$, we can type $tmp$ as $(\forall(v: \T(\tau_A, \Theta \cup \theta^{(ts)}_{\vec{\alpha}})) \; \T(\tau_B, \Theta \cup \theta^{(ts)}_{\vec{\alpha}}) )[ts \mapsto ts]$.
  
  Let $\Gamma'$ be the context $\Gamma$ extended with $ts$, $x$ and $tmp$ with types as shown above. We have $\Gamma' \vdash ts.{B_{i,j}} =:= \T(\tau_j, \Theta)$. Thus, we can use the \autoref{lemma:substTranslation} repeatedly to get
  \[
  \Gamma' \vdash \T(\tau_A, \Theta \cup \theta^{(ts)}_{\vec{\alpha}})[ts \mapsto ts] =:= \T(\tau_A[\vec{\alpha} \mapsto \vec{\tau}], \Theta)
  \]
  and the same for $\tau_B$. Thus, using the \textsc{Sub} rule, we show that indeed the argument $x$ fits the expected type in $tmp$ and the return type of the application $tmp \; x$ can indeed be coerced to what we wanted in the first place.
  
  \item (\textbf{ty-case}) We need to typecheck the following term:
  
  \begin{align*}
  & \letp{tl}{\nu(tl: \{R = \T(\tau_2, \Theta) \})}{ \\
  & \letp{v}{\E(e, \Theta) }{ \\ 
  & \letp{case_{c_1}}{
	(
	\lambda(arg_1: \env.T_{c_1} \wedge v.type) \\
	& \quad\quad\quad \letp{x_1}{arg_1.\data}{ \\
	& \quad\quad\quad  \E(e_1, \Theta[\beta_{1,1} \mapsto arg.B_{1,1}, ...])
	} \\
  & )
    }{ \\      
    & ...\\
    & v.\pmatch{} \; tl \; case_{c_1} \; ... \; case_{c_n}
    } 
  }
  }
  \end{align*}
  
  as having type $\T(\tau_2, \Theta)$.

  The last application is a series of multiple applications so technically,
  it should be replaced with a stack of let bindings with temporary names for intermediate step --- we skip that this time for clarity, as the application itself is pretty routine.
  
  From the inductive hypothesis we get that $\Gamma \vdash \E(e, \Theta) : \T(\tau_1, \Theta)$.
  
  From the restrictions placed on our variant of the calculus, we know that all pattern matching is only performed on GADTs,
  so $\tau_1 = (\delta_1, \dots, \delta_m) T$ for some types $\vec{\delta}$ and some GADT name~$T$.
  Thus, we have  $\Gamma \vdash \E(e, \Theta) : \env.T \wedge \{ A_1 = \T(\delta_1, \Theta) \} \wedge \ldots$.
  
  From that and through routine usage of \textsc{Let} we have $v : \env.T \wedge \{ A_1 = \T(\delta_1, \Theta) \} \wedge \ldots$.
  Thus, $v.pmatch$ has type $\forall(r: \{R: \bot..\top \}) \forall(case_{c_1}: \forall(arg: \env.T_{c_1} \wedge s.type) r.R) \ldots r.R$.
  So by rather standard application of \textsc{All-E} we will have 
  \[
  v.pmatch \; tl : \forall(case_{c_1}: \forall(arg: \env.T_{c_1} \wedge s.type) tl.R) \ldots tl.R
  \]
  Thus, by continuously using \textsc{All-E}, assuming that each $case_{c_i}$ has the expected type (we will show that in a moment),
  we get that the whole application types to $tl.R$.
  Since we have $tl.R =:= \T(\tau_2, \Theta)$ by the \textsc{Sel} rules, we get exactly what we needed.

  Now we just need to verify that for each branch, the generated lambda has the correct type.
  
  Let's consider the branch $i$.
  From the assumptions of \textbf{ty-case}, we have $\Delta; \Pi \vdash (p_i \Rightarrow e_i) : (\tau_1 \Rightarrow \tau_2)$
  which means in particular that $\Delta \vdash p \downarrow \tau_1 \Rightarrow (\Delta', \Pi')$
  and that $\Delta, \Delta'; \Pi, \Pi' \vdash e_i : \tau_2$.
  From the restrictions, we know that $p_i$ must have the form $c_i[\beta_{i, 1}, \ldots, \beta_{i, m_i}](x_i)$
  and so that the first judgment must have been derived by \textbf{pat-cons'}. %
  
  From that, we get $\Sigma(c_i) = \forall \vec{\beta_i}. \tau_C \to (\sigma_{i,1}, \ldots, \sigma_{i, m_i}) T$.
  Note that the $\sigma_{i,j}$ here are the same as in the definition of $\env.T_{c_i}$, this will be important in a moment.
  More importantly, because we were matching type $(\vec{\delta}) T$, we know that $\Delta' = \vec{\beta_i}, \vec{\sigma_i} \equiv \vec{\delta}$ and $\Pi' = x_i : \tau_C$.
  Thus the second judgment has form $\Delta, \vec{\beta_i}, \vec{\sigma_i} \equiv \vec{\delta}; \Pi, x_i : \tau_C \vdash e_i : \tau_2$.
  
  We want to show that
  \[
  \lambda(arg_i: \env.T_{c_i} \wedge v.type) \;\; \letp{x'_i}{arg_i.data}{ 
    \E(e_i, \Theta[x_i \mapsto x'_i, \beta_{i,1} \mapsto arg.B_{i,1}, \ldots])
  }\]
 types to $\T(\tau_2, \Theta)$. 
  We can apply \textsc{All-I} followed by \textsc{Let}, knowing that $arg_i.data : \T(\tau_C, \hat{\Theta})$ where 
  \[
  \hat{\Theta} \triangleq [\beta_{i,1} \mapsto arg_i.B_{i,1}, ...]
  \]
  to get to the goal\footnote{Technically in the meantime our environment accumulated some more identifiers, for example earlier case branch definitions are visible in the latter ones - we can ignore them though because independent branches do not rely on them and can be typechecked without them, and then add them using the weakening lemma.}
  \[
  \Gamma' \vdash \E(e_i, \Theta') : \T(\tau_2, \Theta)
  \]
  where \[
  \Gamma' \triangleq \Gamma, v : \env.T \wedge \{ A_1 = \T(\delta_1, \Theta) \} \wedge \ldots, arg_i : \env.T_{c_i} \wedge v.type, x'_i : \T(\tau_C, \hat{\Theta})
  \]
  \[
  \Theta' \triangleq \Theta[x_i \mapsto x'_i, \beta_{i,1} \mapsto arg.B_{i,1}, \ldots]
  \]
  
  Now, as long as we can ensure that $\Phi(\Delta, \vec{\beta_i}, \vec{\sigma_i} \equiv \vec{\delta}; \Pi, x_i : \tau_C; \Theta'; \Gamma')$ we can finish by applying the inductive hypothesis from the assumption 
  \[ 
  \Delta, \vec{\beta_i}, \vec{\sigma_i} \equiv \vec{\delta}; \Pi, x_i : \tau_C \vdash e_i : \tau_2
  \]
  giving us 
  \[
  \Gamma' \vdash \E(e_i, \Theta') : \T(\tau_2, \Theta')
  \]
    
  Trivially, $\T(\tau_2, \Theta') = \T(\tau_2, \Theta)$, because we can assume all the added identifiers $\beta_{i,j}$ are fresh in $\tau_2$.
 
  So now we just need to show $\Phi(\Delta, \vec{\beta_i}, \vec{\sigma_i} \equiv \vec{\delta}; \Pi, x_i : \tau_C; \Theta'; \Gamma')$ to finish the proof.
  
  We already know that $\Phi(\Delta; \Pi; \Theta; \Gamma)$. What remains to be shown is that the conditions of $\Phi$ hold for the added parts:
  \begin{itemize}
    \item For each $\beta_{i,j}$ we know that $\Theta'(\beta_{i,j}) = arg_i.B_{i,j}$ and since $arg_i : \env.T_{c_i}$, we can get\footnote{We do this based on the definition of $\env.T_{c_i}$ in $\Gamma'$; we need to use \textsc{Rec-E} and \textsc{And\textsubscript{i}-<:} to get there.} $arg_i : \{ B_{i,j}: \bot..\top \}$.
    
    \item We need to show that $\Gamma' \vdash \Theta'(x_i) : \T(\tau_C, \Theta')$ which is equivalent to $\Gamma' \vdash x'_i : \T(\tau_C, \Theta')$ but since $\T(\tau_C, \Theta') = \T(\tau_C, \hat{\Theta})$\footnote{Because $\text{fv}(\tau_C) = \{\beta_{i, 1}, ..., \beta_{i, m_i} \}$ and for each $j$, $\Theta'(\beta_{i,j}) = \hat{\Theta}(\beta_{i,j})$.}, it follows directly from \textsc{Var}.
    
    \item For each $\sigma_{i,j} \equiv \delta_j$, we need to show that $\Gamma' \vdash \T(\sigma_{i,j}, \Theta') =:= \T(\delta_j, \Theta')$. For the same reason as previously, $\T(\sigma_{i,j}, \Theta') = \T(\sigma_{i,j}, \hat{\Theta})$. Now, because $arg_i : \env.T_{c_i}$, we have $arg_i : \{ A_j = \T(\sigma_{i,j}, \hat{\Theta}) \}$. We also have already mentioned that $v : \{ A_j = \T(\delta_j, \Theta) \}$. With that, we can prove the desired equality: 
    \[
    \T(\sigma_{i,j}, \Theta') = \T(\sigma_{i,j}, \hat{\Theta}) =:= arg_i.A_j =:= v.A_j =:= \T(\delta_j, \Theta)
    \]
    The two outer equalities rely on the \textsc{Sel} rules and the inner equality is derived by the two \textsc{Sngl} rules, because we have $arg_i : v.type$.

  \end{itemize}
 
  \end{itemize}
\end{proof}

\begin{lemma}[Substitution through translation]
  \label{lemma:substTranslation}
  For every $\Delta$, $\alpha$, $\tau_1$ and $\tau_2$, if $\Delta \vdash \tau_1 : *$ and 
  \mbox{$\Delta, \alpha \vdash \tau_2 : *$},
  then for every $\Gamma$ and $\Theta$ such that $\Psi(\Delta; \Theta; \Gamma)$, if $\Gamma \vdash a.T =:= \T(\tau_1, \Theta)$ for some $a$, then for every fresh $x_{\alpha}$, we have $\Gamma \vdash \T(\tau_2, \Theta[\alpha \mapsto x_{\alpha}.T])[a/x_{\alpha}] =:= \T(\tau_2[\alpha \mapsto \tau_1], \Theta)$.
\end{lemma}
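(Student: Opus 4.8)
The plan is to proceed by structural induction on $\tau_2$, using the well-formedness premise $\Delta, \alpha \vdash \tau_2 : *$ to know that every free type variable of $\tau_2$ lies in $\Delta, \alpha$ (so that $\Theta$, extended as needed, is defined on them). Before starting I would record the congruence facts that $=:=$ is reflexive and transitive (immediate from \textsc{Refl} and \textsc{Trans}), that it is a congruence for single type-member definitions $\{B = \cdot\}$ (by \textsc{Typ-<:-Typ} applied in both directions), that it is a congruence for intersections (this is exactly \autoref{lemma:6}), and that it is a congruence for the encoded function and universal binders (via \textsc{All-<:-All}, used in each direction). These let me reduce each compound case to the corresponding statement about the immediate subterms.

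In the base cases the two sides are essentially identical. When $\tau_2 = \alpha$, the left-hand side $\T(\alpha, \Theta[\alpha \mapsto x_\alpha.T])[a/x_\alpha]$ reduces to $a.T$, while the right-hand side is $\T(\tau_1, \Theta)$, so the goal is literally the hypothesis $\Gamma \vdash a.T =:= \T(\tau_1, \Theta)$. When $\tau_2$ is $\mathbf{1}$ or a variable $\beta \neq \alpha$, both sides are syntactically equal — using that $x_\alpha$ is fresh and hence does not occur in $\Theta(\beta)$ — so the goal follows from \textsc{Refl}. For the compound cases I would unfold the definition of $\T$ and observe that the \calculus{}-level substitution $[a/x_\alpha]$ commutes with the structure of the encoding and that the \Xis{}-level substitution $[\alpha \mapsto \tau_1]$ commutes with the term constructors of $\tau_2$. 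The product and GADT-application cases $\sigma_1 * \sigma_2$ and $(\sigma_1,\dots,\sigma_n) T$ then reduce to the inductive hypotheses on the components: each yields an equality on the corresponding type-member slot, the head (\texttt{lib.Tuple} or \texttt{env.T}) is untouched by both substitutions, and I assemble the whole equality with the type-member and intersection congruences. The arrow case $\sigma_1 \to \sigma_2$ is similar; here I would first note that since \Xis{} arrows are non-dependent the binder introduced by the encoding does not occur in the codomain, so applying the inductive hypotheses to $\sigma_1$ and $\sigma_2$ and combining them through \textsc{All-<:-All} (carrying both directions of each equality, as the domain is contravariant) yields the result.

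The main obstacle, and the most delicate case, is the universal type $\forall\beta.\sigma$. After $\alpha$-renaming so that $\beta$ is distinct from $\alpha$, fresh for $\Gamma$, and absent from $\tau_1$, both sides unfold to $\forall(A:\{T:\bot..\top\})(\dots)$, and by the binder congruence it suffices to prove the bodies $=:=$ in the extended environment $\Gamma, A:\{T:\bot..\top\}$. Here I would use that the two substitutions commute, i.e. $\Theta[\alpha\mapsto x_\alpha.T][\beta\mapsto A.T] = \Theta[\beta\mapsto A.T][\alpha\mapsto x_\alpha.T]$ since $\alpha\neq\beta$ and $x_\alpha$ is fresh, so that the body obligation is exactly an instance of the inductive hypothesis for $\sigma$, taken with the enlarged context $\Delta,\beta$, substitution $\Theta[\beta\mapsto A.T]$ and environment $\Gamma, A:\{T:\bot..\top\}$.

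Discharging that instance requires re-establishing its premises. The condition $\Psi(\Delta,\beta; \Theta[\beta\mapsto A.T]; \Gamma, A:\{T:\bot..\top\})$ holds by weakening the three clauses of $\Psi$ together with \textsc{Var} for the new clause on $\beta$, and the hypothesis $\Gamma, A:\{T:\bot..\top\} \vdash a.T =:= \T(\tau_1, \Theta[\beta\mapsto A.T])$ follows by weakening the original one, using $\beta \notin \mathrm{fv}(\tau_1)$ to identify $\T(\tau_1, \Theta[\beta\mapsto A.T])$ with $\T(\tau_1, \Theta)$. The bookkeeping around these commuting substitutions, the freshness side-conditions, and the re-derivation of $\Psi$ and the $a.T$ hypothesis under the binder is where the real work lies; the algebraic assembly in every case is routine once the congruence facts above are in place.
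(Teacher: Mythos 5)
Your proposal is correct and follows essentially the same route as the paper's proof: structural induction on $\tau_2$, with the variable case discharged by the hypothesis on $a.T$, the compound cases assembled via the \textsc{Typ-<:-Typ}, intersection (\autoref{lemma:6}) and \textsc{All-<:-All} congruences, and the universal case handled by re-establishing $\Psi$ under the extended environment before invoking the inductive hypothesis. The only cosmetic difference is that you dispose of the shadowing subcase $\beta = \alpha$ by $\alpha$-renaming, whereas the paper treats it explicitly.
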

\begin{proof}
  The proof goes by induction on the structure of $\tau_2$.
  \begin{itemize}
    \item $\tau_2 = \mathbf{1}$: the equality stems trivially from the \textsc{Refl} rule, because $\T(\mathbf{1}, \ldots) = \texttt{lib.Unit}$, so both sides are syntactically equal.
    
    \item $\tau_2 = \tau_A * \tau_B$: from the inductive hypotheses we have $\Gamma \vdash \T(\tau_A, \Theta[\alpha \mapsto x_{\alpha}.T])[a/x_{\alpha}] =:= \T(\tau_A[\alpha \mapsto \tau_1], \Theta)$ and the same for $\tau_B$.
    
    We can transform both sides of the equation:
    \begin{align*}
    &\T(\tau_A * \tau_B, \Theta[\alpha \mapsto x_{\alpha}.T])[a/x_{\alpha}] = \\
    &\quad\quad (\texttt{lib.Tuple} \wedge \{ T_1 = \T(\tau_A, \Theta[\alpha \mapsto x_{\alpha}.T])[a/x_{\alpha}] \} \wedge \{ T_2 = \T(\tau_B, \Theta[\alpha \mapsto x_{\alpha}.T])[a/x_{\alpha}] \} )
    \\
    &\T((\tau_A * \tau_B)[\alpha \mapsto \tau_1], \Theta) = \\ 
    & \quad\quad \T(\tau_A[\alpha \mapsto \tau_1] * \tau_B[\alpha \mapsto \tau_1], \Theta) = \\
    & \quad\quad \texttt{lib.Tuple} \wedge \{T_1 = \T(\tau_A[\alpha \mapsto \tau_1], \Theta) \} \wedge \{T_2 = \T(\tau_B[\alpha \mapsto \tau_1], \Theta) \}
    \end{align*}
    
    By using the rules \textsc{Typ-<:-Typ}, from the inductive hypotheses we get 
    \[
    \Gamma \vdash \{T_1 = \T(\tau_A[\alpha \mapsto \tau_1], \Theta) \} =:= \{ T_1 = \T(\tau_A, \Theta[\alpha \mapsto x_{\alpha}.T])[a/x_{\alpha}] \}
    \]
    and the same for $\tau_B$, then by \autoref{lemma:6} (applied twice), we get the desired result.
    
    \item The case $\tau_2 = (\vec{\tau}) T$ goes analogously as for the tuples - we get inductive hypotheses for each type in $\vec{\tau}$, we can push the substitution down the structure and get the final equality by applying \autoref{lemma:6} enough times.
    
    \item $\tau_2 = \tau_A \to \tau_B$ goes also analogously as above, but instead we use the \textsc{All-<:-All} rule together with weakening.
    
    \item $\tau_2 = \forall \alpha'. \tau$: 
    We can unfold the lhs:
    \[
    \T(\forall \alpha'. \tau, \Theta[\alpha \mapsto x_{\alpha}.T])[a/x_{\alpha}] = \forall(x_{\alpha'} : \{T: \bot..\top\}) \T(\tau, \Theta[\alpha \mapsto x_{\alpha}.T, \alpha' \mapsto x_{\alpha'}.T])[a/x_{\alpha}]
    \]
     where $x_{\alpha'}$ is some fresh identifier.
    Let's consider two cases:
    \begin{itemize}
      \item $\alpha' = \alpha$:
      In this case the lhs further simplifies\footnote{
	We can get rid of the $[a/x_{\alpha}]$ substitution at the end, because if the $\mapsto x_{\alpha}.T$ disappeared, then from the freshness of $x_{\alpha}$ for $\Theta$, we know that this substitution will act as identity.} to 
      $\forall(x_{\alpha'} : \{T: \bot..\top\}) \T(\tau, \Theta[\alpha \mapsto x_{\alpha'}.T])$.
      
      The rhs can be unfolded\footnote{Again, we are free to remove the substitution $[\alpha \mapsto \tau_1]$ inside of \T, because the name is aliased by the same name in $\forall \alpha$.} to $\T(\forall \alpha. \tau, \Theta)$, and from the definition of \T, this is exactly the same as the transformed lhs.
      
      \item $\alpha' \not= \alpha$:
      
      In the rhs we can push the substitution deeper:
      \[
      \T((\forall \alpha'. \tau)[\alpha \mapsto \tau_1], \Theta) = 
      \T(\forall \alpha'. (\tau[\alpha \mapsto \tau_1]), \Theta) =
      \forall(x_{\alpha'} : \{T: \bot..\top\}) \T(\tau[\alpha \mapsto \tau_1], \Theta[\alpha' \mapsto x_{\alpha'}.T])
      \]
      
      We can reorder the substitution in the lhs to get:
      \[
      \forall(x_{\alpha'} : \{T: \bot..\top\}) \T(\tau, \Theta[\alpha' \mapsto x_{\alpha'}.T, \alpha \mapsto x_{\alpha}.T])[a/x_{\alpha}]
      \]
      
      After applying the \textsc{All-<:-All} rule to get the subtyping in both directions, it suffices to prove
      \[
      \Gamma, x_{\alpha'} : \{T: \bot..\top\} \vdash \T(\tau[\alpha \mapsto \tau_1], \Theta') =:= \T(\tau, \Theta'[\alpha \mapsto x_{\alpha}.T])[a/x_{\alpha}]
      \] 
      where
      \[
      \Theta' \triangleq \Theta[\alpha' \mapsto x_{\alpha'}.T]\]
      Clearly, $\Psi(\Delta, \alpha'; \Theta'; \Gamma, x_{\alpha'} : \{T: \bot..\top\})$ so we get the desired equality by the inductive hypothesis.

    \end{itemize}

    \item Finally, $\tau_2 = \alpha'$. Let's consider two cases:
    \begin{itemize}
      \item $\alpha' = \alpha$: The lhs is $\T(\alpha, \Theta[\alpha \mapsto x_{\alpha}.T])[a/x_{\alpha}] = (x_{\alpha}.T)[a/x_{\alpha}] = a.T$ and the rhs is \mbox{$\T(\alpha[\alpha \mapsto \tau_1], \Theta) = \T(\tau_1, \Theta)$}, so we just use the assumption $\Gamma \vdash a.T =:= \T(\tau_1, \Theta)$.
      
      \item $\alpha' \not= \alpha$: The lhs is $\T(\alpha', \Theta[\alpha \mapsto x_{\alpha}.T])[a/x_{\alpha}] = \Theta(\alpha')[a/x_{\alpha}] = \Theta(\alpha')$ (we can do the last transformation, because we know that $x_{\alpha}$ was fresh for $\Theta$). The rhs is $\T(\alpha'[\alpha \mapsto \tau_1], \Theta) = \Theta(\alpha')$, so both sides are equal: we get the type equality by \textsc{Refl}.
    \end{itemize}
  \end{itemize}
\end{proof}

\begin{lemma}
\label{lemma:equalSubstChange}
For any type $\tau$ and any substitutions $\Theta_1$ and $\Theta_2$, if $\text{fv}(\tau) \subseteq \text{dom}(\Theta_1) = \text{dom}(\Theta_2)$ and for every $x \in \text{dom}(\Theta_1)$, $\Gamma \vdash \Theta_1(x) =:= \Theta_2(x)$, then also $\Gamma \vdash \T(\tau, \Theta_1) =:= \T(\tau, \Theta_2)$.
\end{lemma}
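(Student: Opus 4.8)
The plan is to proceed by structural induction on $\tau$, mirroring the case analysis in the definition of the type encoding $\T$. The base cases are immediate. When $\tau = \alpha$ is a type variable, $\T(\alpha, \Theta_i) = \Theta_i(\alpha)$, so the goal $\Gamma \vdash \Theta_1(\alpha) =:= \Theta_2(\alpha)$ is exactly the hypothesis, since $\alpha \in \mathrm{fv}(\tau) \subseteq \mathrm{dom}(\Theta_i)$. When $\tau = \mathbf{1}$, both sides encode syntactically to $\texttt{lib.Unit}$, so the equality follows from \textsc{Refl}.

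For the compound \emph{non-binding} constructors --- tuples $\tau_1 * \tau_2$, applied GADTs $(\tau_1, \dots, \tau_n) T$, and arrows $\tau_1 \to \tau_2$ --- I would first apply the induction hypothesis to each component $\tau_j$, using $\mathrm{fv}(\tau_j) \subseteq \mathrm{fv}(\tau)$ so that the substitution-agreement hypothesis is inherited unchanged. For tuples and GADTs, whose encodings are intersections of type-member declarations $\{A_j = \T(\tau_j, \Theta_i)\}$ together with a fixed head ($\texttt{lib.Tuple}$ or $\texttt{env.T}$), I would lift each component equality $\Gamma \vdash \T(\tau_j, \Theta_1) =:= \T(\tau_j, \Theta_2)$ through \textsc{Typ-<:-Typ} to an equality of the corresponding single-member declarations, and then assemble the full intersection by repeated use of \autoref{lemma:6}, the common head being equal to itself by \textsc{Refl}. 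For arrows, the encoding is a dependent function type $\forall(\mathit{arg} : \T(\tau_1, \Theta_i))\,\T(\tau_2, \Theta_i)$ whose codomain does not mention $\mathit{arg}$; I would obtain subtyping in both directions via \textsc{All-<:-All}, discharging the covariant-codomain premise in the extended context by weakening the component equality for $\tau_2$.

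The one genuinely delicate case is the \emph{binding} constructor $\tau = \forall \alpha.\, \tau'$, whose encoding is $\forall(x_\alpha : \{T : \bot..\top\})\,\T(\tau', \Theta_i[\alpha \mapsto x_\alpha.T])$ for a fresh $x_\alpha$. To apply the induction hypothesis to $\tau'$, I must verify that the extended substitutions $\Theta_1[\alpha \mapsto x_\alpha.T]$ and $\Theta_2[\alpha \mapsto x_\alpha.T]$ still agree up to $=:=$ on every variable in $\mathrm{fv}(\tau')$: for variables other than $\alpha$ this is inherited from the original hypothesis, and on $\alpha$ itself both substitutions return $x_\alpha.T$, so the required judgment holds by \textsc{Refl}. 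The subtlety is that the induction hypothesis must be invoked in the \emph{enlarged} context $\Gamma, x_\alpha : \{T : \bot..\top\}$, which forces me to first weaken the hypotheses $\Gamma \vdash \Theta_1(x) =:= \Theta_2(x)$ into that context. Having obtained the equality of $\T(\tau', \cdot)$ under the two extended substitutions in $\Gamma, x_\alpha : \{T : \bot..\top\}$, I close both directions with \textsc{All-<:-All}, the domain $\{T : \bot..\top\}$ being identical on both sides. I expect this universal-type case, and specifically the bookkeeping around weakening and the freshness of $x_\alpha$, to be the main obstacle; all remaining cases reduce to routine congruence reasoning built from \autoref{lemma:6} and the structural subtyping rules.
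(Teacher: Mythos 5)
Your proposal is correct and follows essentially the same route as the paper: structural induction on $\tau$, with the variable case discharged directly by the hypothesis and the compound cases handled by congruence reasoning built from \textsc{Typ-<:-Typ}, \textsc{All-<:-All}, weakening, and \autoref{lemma:6}, exactly as the paper does (by reference to the analogous cases of \autoref{lemma:substTranslation}). Your treatment of the binding case $\forall\alpha.\,\tau'$ — extending both substitutions with $\alpha \mapsto x_\alpha.T$ for a fresh $x_\alpha$ and weakening into the enlarged context — matches the paper's handling of that case as well.
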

\begin{proof}
  Easy induction on the structure of $\tau$. The base case is for variables and it goes directly from the assumption $\Gamma \vdash \Theta_1(x) =:= \Theta_2(x)$. All the other cases are trivial or can be handled by constructing equations on terms based on equations on the sub-terms derived from the inductive hypotheses, in the same way as in \autoref{lemma:substTranslation}.
\end{proof}

\begin{lemma}
  \label{lemma:substEqual}
  For any types $\tau_A$ and $\tau_B$, any variable $\alpha$, substitution $\Theta$ and environment $\Gamma$, 
  if~\mbox{$\Gamma \vdash \Theta(\alpha) =:= \T(\tau_A, \Theta)$}
  then $\Gamma \vdash \T(\tau_B, \Theta) =:= \T(\tau_B[\alpha \mapsto \tau_A], \Theta)$.
\end{lemma}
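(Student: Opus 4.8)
The plan is to reduce the statement to \autoref{lemma:equalSubstChange} by way of a purely \emph{syntactic} substitution-commutation identity for the translation. Concretely, I would first prove that
\[ \T(\tau_B[\alpha \mapsto \tau_A], \Theta) = \T(\tau_B, \Theta[\alpha \mapsto \T(\tau_A, \Theta)]) \]
as an equality of \calculus{} types (not merely up to $=:=$), by a routine induction on the structure of $\tau_B$. Every non-binder case is immediate from the compositional definition of $\T$: the variable case splits on whether the variable is $\alpha$ (both sides become $\T(\tau_A,\Theta)$) or some $\beta \neq \alpha$ (both sides become $\Theta(\beta)$), the unit case is trivial, and the tuple, GADT, and arrow cases follow by pushing the substitution through the respective constructors and appealing to the induction hypotheses on the components. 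Crucially, since this first step asserts only an equality of target types, no subtyping reasoning is needed here at all.

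With this identity in hand, the second step invokes \autoref{lemma:equalSubstChange} with $\Theta_1 = \Theta$ and $\Theta_2 = \Theta[\alpha \mapsto \T(\tau_A, \Theta)]$. These two substitutions have the same domain, since $\alpha \in \text{dom}(\Theta)$ (the hypothesis already refers to $\Theta(\alpha)$, and overriding $\alpha$'s image does not change the domain), and they agree up to $=:=$ on every variable: trivially by \textsc{Refl} on each $\beta \neq \alpha$, and on $\alpha$ precisely by the hypothesis $\Gamma \vdash \Theta(\alpha) =:= \T(\tau_A, \Theta)$. Assuming the implicit well-formedness side condition $\text{fv}(\tau_B) \subseteq \text{dom}(\Theta)$ (which holds whenever $\tau_B$ is well-formed under the relevant $\Delta$), \autoref{lemma:equalSubstChange} then yields $\Gamma \vdash \T(\tau_B, \Theta) =:= \T(\tau_B, \Theta[\alpha \mapsto \T(\tau_A, \Theta)])$. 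Rewriting the right-hand side by the syntactic identity from the first step and using transitivity of $=:=$ (itself immediate from \textsc{Trans}) gives the goal $\Gamma \vdash \T(\tau_B, \Theta) =:= \T(\tau_B[\alpha \mapsto \tau_A], \Theta)$. The benefit of this decomposition is that all $=:=$-congruence reasoning — the nontrivial combination of \textsc{Typ-$<:$-Typ}, \textsc{All-$<:$-All}, and \autoref{lemma:6} across the constructors — is offloaded to \autoref{lemma:equalSubstChange} and never reproved.

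I expect the one delicate point to be the polymorphic case $\tau_B = \forall \alpha'.\tau$ of the first step, where binders interact with substitution. Here one must handle shadowing — when $\alpha' = \alpha$ the substitution $[\alpha \mapsto \tau_A]$ does not reach under the binder, so both sides coincide with $\T(\forall\alpha'.\tau,\Theta)$ — and capture avoidance, $\alpha$-renaming $\alpha'$ so that it is fresh for $\tau_A$. Freshness is exactly what makes $\T(\tau_A, \Theta[\alpha' \mapsto A.T]) = \T(\tau_A, \Theta)$ (the introduced binder $A$ does not occur in $\tau_A$), which in turn lets the induction hypothesis under the extended substitution $\Theta[\alpha' \mapsto A.T]$ apply cleanly so that the two bodies match syntactically. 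All remaining structural cases are mechanical, so this single syntactic induction, together with the black-box use of \autoref{lemma:equalSubstChange}, confines the entire new effort to that binder case.
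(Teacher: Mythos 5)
Your proof is correct, but it takes a different route from the paper's. The paper proves \autoref{lemma:substEqual} by a direct induction on $\tau_B$, constructing the $=:=$ judgment case by case with the congruence rules (\textsc{Typ-$<:$-Typ}, \textsc{All-$<:$-All}, \autoref{lemma:6}) ``in the same way as in \autoref{lemma:substTranslation}'' --- i.e., it informally replays the congruence argument for tuples, GADTs, arrows and quantifiers, with the only genuinely new content being the two variable sub-cases and the binder case. You instead factor the statement into (i) a purely syntactic commutation identity $\T(\tau_B[\alpha \mapsto \tau_A], \Theta) = \T(\tau_B, \Theta[\alpha \mapsto \T(\tau_A, \Theta)])$ and (ii) a black-box application of \autoref{lemma:equalSubstChange}, so that all $=:=$-congruence reasoning is done exactly once, in a lemma that is already proved; the only new work is the binder case of a typing-free induction, and you correctly identify shadowing and the freshness fact $\T(\tau_A, \Theta[\alpha' \mapsto x_{\alpha'}.T]) = \T(\tau_A, \Theta)$ as the delicate points there. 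What this buys is a cleaner separation of syntax from semantics and no duplicated congruence cases; what it costs is a mild generalization that the paper does not need: the paper's $\Theta$ officially maps type names only to type projections $p.A$, whereas your $\Theta[\alpha \mapsto \T(\tau_A, \Theta)]$ puts an arbitrary \calculus{} type in the image. You should therefore note explicitly that the definition $\T(\alpha, \Theta) = \Theta(\alpha)$ and the proof of \autoref{lemma:equalSubstChange} go through unchanged for such generalized substitutions (they do, since that proof is pure congruence and never inspects the shape of $\Theta$'s images); the same caveat applies to your appeal to $\mathrm{fv}(\tau_B) \subseteq \mathrm{dom}(\Theta)$, which the paper leaves implicit as well. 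With that remark added, your argument is complete.
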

\begin{proof}
  Easy induction on the structure of $\tau_B$.
  
   The base case for the variable $\alpha$ goes directly from the assumption $\Gamma \vdash \Theta(\alpha) =:= \T(\tau_A, \Theta)$. For any $\alpha' \not= \alpha$, we have $\T(\alpha', \Theta) = \T(\alpha'[\alpha \mapsto \tau_A], \Theta)$, so it goes by \textsf{Refl}.
   
   All the other cases are trivial or can be handled by constructing equations on terms based on equations on the sub-terms derived from the inductive hypotheses, in the same way as in \autoref{lemma:substTranslation}.
   
   For example in the case of tuples, i.e. if $\tau_B = \tau_1 * \tau_2$, from the inductive hypotheses we have 
   \[
   \Gamma \vdash \T(\tau_1, \Theta) =:= \T(\tau_1[\alpha \mapsto \tau_A], \Theta)
   \]
   and same for $\tau_2$, then 
   \[ 
   \T(\tau_1 * \tau_2, \Theta) = \texttt{lib.Tuple} \wedge \{T_1 = \T(\tau_1, \Theta) \} \wedge \{T_2 = \T(\tau_2, \Theta) \}
   \]
    and 
   \[
   \T(\tau_1[\alpha \mapsto \tau_A], \Theta) = \texttt{lib.Tuple} \wedge \{T_1 = \T(\tau_1, \Theta)[\alpha \mapsto \tau_A] \} \wedge \{T_2 = \T(\tau_2, \Theta)[\alpha \mapsto \tau_A] \}
   \]
   and so we construct the necessary equation using \textsc{Typ-<:-Typ} and \autoref{lemma:6}.
   
   One more interesting case that may be worth showing is $\tau_B = \forall \alpha'. \tau$. Let's see two cases:
   \begin{itemize}
     \item if $\alpha = \alpha'$, then $(\forall \alpha. \tau)[\alpha \mapsto \tau_A] = \forall \alpha. \tau$ so the equality follows trivially from \textsc{Refl}.
     \item $\alpha \not=\alpha'$, then 
     \[
     \T(\forall \alpha'. \tau, \Theta) = \forall(x_{\alpha'}: \{ T: \bot..\top \}) \T(\tau, \Theta[\alpha' \mapsto x_{\alpha'}.T])
     \]
      and 
      \[
      \T((\forall \alpha'. \tau)[\alpha \mapsto \tau_A], \Theta) = \T(\forall \alpha'. (\tau[\alpha \mapsto \tau_A]), \Theta) = \forall(x_{\alpha'}: \{ T: \bot..\top \}) \T(\tau[\alpha \mapsto \tau_A], \Theta[\alpha' \mapsto x_{\alpha'}.T])
      \] 
     
     Of course, if $\Gamma \vdash \Theta(\alpha) =:= \T(\tau_A, \Theta)$, then by weakening also
    \[ \Gamma, x_{\alpha'}: \{ T: \bot..\top \} \vdash \Theta(\alpha) =:= \T(\tau_A, \Theta) \]
    and because $\alpha'$ is fresh for $\tau_A$, $\T(\tau_A, \Theta) = \T(\tau_A, \Theta[\alpha' \mapsto x_{\alpha'}.T])$ and $(\Theta[\alpha' \mapsto x_{\alpha'}.T])(\alpha) = \Theta(\alpha)$, so from the inductive hypothesis we get 
    \[
    \Gamma, x_{\alpha'}: \{ T: \bot..\top \} \vdash \T(\tau, \Theta[\alpha' \mapsto x_{\alpha'}.T]) =:= \T(\tau[\alpha \mapsto \tau_A], \Theta[\alpha' \mapsto x_{\alpha'}.T]) 
    \]
    
    Now we use \textsc{All-<:-All} to get the desired equation.
   \end{itemize}
\end{proof}

Using the \autoref{thm:typepreservation} together with \autoref{lemma:sigmaPrime} we finally can prove that the result of \texttt{EncodeProgram} is well-typed.

\section{Soundness Proof of \pdot{}}
\label{sec:soundness-proof-of-pdot}

\citeauthor{rapoportPathDOTFormalizing2019} introduces the seven-level stratified typing rules in the soundness of \pdot{}.
\begin{align*}
  {\tiny
  \text{General } (\vdash) \rightarrow
  \text{Tight } (\turnstileTight)\rightarrow
  {\text{{Introduction-$qp$ }} (\turnstileRepl)} \rightarrow
  \text{Introduction-$pq$ }(\turnstileInvertible) \rightarrow
  {\text{Elim-III }(\turnstilePrecThree)} \rightarrow
  {\text{Elim-II }(\turnstilePrecTwo)} \rightarrow
  \text{Elim-I }(\turnstilePrecOne)}
\end{align*}
These levels in \pdot{} are designed and organized to tackle two key problems in the metatheory: bad bounds and cycles in typing derivations.
We will explain the two problems along with \pdot{}'s solutions to them one by one.

\FloatBarrier
\begin{wide-rules}
\begin{multicols}{5}
\infax{\inert{\tForall x T U}}
\infrule
    {\record{T}}
    {\inert{\tRec x T}}
\infax{\record{\tTypeDec A T T}}
\infax{\record{\tFldDec a {\single q}}}
\infrule
    {{\inert T}}
    {\record{\tFldDec a T}}
\end{multicols}
  \caption{Inert Types}
  \label{fig:inertness}
\end{wide-rules}
\begin{wide-rules}

\begin{multicols}{3}
    
\infax[Top$_{\#}$]
  {\subTightDft T \top}

\infax[Bot$_{\#}$]
  {\subTightDft \bot T}

\infax[Refl$_{\#}$]
  {\subTightDft T T}

\infrule[Trans$_{\#}$]
  {\subTightDft S T
    \andalso
    \subTightDft T U}
  {\subTightDft S U}

\infax[And$_1$-$<:$$_{\#}$]
  {\subTightDft {\tAnd T U} T}

\infax[And$_2$-$<:$$_{\#}$]
  {\subTightDft {\tAnd T U} U}

\infrule[$<:$-And$_{\#}$]
  {\subTightDft S T
    \andalso
    \subTightDft S U}
  {\subTightDft S {\tAnd T U}}
  
\infrule[$<:$-Sel$_{\#}$]
  {{\typPrecThreeDft p {\tTypeDec A S S}}}
  {\subTightDft S {{p}.A}}

\infrule[Sel-$<:$$_{\#}$]
  {{\typPrecThreeDft p {\tTypeDec A S S}}}
  {\subTightDft {{p}.A} S}
  
\infrule[Sngl$_{pq}$-$<:$$_{\#}$]
  {\typPrecThreeDft p {\single q} \andalso \typeablePrecTwo q}
  {\subTightDft T {\repl p q T}}

\infrule[Sngl$_{qp}$-$<:$$_{\#}$]
  {\typPrecThreeDft p {\single q} \andalso \typeablePrecTwo q}
  {\subTightDft T {\repl q p T}}

\infrule[Fld-$<:$-Fld$_{\#}$]
  {\subTightDft T U}
  {\subTightDft {\tFldDec a T} {\tFldDec a U}}

\infrule[Typ-$<:$-Typ$_{\#}$]
  {\subTightDft {S_2} {S_1}
    \andalso
    \subTightDft {T_1} {T_2}}
  {\subTightDft {\tTypeDec A {S_1} {T_1}} {\tTypeDec A {S_2} {T_2}}}

\infrule[All-$<:$-All$_{\#}$]
  {\subTightDft {S_2} {S_1}
    \\
    \sub {\extendG x {S_2}} {T_1} {T_2}}
  {\subTightDft {\tForall x {S_1} {T_1}} {\tForall x {S_2} {T_2}}}

  \end{multicols}
    
  \caption{Tight subtyping rules of \pdot{}\cite{rapoportPathDOTFormalizing2019} and \calculus{}.}
\label{fig:tight-subtyping-rules}

\end{wide-rules}

\paragraph{Bad bounds.}
In DOT calculi, users can introduce arbitary subtyping relations using type members, and they can be unsound.
For example:
\[\lambda(x: \tTypeDec A \top \bot) t\]
When typing the body of the above function, the absurd subtyping judgment $\subDft \top \bot$ is derivable in the environment.
DOT calculi are designed in a way that bad bounds will not break soundness.
Specifically, when evaluating the program, the type member $x.A$ will be instantiated with a concrete type,
serving as the proof of the subtyping relation implied by its bounds.
Therefore, in the above example, the body of the function will never get executed
because it is impossible to supply such a $x$.
Tight typing are proposed to deal with the bad bounds problem in the soundness proof.
Its main difference of it from general typing is that it forbids type members from introducing custom subtyping relationships by modifying the \rn{$<:$-Sel} and \rn{Sel-$<:$} rules.
\begin{multicols}{2}
  \infrule[$<:$-Sel-\#]
  {\typPrecThreeDft {p} \new{\tTypeDec A T T}}
  {\subTightDft T {{p}.A}}

  \infrule[Sel-$<:$-\#]
  {\typPrecThreeDft {p} \new{ \tTypeDec A T T}}
  {\subTightDft {p.A} T}
\end{multicols}
The full formulation of tight subtyping rule is presented in \autoref{fig:tight-subtyping-rules}.
In such a way, in the tight typing level, subtyping relations can always be trusted.
But how can we transform general typing to the tight typing level?
\pdot{} introduces the notion of \emph{inertness}.
We can prove the following lemma, which states that general typing can be transformed to the tight level without loss of expressiveness in an inert environment.
\begin{lemma}[$\vdash$ to $\vdash_{\#}$] \label{thm:general-to-tight}
  If $\G$ is inert and $\typDft t T$, then $\typTightDft t T$.
\end{lemma}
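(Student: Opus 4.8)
The plan is to prove the lemma by simultaneous induction on the derivations of general typing $\typDft t T$ and general subtyping $\subDft S T$, establishing in a single pass both that $\typDft t T$ implies $\typTightDft t T$ and that $\subDft S T$ implies $\subTightDft S T$, under the standing assumption that $\G$ is inert. The two judgments must be treated together, because term typing invokes subtyping through \rn{Sub} while subtyping invokes typing through \rn{$<:$-Sel} and \rn{Sel-$<:$}. For the overwhelming majority of rules the argument is mechanical: each general rule has a tight counterpart of identical shape, so I would apply the induction hypotheses to the premises and reapply the corresponding tight rule (for instance, \rn{Sub} becomes its tight version by converting both its typing and its subtyping premise). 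All of these cases coincide with the \pdot{} development and require no new reasoning, so I would discharge them by appeal to the existing \pdot{} proof (Appendix~\ref{sec:soundness-proof-of-pdot}).

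The first genuinely interesting cases are the two subtyping rules that actually differ between the general and tight systems, namely \rn{$<:$-Sel} and \rn{Sel-$<:$}: general subtyping lets a path whose type member carries arbitrary bounds $\tTypeDec A S T$ relate $S$ and $T$ to $p.A$, whereas the tight rules only permit this for a \emph{precise} typing with \emph{equal} bounds. Here one cannot simply invoke the typing induction hypothesis, since that would yield $\typTightDft p {\tTypeDec A S T}$ rather than the precise, equal-bounds typing required by \rn{$<:$-Sel$_{\#}$}. Instead I would reuse the standard \pdot{} helper that exploits inertness: in an inert $\G$, from $\typDft p {\tTypeDec A S T}$ one can extract some $U$ with $\typPrecThreeDft p {\tTypeDec A U U}$ together with $\subTightDft S U$ and $\subTightDft U T$. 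Combining this with \rn{$<:$-Sel$_{\#}$} (resp.\ \rn{Sel-$<:$$_{\#}$}) and \rn{Trans$_{\#}$} yields the desired tight judgment. This is exactly the step where the bad-bounds problem is neutralised, and it is inherited essentially verbatim from \pdot{}.

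The work specific to \calculus{} is the three inversion rules \rn{Fld-$<:$-Fld-Inv}, \rn{Typ-$<:$-Typ-Inv$_1$} and \rn{Typ-$<:$-Typ-Inv$_2$}, which have no analogue in \pdot{}. Consider \rn{Fld-$<:$-Fld-Inv}, with premises $\uniqueFlow U {\tFldDec a {T_1}}$ and $\subDft U {\tFldDec a {T_2}}$. The membership premise is a purely structural judgment shared by all levels, so it carries over unchanged; the subtyping induction hypothesis turns the second premise into $\subTightDft U {\tFldDec a {T_2}}$. At this point Field Inversion (\autoref{thm:field-inversion}) applies directly and yields $\subTightDft {T_1} {T_2}$, which is precisely the tight form of the rule's conclusion. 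The two \rn{Typ-$<:$-Typ-Inv} cases are handled identically using Type Member Inversion (\autoref{thm:type-member-inversion}).

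The main obstacle is therefore not this lemma in isolation but the machinery it relies on: the inversion cases only close once \autoref{thm:field-inversion} and \autoref{thm:type-member-inversion} are available, and those lemmas are in turn what force tight subtyping to be routed through the auxiliary invertible-subtyping layer (\autoref{thm:tight-to-invertible} and \autoref{thm:invertible-to-tight}) in order to eliminate \rn{Trans} and make an inductive inversion argument possible. I would therefore establish the two inversion lemmas first, entirely within the tight/invertible subtyping fragment, and verify that they do not depend on general typing; this guarantees they can be used here as black boxes without creating a circular induction. With that ordering settled, the present lemma reduces to the routine \pdot{}-style cases, the inert-bounds treatment of the selection rules, and the two short inversion-rule cases above.
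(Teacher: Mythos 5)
Your proposal matches the paper's approach: the authors likewise keep the inner typing levels unchanged, inherit all the pDOT cases (including the inertness-based treatment of \rn{$<:$-Sel} and \rn{Sel-$<:$}), and discharge the new inversion-rule cases by showing those rules admissible in tight subtyping via Field Inversion and Type Member Inversion, which are themselves proved through the invertible-subtyping layer. Your observation about establishing the inversion lemmas first, within the tight/invertible fragment, to avoid circularity is exactly the ordering the paper relies on.
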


\paragraph{Cycles in typing derivation.}
\pdot{} typing rules can result in cycles in typing derivation.
For example, we can apply \rn{Rec-I} and \rn{Rec-E} repeatedly in the derivation tree.
And we can similarly produce cycles with the \rn{Fld-I}, \rn{Fld-E}, \rn{Sngl$_{pq}$-<:} and \rn{Sngl$_{qp}$-<:} rules.
Such cycles hinders the inductive reasoning of typing judgments.
\pdot{} deals with the cycles by seperating each pair of the symmetric typing rules into different levels.
For instance, the \rn{Sngl$_{pq}$-<:} and \rn{Sngl$_{qp}$-<:} rules are inlined in
Introduction-$qp$ ($\turnstileRepl$)
Introduction-$pq$ ($\turnstileInvertible$) levels respectively.
Starting from the Introduction-$qp$ level, \pdot{} separates the reasoning about values and paths.
Therefore, we have two sets of typing rules, one for the values and another for the paths on these levels.
Transformation lemmas are proven for them respectively.

\end{document}